
\documentclass[12pt, reqno]{amsart}

\IfFileExists{mymtpro2.sty}{%
  \usepackage[subscriptcorrection]{mymtpro2}
}{}

\usepackage{a4, upgreek, amssymb, amsmath}

\usepackage{appendix}

%

\marginparwidth1.5cm
\marginparsep.5cm


%


\newtheorem{theorem}{Theorem}[section]
\newtheorem{lemma}{Lemma}[section]
\newtheorem{corollary}{Corollary}[section]
\newtheorem{prop}{Proposition}[section]
\theoremstyle{definition}
\newtheorem{remark}[theorem]{Remark}



\newcommand{\labelnummer}{\mbox{\normalfont (\roman{numcount})}}%

\makeatletter

  {\let\curlabelspeicher\@currentlabel%
    \begin{list}{\labelnummer}%
      {\usecounter{numcount}\leftmargin0pt%
        \topsep0.5ex\partopsep2ex\parsep0pt\itemsep0ex\@plus1\p@%
        \labelwidth2.5em\itemindent3.5em\labelsep1em%
      }%
    \let\saveitem\item%
    \def\item{\saveitem%
      \def\@currentlabel{{\upshape\curlabelspeicher}$\,$\labelnummer}}%
    \let\savelabel\label%
    \def\label##1{\savelabel{##1}%
      \@bsphack%
        \ifmmode\else%
          \protected@write\@auxout{}%
          {\string\newlabel{##1item}{{\labelnummer}{\thepage}}}%
        \fi%
      \@esphack%
    }%
  }{\end{list}}%

\renewcommand{\appendix}{\def\thesection{\textsc{Appendix}}}


 \let\leq\le
 \let\geq\ge

\let\Re\undefined \let\Im\undefined
\DeclareMathOperator{\Re}{Re}
\DeclareMathOperator{\Im}{Im}

\DeclareMathOperator{\tr}{tr\kern1pt}

%

\makeatletter

\newif\ifper\pertrue
\def\per{.}

\def\bti{\@ifnextchar[\bbti\bbbti}
\def\bbti[#1]#2{#2, #1.}
\def\bbbti#1{#1.}

\def\z{\@ifnextchar[\zz\zzz}
\def\zz[#1]#2#3#4#5{\perfalse\emph{#2} \textbf{#3}, #4 (#5) [#1]}
\def\zzz#1#2#3#4{\emph{#1} \textbf{#2}, #3 (#4)\ifper\per\fi\pertrue}

\def\pub{\@ifstar\pubstar\pubnostar}
\def\pubnostar{\@ifnextchar[\@@pubnostar\@pubnostar}
\def\@@pubnostar[#1]#2#3#4{#2, #3, #4, #1\ifper\per\fi\pertrue}
\def\@pubnostar#1#2#3{#1, #2, #3\ifper\per\fi\pertrue}
\def\pubstar[#1]#2#3#4{\perfalse #2, #3, #4 [#1]\pertrue}

\makeatother

 \sloppy

\topmargin -1cm
\textheight21.4cm
\textwidth15.7cm
\oddsidemargin 0.5cm
\evensidemargin 0.5cm
\parindent1.0cm


\newcommand{\bel}{\begin{equation} \label}
\newcommand{\ee}{\end{equation}}

\newcommand{\eps}{\epsilon}

\def\beq{\begin{equation}}
\def\eeq{\end{equation}}
\newcommand{\bea}{\begin{eqnarray}}
\newcommand{\eea}{\end{eqnarray}}
\newcommand{\beas}{\begin{eqnarray*}}
\newcommand{\eeas}{\end{eqnarray*}}

{


\newcommand{\hn}{\widetilde{H}_L^N}
\newcommand{\hm}{\widetilde{H}_L^M}
\renewcommand{\P}{\mathbb{P}}


\newcommand{\R}{\mathbb{R}}

\newcommand{\Z}{\mathbb{Z}}

\newcommand{\N}{\mathbb{N}}

\newcommand{\C}{\mathbb{C}}
\newcommand{\E}{\mathbb{E}}
\newcommand{\G}{\mathbb{G}}

\begin{document}

\title[Local eigenvalues statistics for random band matrices]{The density of states and local eigenvalue statistics for random band matrices of fixed width}

\author[B.\ Brodie]{Benjamin Brodie}
\address{Department of Mathematics,
    University of Kentucky,
    Lexington, Kentucky  40506-0027, USA}
\email{bbr239@g.uky.edu}

\author[P.\ D.\ Hislop]{Peter D.\ Hislop}
\address{Department of Mathematics,
    University of Kentucky,
    Lexington, Kentucky  40506-0027, USA}
\email{peter.hislop@uky.edu}


\begin{abstract}
We prove that the local eigenvalue statistics for $d=1$ random band matrices with fixed bandwidth and, for example, Gaussian entries, is given by a Poisson point process and we identify the intensity of the process.
The proof relies on an extension of the localization bounds of Schenker \cite{schenker} and the Wegner and Minami estimates.
These two estimates are proved using averaging over the diagonal disorder. The new component is a proof of the uniform convergence and the smoothness  of the density of states function. The limit function, known
to be the semicircle law with a band-width dependent error \cite{bmp,dps,dl,mpk}, is identified as the intensity of the limiting Poisson point process. The proof of these results for the density of states relies on a new result that simplifies and extends some of the ideas used by Dolai, Krishna, and Mallick \cite{dkm}. These authors proved regularity properties of the density of states for random Schr\"odinger operators (lattice and continuum) in the localization regime. The proof presented here applies to the random Schr\"odinger operators on a class of infinite graphs treated by in \cite{dkm} and extends the results of \cite{dkm} to probability measures with unbounded support.  The method also applies to fixed bandwidth RBM for $d=2,3$ provided certain localization bounds are known. 
\end{abstract}

\maketitle \thispagestyle{empty}

\tableofcontents


\section{Statement of the problem and results}\label{sec:introduction}
\setcounter{equation}{0}

Random band matrices (RBM) of size $2N+1$ and width $2L + 1$, for an integer $L$ with $0 \leq L \ll N$, provide an interesting model that exhibits a localization--delocalization transition when the width $2L+1$ increases as the size $N$ increases: If $L \sim N^\alpha$, then for $0 \leq \alpha < \frac{1}{2}$, the model is expected to be in the localized regime, whereas for $\frac{1}{2} < \alpha  \leq 1$, the model is expected to be delocalized. RBM of constant  band width $2L+1$ are in the localization phase. The eigenvectors are localized and the spectrum of the infinite-size matrix is pure point almost surely. In many ways, the constant width model has a large $N$ limit that resembles a random Schr\"odinger operator with Anderson-type potential on the line $\Z$. In this paper, we complete this analogy by proving that the local eigenvalue statistics at any fixed  energy $E_0$ is given by a Poisson point process with intensity measure $n_L^\infty(E_0) ~ds$. The function $n_L^\infty$ is the infinite $N$ density of states that is known to be given by the semicircle law up to $\mathcal{O}(L^{-1})$ corrections \cite{bmp} (see Remark \ref{remark:SclIds1}). In addition, we prove smoothness of the limiting density of states function $n_L^\infty(E)$ matching the regularity of the probability distribution.

Let $H_L^N$ be a $(2N+1)\times (2N+1)$ real, symmetric random band matrix with band width $2L + 1$ and $0 \leq L \ll N$. We define $H_L^N$ through its matrix elements:
\bea\label{eq:RBMdefn1}
\langle e_i,  H^N_L e_j \rangle & = & \frac{1}{\sqrt{2L+1}}\left\{\begin{array}{ccc}
v_{ij} & {\rm if} & |i-j| \leq L \\
0  & {\rm if} & |i-j| > L\end{array}\right. ,
\eea
with
$$
-N \leq i,j \leq N .
$$
The real random variables $v_{ij}= v_{ji}$ within the band are independent and identically distributed up to symmetry. Additional technical assumptions on the probability measure are given in Assumption 1 in section \ref{subsec:mainResult1}. 
These include the most common case of a Gaussian distribution. 

Much is known about RBM with power-law width $2L+1 = N^\alpha$, for $0 \leq \alpha \leq 1$. On the delocalization side, for $\frac{1}{2} < \alpha \leq 1$, in a series of three papers, Bourgade, Yin, Yang, and Yau \cite{byy1, byyy1, yy1} proved that if $\frac{3}{4} < \alpha \leq 1$, the eigenvectors of $H_L^N$ are extended with good probability, the limiting DOS is given by the semi-circle law, and that the $k$-point eigenvalue correlation functions for $H_L^N$ converge to those for the Gaussian orthogonal ensemble (GOE). Much less is known on the localization side $0 \leq \alpha < \frac{1}{2}$. Schenker \cite{schenker} proved that for Gaussian random variables and $0 \leq \alpha < \frac{1}{8}$, finite $N$ localization bounds hold for small moments of matrix elements of the resolvent in real energy intervals. A similar bound holds for more general random variables but the range of the exponent isn't specified. For the case of Gaussian random variables, the range of $\alpha$ was extended in \cite{psss} to $0 \leq \alpha < \frac{1}{7}$.

 Localization bounds, together with Wegner and Minami estimates, are among the main ingredients in the proof of the $N = \infty$ characterization of local eigenvalue statistics as a Poisson point process.
These three ingredients, localization bounds, the Wegner and Minami estimates,  necessary for a proof of the Poisson nature of the local eigenvalue statistics for RBM \eqref{eq:RBMdefn1}, have been known for some time now: localization bounds \cite{apsss,schenker} (although only in bounded energy intervals), the Wegner estimate \cite{psss}, and the Minami estimate \cite{psss}. What was missing in the proof of Poisson statistics was control over the density of states. In this paper, we prove smoothness of the density of states (depending on the smoothness of the probability measure) and uniform convergence of the local density of states  for the  endpoint case $\alpha = 0$ and for absolutely continuous probability measures with some regularity and that may have noncompact support. This corresponds to random band matrices with fixed bandwidth.  Our methods also require that we extend the localization bounds of Schenker \cite{schenker} to uniform bounds for all energies. We provide a simple proof of the Wegner and Minami estimates depending only on the diagonal randomness. This is sufficient for 
fixed bandwidth RBM but the constant in the bounds depends on the bandwidth $2L+1$. Controlling this constant is one of the advances of \cite{psss}. As a result, we also complete the proof of Poisson eigenvalue statistics for fixed bandwidth RBM. 


\subsection{The main results for fixed-width RBM}\label{subsec:mainResult1}

We summarize the main results of this paper. We begin with Assumption 1 on the random variables. 


\vspace{.1in}

\noindent
\textbf{Assumption 1:} \textit{The real random variables $v_{ij}$ in \eqref{eq:RBMdefn1} are independent, identically distributed (iid)
up to symmetry $v_{ij} = v_{ji}$. The common probability measure is absolutely continuous with respect to Lebesgue measure with a density $\rho \in L^\infty (\R)$ having support $I_\rho \subset \R$ that may be bounded or unbounded. There is an integer $k \geq 2$ so that the  probability density $\rho \geq 0$ satisfies 
\begin{enumerate}
\item regularity: $\rho \in C^k_\infty (\R)$, the set of $C^k$-functions with $\rho^{(m)} (x) \rightarrow 0$ as $|x| \rightarrow \infty$, for $0 \leq m \leq k$;
\item  integrability: the derivatives $\rho^{(m)} \in L^1 (\R)$, for $m=0,1, \ldots, k$;
\item finite moments: $\langle x \rangle^m \rho(x) \in L^1 (\R)$,  for $m=0, 1, \ldots, k+1$;
\item the Fourier transform $\widehat{\rho}$ satisfies the bound $\langle \xi \rangle^m \widehat{\rho}(\xi) \in L^\infty (\R)$,   for $m=0,1, \ldots, k$.
\end{enumerate}
}

All of the results of the paper hold under Assumption 1. Certain results hold under weaker conditions, for example, $\rho \in C^1$ suffices for the existence of the DOSf.  Conditions (2)-(4) are 
needed in the proof of the uniform convergence of the local integrated density of states \eqref{eq:lids1}. 
Assumption 1 allows probability densities $\rho$ with unbounded supports. 
The main case of interest is the Gaussian distribution for which
\beq\label{eq:gauss1}
\rho(x) = \frac{1}{\sqrt{2 \pi}}  e^{- \frac{1}{2} x^2} ,
\eeq 
and that satisfies these conditions with $k = \infty$. 

Let $\{E_L^N(j)\}_{j=-N}^N$ denote the set of the $2N+1$ eigenvalues of $H_L^N$. To study the local eigenvalue statistics for $H_L^N$ around $E_0\in\R$, we define the re-scaled eigenvalues
\beq\label{eq:rescaledEV1intro}
\widetilde{E}_L^N(j):=(2N+1)\left({E}_L^N(j) - E_0\right).
\eeq
The {\it eigenvalue point process $\xi_{N,L}^\omega$ for $H_L^N$ centered at $E_0$} is a random point measure supported on the re-scaled eigenvalues:
\beq\label{eq:localPP1}
\xi_{N,L}^\omega (s) ~ d s := \sum_{j=-N}^N \delta ( {\widetilde{E}}_L^N(j) - s) ~ d s.
\eeq

The density of states function $n_L^\infty (E)$ is defined for RBM in section \ref{sec:dos1}.
From the local eigenvalue counting function we form the  
  {\it local integrated density of states} ($\ell$IDS),  the expectation of the normalized eigenvalue counting function:
\beq\label{eq:lids1intro}
 N_L^N(E) = :  \frac{1}{2N+1} \, \E \{  \#\{ j \, : \, E_L^N(j)\leq E \} \} .
\eeq
The IDS is the limit of the $\ell$IDS as $N \rightarrow \infty$:
\beq\label{eq:ids1}
 N_L ^\infty(E) = : \lim_{N \rightarrow \infty}  \frac{1}{2N+1} \, \E \{  \#\{ j \, : \, E_L^N(j)\leq E \} \} .
\eeq
We will prove that this function is differentiable (provided the density $\rho$ is also) and the derivative of the $\ell$IDS is the {\it local density of states function} ($\ell$DOSf):
\beq\label{eq:localDosf11}
n_L ^N(E) =: \frac{d}{dE} N_L^N (E).
\eeq
We will prove that this converges uniformly to a function $n_L^\infty(E)$ that is the DOSf for $H_L^\infty$.

\begin{theorem}\label{thm:RBMmain1}
Let $H_L^N$ be a random band matrix with fixed-width $2L+1$ as in \eqref{eq:RBMdefn1}. The random variables $v_{ij}$, for $-N \leq i < j \leq N$ satisfy Assumption 1. The density of states function $n_L^\infty (E) \in C^{k-1}(\R)$. For any $E \in \R$, the local eigenvalue statistics, defined as the point process obtained from the weak limit  of \eqref{eq:localPP1},  is a Poisson point process with intensity measure $n_L^\infty (E) ds$. 
\end{theorem}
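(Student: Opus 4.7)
The plan is to follow the Minami--Molchanov strategy for deriving Poisson statistics from localization plus Wegner and Minami estimates, adapted to the band-matrix setting. Three key inputs are in place by the point in the paper where Theorem \ref{thm:RBMmain1} is proved: (a) uniform convergence and $C^{k-1}$ smoothness of the density of states $n_L^\infty$, (b) Wegner and Minami bounds uniform in $N$ and valid at all energies, obtained via averaging over the diagonal disorder, and (c) an extension of Schenker's fractional-moment localization bounds to all real energies. The plan is to assemble these into a weak-limit proof for $\xi_{N,L}^\omega$.

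First I would decouple the model on an intermediate scale. Partition $\{-N,\ldots,N\}$ into $K \asymp N/M$ consecutive blocks $\Lambda_1,\ldots,\Lambda_K$ of common length $M$, with $1 \ll M = M(N) \ll N$, and let $\widehat{H}_L^{N,M}$ be the block-diagonal restriction of $H_L^N$ obtained by zeroing out the band entries that bridge two different $\Lambda_k$. Because the bandwidth $2L+1$ is fixed, the perturbation $H_L^N - \widehat{H}_L^{N,M}$ is a sum of $K-1$ independent rank-$O(L)$ interface corrections, and the blocks are independent copies of $\widetilde{H}_L^M$. Fractional-moment localization, uniform in energy, implies that eigenfunctions with eigenvalues in a bounded neighborhood of $E_0$ are exponentially localized inside a single block, so that (via min--max or Helffer--Sj\"ostrand trace identities) removing the interfaces moves each eigenvalue by at most $e^{-cM}$ with overwhelming probability. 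Provided $M \to \infty$ fast enough that $(2N+1)e^{-cM} \to 0$, the rescaled point processes attached to $H_L^N$ and to $\widehat{H}_L^{N,M}$ share the same weak limit.

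Second, once reduced to a superposition of $K$ independent copies of the rescaled block process for $\widetilde{H}_L^M$, the Grigelionis--Kallenberg criterion for triangular arrays of independent point processes yields Poisson convergence once three conditions hold for each bounded Borel set $J \subset \R$. Uniform asymptotic negligibility, $\max_k \Pp\{\widetilde{H}_L^M \text{ has an eigenvalue in } E_0 + J/(2N+1)\} \to 0$, follows from the Wegner bound, which gives $O(M|J|/N)$ per block. Sparseness, $\sum_k \Pp\{\widetilde{H}_L^M \text{ has two such eigenvalues}\} \to 0$, follows from the Minami bound, which produces $O((M|J|/N)^2)$ per block, hence $O(M|J|^2/N)$ after summation over the $K$ blocks. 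Intensity convergence,
\[
\sum_{k=1}^{K} \E\bigl\{\#\{\text{eigenvalues of }\widetilde{H}_L^M\text{ in }E_0 + J/(2N+1)\}\bigr\} \;\longrightarrow\; n_L^\infty(E_0)\,|J|,
\]
follows by rewriting the left-hand side, for $J=[a,b]$, as $(2N+1)\bigl[N_L^M(E_0 + b/(2N+1)) - N_L^M(E_0 + a/(2N+1))\bigr]$ and invoking the uniform convergence $N_L^M \to N_L^\infty$ together with the $C^1$ smoothness of $n_L^\infty$ guaranteed by Assumption 1 and the DOS results of the paper.

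The main obstacle is the three-way quantitative balancing of the intermediate scale $M = M(N)$. It must be large enough that fractional-moment localization produces decay $e^{-cM}$ dominating the rescaling $1/(2N+1)$, and that the block-scale local IDS $N_L^M$ is uniformly close to its $N\to\infty$ limit on intervals of length $|J|/(2N+1)$ centered near $E_0$; it must simultaneously be small enough that $K = (2N+1)/M \to \infty$ and that the Wegner and Minami constants (which depend on $L$ but not on the block size $M$ in the diagonal-averaging proof) still allow the three Grigelionis--Kallenberg criteria to be verified. Checking that the diagonal-averaging proofs of the Wegner and Minami estimates transfer without loss to the partitioned operator $\widehat{H}_L^{N,M}$---each block being itself a fixed-width RBM on a shorter chain---and that the DOS uniform-convergence estimates can be localized to windows of length $|J|/(2N+1)$, is the technical heart of the argument.
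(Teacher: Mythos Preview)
Your proposal follows essentially the same Minami strategy as the paper: decouple into independent blocks of intermediate size $N^\alpha$, then verify the three Grigelionis--Kallenberg conditions via Wegner, Minami, and DOS convergence.

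One point to sharpen in your intensity step: uniform convergence of $N_L^M \to N_L^\infty$ together with $C^1$ smoothness of the limit is \emph{not} by itself sufficient to control a difference quotient at the shrinking scale $|J|/(2N+1)$; you need uniform convergence of the densities $n_L^M \to n_L^\infty$, which is precisely what Theorem~\ref{thm:dosLimit1} supplies. The paper organizes this step slightly differently: after the decoupling and Minami reductions it returns to the \emph{full} process $\xi_{N,L}^\omega$ and tests it against Poisson kernels $\varphi_z(u)=\Im(u-z)^{-1}$, obtaining
\[
\E\{\xi_{N,L}^\omega(\varphi_z)\}=\int_\R \frac{1}{u^2+1}\,n_L^N\bigl(E_0+u\,\Im z(N)+\Re z(N)\bigr)\,du,
\]
and then applies dominated convergence plus the uniform convergence and continuity of $n_L^N\to n_L^\infty$ to get $\pi\,n_L^\infty(E_0)=\|\varphi_z\|_1\,n_L^\infty(E_0)$; passage to characteristic functions of intervals is by density. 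Your direct difference-quotient route also works once Theorem~\ref{thm:dosLimit1} is invoked correctly, and is arguably more transparent; the paper's route has the advantage that the dominating function $(u^2+1)^{-1}$ is fixed and no balancing of scales is needed in the intensity computation itself.
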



\subsection{Applications to random Schr\"odinger operators on infinite graphs}\label{subsec:RSOappl}

A theorem similar to Theorem \ref{thm:RBMmain1} applies to the local eigenvalue statistics and the DOSf for discrete random Schr\"odinger operators on a wide variety of infinite graphs $\G$, including the lattice $\Z^d$, as described in \cite{dkm} for energies in the complete localization regime $\Sigma^{\rm CL}$. The family of graphs $\G$ are infinite graphs with a metric $d_\G$ and 
with the property that for any $k_0 \in \G$, if $\Lambda_L := \{ k \in \G ~|~ d_\G ( k_0, k ) \leq L \}$, then $|\Lambda_L | = \mathcal{O}(L^\alpha)$, for some $\alpha > 0$. 

The regularity part of the following theorem was proven in \cite[Theorem 3.4]{dkm} for compactly-supported probability measures with an additional technical assumption that the $k^{th}$-derivative of the probability density is H\"older continuous due to a constraint in \cite[Theorem 2.2]{dkm}. 
The proof of the Poisson statistics  for lattices $\Z^d$ was given in \cite{minami}. The techniques developed in sections \ref{sec:fracMoment1} and \ref{sec:localBound1} applied to RSO yield the following theorem removing the technical constraint on the probability density and the necessity that the probability density have compact support. We also remark that the random perturbations need not be rank one but may be of uniform finite rank as treated in \cite{dkm} and \cite[chapters 6-7]{brodie1} on the Wegner $N$-orbital model. We summarize this in the following theorem stated for the rank one case.   

\begin{theorem}\label{thm:RSOmain1}
Let $H_\omega^L$ be the restriction of a random Schr\"odinger operator $H_\omega$ on $\ell^2 (\G)$ to the finite set $\Lambda_L \subset \G$ for which the random variables $\{ \omega_j ~|~ j \in \Z^d \}$ satisfy Assumption 1.  
The local integrated density of states $N_L(E)$ converges uniformly to the IDS $N(E)$ for $E \in \Sigma^{\rm CL}$. 
The density of states function $n (E) \in C^{k-1}(\R)$.  For any $E \in \Sigma^{CL}$, the local eigenvalue statistics, defined as the point process obtained from the weak limit  of \eqref{eq:localPP1},  is a Poisson point process with intensity measure $n (E) ds$. 
\end{theorem}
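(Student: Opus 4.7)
The plan is to mirror the three-ingredient strategy used for Theorem~\ref{thm:RBMmain1}, substituting the graph $\G$ for the RBM structure. The three standard components needed to conclude Poisson statistics at an energy $E$ are: (a) complete localization at $E$, (b) a Wegner estimate on boxes $\Lambda_L$, and (c) a Minami estimate on boxes $\Lambda_L$, together with (d) uniform convergence and sufficient regularity of the density of states. For random Schr\"odinger operators on the graphs $\G$ considered in \cite{dkm}, part (a) holds throughout $\Sigma^{\rm CL}$ by the Aizenman--Molchanov fractional-moment bounds, and the rank-one perturbation proofs of Wegner and Minami used in this paper for the diagonal disorder of RBM carry over verbatim to (b) and (c) under Assumption~1, since the argument only uses iid absolutely continuous diagonal entries with density $\rho \in L^\infty(\R)$. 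Thus the work is concentrated on (d), which is where Assumption~1 is needed to go beyond \cite{dkm}.

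For regularity and uniform convergence of the density of states, I would adapt the diagonal-disorder averaging argument of Sections~\ref{sec:fracMoment1}--\ref{sec:localBound1}. For each test function $\varphi$, write
\[
n_L(\varphi) \;=\; \frac{1}{|\Lambda_L|}\,\E\bigl[\tr \varphi(H^L_\omega)\bigr] \;=\; \int \varphi(E)\, n_L(E)\,dE,
\]
and for each vertex $j \in \Lambda_L$ perform the change of variables $\omega_j \mapsto \omega_j + t$, which moves the $t$-dependence from the operator onto the density $\rho$. Integrating by parts $k-1$ times in $t$ transfers derivatives from $\varphi$ onto $\rho$ and yields a pointwise representation for $n_L^{(k-1)}(E)$ as a spectral average paired with $\rho^{(m)}$, $0\le m \le k$. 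The integrability condition~(2) makes the integrands $L^1$, the moment condition~(3) absorbs the polynomial growth that arises from the rank-one spectral averaging, and the Fourier decay condition~(4) precisely replaces the compact-support cutoffs used in \cite{dkm} by controlling tail contributions at infinity. The resulting bounds are uniform in $L$, giving $n_L \in C^{k-1}(\R)$ with $L$-uniform $C^{k-1}$-norms.

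Uniform convergence of $N_L$ to the IDS $N$ on $\Sigma^{\rm CL}$ is then a consequence of the $L$-uniform Lipschitz bound on $N_L$ (from $\|n_L\|_\infty \lesssim 1$): pointwise ergodic convergence combined with the Arzel\`a--Ascoli argument of Section~\ref{sec:dos1} upgrades to uniform convergence on compacts, and the $C^{k-1}$ bounds pass to the limit to give $n \in C^{k-1}(\R)$.

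With (a)--(d) in hand, Poisson statistics at $E \in \Sigma^{\rm CL}$ follow from the Minami--Molchanov scheme as used in this paper for Theorem~\ref{thm:RBMmain1}: localization plus Wegner and Minami yield convergence of the Laplace functional of $\xi_L^\omega$ to that of a Poisson point process, and the uniform convergence of $n_L$ to $n$ identifies the intensity as $n(E)\,ds$. The main obstacle is step~(d) in the unbounded-support regime: one must verify that the tail contributions in the diagonal-averaging expressions are controlled uniformly in $L$ and in $E$, and this is precisely where items~(3) and~(4) of Assumption~1 are tailored to close the gap left by the compact-support hypothesis of \cite{dkm}.
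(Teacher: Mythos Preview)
Your outline for parts (a)--(c) is fine and matches the paper's strategy. The gap is in your step (d), the regularity and uniform convergence of the density of states.

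The mechanism you describe --- shift $\omega_j \mapsto \omega_j + t$ and integrate by parts $k-1$ times to move derivatives onto $\rho$ --- is exactly the argument of Proposition~\ref{prop:smoothLIDS1}, and the paper notes immediately after that proof that it does \emph{not} produce $L$-uniform bounds: differentiating the product $\prod_j \rho(\tilde v_{jj}+E)$ generates a sum of $|\Lambda_L|$ terms per derivative, so the resulting bound on $n_L^{(m)}$ grows polynomially in $|\Lambda_L|$. Consequently your claimed ``$L$-uniform $C^{k-1}$-norms'' do not follow from this computation, and without them the Arzel\`a--Ascoli upgrade from pointwise to uniform convergence of $n_L$ (as opposed to $N_L$) fails. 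That in turn breaks the intensity identification in the Poisson step, which in Proposition~\ref{prop:Intensity1} requires precisely the uniform convergence $n_L \to n$.

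The paper's route to (d) is different and essentially uses localization a second time. One writes the infinite-volume Green's function as a telescoping sum over box sizes \eqref{eq:telescoping}, and for each increment applies the Schur complement formula at the site $j=0$ to isolate $v_{00}$, then the integral representation of Lemma~\ref{lemma:ResolvInt1} to produce the characteristic function $\widehat\rho$. Condition~(4) of Assumption~1 is used here --- not to control tails of $\rho$, but to make the $\lambda$-integral in \eqref{eq:differenceSchur4} and \eqref{eq:smooth6} convergent after the $\lambda^s$ factor from Lemma~\ref{lemma:duhamel2}. The remaining expectation is then a fractional moment of a difference of Green's functions of the reduced operators, which Theorem~\ref{thm:fractionalMoment2} and Proposition~\ref{prop:fracMomentreduced1} bound by $e^{-\gamma M}$ using the localization hypothesis on $\Sigma^{\rm CL}$. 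Summing the telescoping series gives the uniform derivative bound, and the same exponential estimate (Theorems~\ref{thm:fractionalMoment1} and \ref{thm:fractionalMoment2}) yields the uniform convergence of $n_L$ directly, as in Theorem~\ref{thm:dosLimit1}. Your proposal references Sections~\ref{sec:fracMoment1}--\ref{sec:localBound1} but then substitutes a different and weaker argument; to close the gap you need to actually run the Schur--Fourier--localization machinery of those sections on the graph model.
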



\subsection{Applications to fixed-width RBM in dimensions $d=2,3$}\label{subsec:highDimRBM}

As discussed above, for RBM in $d=1$, the critical bandwidth is conjectured to be $N^{\frac{1}{2}}$: RBM with bandwidths growing like $N^\alpha$ for $\frac{1}{2} < \alpha \leq 1$ are expected to be delocalized, whereas those with $0 \leq \alpha < \frac{1}{2}$ are expected to be localized. Consequently, fixed bandwidth RBM should (and do) exhibit localization at all energies.  A similar result should hold for RBM in $d=2$ as the conjectured critical bandwidth is $(\log N)^{\frac{1}{2}}$. However, for $d=3$, the conjectured critical bandwidth is $\mathcal{O}(1)$ (see, for example, \cite{bourgade}). The techniques developed in this paper apply to higher-dimensional fixed bandwidth RBM at energies in the localization regime if it is nonempty. For $d=2$, the conjectured critical bandwidth implies that localization holds for fixed bandwidth RBM. Provided a localization estimate of the type described in Theorem \ref{thm:fractMoment1} holds for these energies, the methods of this paper prove that the
limiting density of states is smooth (the degree of regularity depending on the regularity of the density of the probability measure) and that local eigenvalue statistics are Poisson point processes with the intensity given by the associated density of states. It is not clear if there is any localization regime for RBM in $d=3$, but if there is, and similar localization bounds hold for energies in that region, then the methods of this paper provide the same results. 

The DOS for fixed bandwidth RBM with Gaussian distributions in $d=3$ was studied by Disertori, Pinson, and Spencer \cite{dps} and for $d=2$ by Disertori and Lager \cite{dl}. These authors proved that the limiting DOS at all energies  in the bulk $(-2.2)$  is given by the semicircle distribution up to terms $\mathcal{O}(L^{-2})$. The method does not rely upon localization and is based on the supersymmetric representation of the finite region Green's function.


\subsection{Contents of the paper}\label{subsec:contents1}

The basic bounds for finite $N$ matrices are proven in section \ref{sec:fundamentals1} based on a spectral averaging theorem using only the diagonal randomness and the Schur complement formula. Spectral averaging is used to derive a Wegner and Minami estimate. The localization bound of Schenker \cite{schenker} is extended to complex energies using a high-energy Aizenman-Molchanov bound for RBM and subharmonicity. In section \ref{sec:dos1}, basic results on the density of states (DOS) for $N$ finite and infinite are derived. The smoothness of the DOS is proven following the ideas of Dolai, Krishna, and Mallick \cite{dkm}. The Schur complement formula plays an essential role in our simplification of some of the arguments of \cite[Theorem 2.2]{dkm} and in our extension to probability measures with non-compact support.   Section \ref{sec:fracMoment1} and \ref{sec:localBound1} present the main technical results: The reduction of the difference of the expectation of random variables related to $H_L^N$ and $H_L^M$ to finite moments of resolvents and, the bounding of these by localization bounds. 
 The convergence of the DOS functions and identification of the limit is given in section \ref{sec:ptwConvDOS1}.
Finally, the proof of the local eigenvalue statistics as a Poisson point process is given in section \ref{sec:poisson1}. The paper concludes with two appendices: In the first, section \ref{app:identities1}, we present some basic identities used in the fractional moment bounds, and in the second, section \ref{sec:loc1}, we present the proof of the high-energy localization bound for RBM by a modified Aizenman-Molchanov argument.  A note on notation: The value of various constants may change from line to line. Important dependencies are indicated by subscripts such as $C_{L,s}$, etc. 

\subsection{Acknowledgements} We thank P.\ Bourgade, D.\ Dolai, M.\ Krishna, and J.\ Schenker for stimulating discussions and helpful remarks.  Some of the results in this article are from the University of Kentucky doctoral dissertation of the first author \cite{brodie1}.


\section{Fundamentals: Spectral averaging and local bounds}\label{sec:fundamentals1}
\setcounter{equation}{0}

In this section, we treat general real symmetric $N \times N$ matrices $H_\omega^N$ and present the basic bounds on these random matrices $H^N_\omega$ depending only on the diagonal randomness. As mentioned in the introduction, the Wegner and Minami bounds obtained in the manner suffice for fixed bandwidth RBM. For the case of when the bandwidth increases with $N$, these constants in these estimates also grow. The results of \cite[Theorem 3]{psss} for Gaussian probability measures show that these constants may be chosen independent of $L$ and $N$.   
We begin with a version of rank-one and rank-two spectral averaging based on the Schur complement formula. We then present the Wegner and Minami estimates. These are derived using only the diagonal disorder. 
We also mention more refined estimates of Peled, Schenker, Shamis, Sodin \cite{psss}.
Finally, we derive an extension of Schenker's localization bound for random band matrices valid for complex energies.


%
%
%


\subsection{Spectral averaging}\label{subsec:SpectralAve1}

We need the following {\it a priori} bound on the expectation of the Green's functions of matrices with \emph{iid} random variables along the diagonal.

\begin{prop}[Spectral Averaging] \label{proposition:SpectralAve1}
Let $H_\omega^N$ be a self-adjoint $N \times N$-matrix such that its diagonal entries $[H_\omega^N]_{jj}:= \omega_{j}$  are $iid$ random variables with a common bounded probability density function $\rho$ vanishing at infinity. 
Let $\{ e_j ~|~ j = 1, \ldots, N \}$ be the standard orthonormal basis of $\C^N$.
\begin{enumerate}
\item[{(i.)}] For any $0<s<1$, there exists a finite constant $C_{\rho, s} > 0$, independent of $N$ and indices $j, k \in \{1, \ldots, N \}$, so that for any $z\in\C$, we have
\beq\label{eq:matrixEle1}
\E \left\{ \left| \left\langle e_j, \left(H_\omega^N -z\right)^{-1} e_k \right\rangle \right|^s  \right\} \leq C_{\rho, s} .
\eeq
\item[{(ii.)}] There exists a finite constant $C_\rho > 0$, independent of $N$ and $j \in \{ 1, \ldots, N \}$, so that for any $z\in\C \backslash \R$,
\beq\label{eq:ImMatrixEle1}
\E \left\{  \left| \Im \left\langle e_j, \left(H_\omega^N - z \right)^{-1} e_j \right\rangle \right| \right\} \leq C_\rho .
\eeq
\end{enumerate}
\end{prop}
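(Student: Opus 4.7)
The plan is to use the Schur complement formula to reduce matters to explicit one- or two-variable integrals against the density $\rho$. For part (ii) and the diagonal $j = k$ case of part (i) a rank-one reduction suffices. Reordering so that $j$ is the first basis vector, write
\[
H_\omega^N - z = \begin{pmatrix} \omega_j - z & a^* \\ a & A_{22} \end{pmatrix},
\]
where $a$ is the corresponding off-diagonal column and $A_{22}$ is the complementary block, independent of $\omega_j$. For $\Im z > 0$ the identity $\Im(A_{22}^{-1}) = \Im z \cdot A_{22}^{-1}(A_{22}^{-1})^* \geq 0$ yields $\Im(z + \Gamma) \geq \Im z > 0$ for $\Gamma := a^* A_{22}^{-1} a$, and the Schur formula gives $\langle e_j,(H_\omega^N-z)^{-1}e_j\rangle = (\omega_j - z - \Gamma)^{-1}$. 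Hence $|\Im G_{jj}|$ is an honest Poisson kernel in $\omega_j$, and integrating against $\rho$ and applying $\int_\R d\omega/[(\omega-\alpha)^2+\beta^2] = \pi/|\beta|$ proves (ii) with $C_\rho = \pi\|\rho\|_\infty$. For the diagonal case of (i), $|G_{jj}|^s = |\omega_j - z - \Gamma|^{-s}$, and splitting the $\omega_j$-integration at distance $1$ from $\Re(z+\Gamma)$ handles both halves: near the pole I would use $|\omega_j - (z+\Gamma)| \geq |\omega_j - \Re(z+\Gamma)|$ together with $\rho \in L^\infty$ (integrable since $s<1$), and far away I would use $|\omega_j - (z+\Gamma)| \geq 1$ together with $\|\rho\|_1 = 1$.

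The hard case is $j \neq k$ in (i), which I expect to be the main obstacle. I would apply a rank-two Schur decomposition at positions $\{j,k\}$, producing the $2\times 2$ complex-symmetric matrix
\[
M = \begin{pmatrix} \omega_j - a & c \\ c & \omega_k - b \end{pmatrix}, \qquad G_{jk} = -\frac{c}{\det M} = -\frac{c}{(\omega_j - a)(\omega_k - b) - c^2},
\]
with $a, b, c \in \C$ depending on $z$, the fixed off-diagonal entries of $H_\omega^N$, and the random variables $\{\omega_\ell : \ell \neq j, k\}$, but \emph{not} on $\omega_j$ or $\omega_k$. A direct application of Fubini with two successive one-dimensional spectral averagings (using the diagonal bound of part (i) at each stage) only gives $\E_{\omega_j,\omega_k}[|G_{jk}|^s] \leq C_{\rho,s}^2\,|c|^s$, which is not uniform in the effective coupling $|c|$.

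To remove the $|c|^s$ factor I would argue by cases on $|c|$. When $|c| \leq 1$ the iterated bound already furnishes $C_{\rho,s}^2$. When $|c| > 1$, I partition the $(\omega_j,\omega_k)$-plane into a nonresonant region $\{|(\omega_j-a)(\omega_k-b)| \leq |c|^2/2\}$ and its complement. On the nonresonant region the reverse triangle inequality forces $|\det M| \geq |c|^2 - |c|^2/2 = |c|^2/2$, so $|G_{jk}|^s \leq 2^s|c|^{-s} \leq 2^s$ uniformly. On the complement at least one of $|\omega_j-a|,|\omega_k-b|$ exceeds $|c|/\sqrt{2}$; on the sub-region where $|\omega_j-a|>|c|/\sqrt 2$, factoring $|\det M| = |\omega_j-a|\cdot|\omega_k - b - c^2/(\omega_j-a)|$ and applying the diagonal bound to the $\omega_k$-integral yields an inner bound $C_{\rho,s}|c|^s/|\omega_j-a|^s$, whose outer integral against $\rho(\omega_j)\mathbf{1}_{|\omega_j - a|>|c|/\sqrt{2}}$ is at most $C_{\rho,s}|c|^s\cdot(|c|/\sqrt{2})^{-s} = 2^{s/2}C_{\rho,s}$, the large-coupling factor cancelling exactly. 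Combining the two regions with the symmetric sub-case $|\omega_k-b|>|c|/\sqrt 2$ gives a uniform constant $C_{\rho,s}$ depending only on $\|\rho\|_\infty$ and $s$, completing the proof.
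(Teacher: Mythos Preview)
Your argument is correct, and for the off-diagonal case $j\neq k$ it takes a genuinely different route from the paper. The paper also reduces to the $2\times 2$ Schur block $M=V_{jk}+A(z)$, but instead of computing the explicit off-diagonal entry $-c/\det M$ it bounds the full operator norm $\|M^{-1}\|$: Lemma~\ref{lemma:SpectralAve2} shows $\P\{\|(V_{jk}+A(\lambda))^{-1}\|>t\}\le 4\pi\|\rho\|_\infty/t$ for \emph{real} $\lambda$ (where $A(\lambda)$ is self-adjoint), via a trace identity for $(H^2+t^{-2})^{-1}$ and two applications of rank-one spectral averaging. A layer-cake integration then gives the $s$-moment bound on $\R$, and the extension to $z\in\C^+$ is done separately by subharmonicity of $z\mapsto\E|G_{jk}(z)|^s$ and the Poisson representation. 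Your approach trades this machinery for a direct case split on the effective coupling $|c|$: the iterated one-variable bound handles $|c|\le 1$, and for $|c|>1$ the decomposition into the region $|(\omega_j-a)(\omega_k-b)|\le|c|^2/2$ (where $|\det M|\ge|c|^2/2$ trivializes $|G_{jk}|$) and its complement (where the large factor $|\omega_j-a|$ or $|\omega_k-b|$ exactly cancels the $|c|^s$) gives a uniform constant. The payoff of your route is that it is entirely elementary and works directly for complex $z$ without the subharmonicity step; the paper's route yields the stronger operator-norm tail bound of Lemma~\ref{lemma:SpectralAve2}, which is of independent interest and closer to standard rank-two spectral-averaging estimates used elsewhere (e.g.\ in Minami-type arguments).
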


\noindent
We begin with a lemma that is key to obtaining \eqref{eq:matrixEle1}. Some of the methods are from the proof of \cite[section 5, Theorem 7]{schenker}.

\begin{lemma}\label{lemma:SpectralAve2}
Let $\displaystyle V=\left(\begin{array}{cc} v_1 & 0 \\ 0 & v_2 \\ \end{array} \right)$ be a real diagonal random matrix with $v_1$ and $v_2$ independent random variables with common density $\rho$, and let $A$ be a self-adjoint $2 \times 2$ matrix independent of $V$. Then,
for all $t > 0$
\beq\label{eq:l1est1}
 \P \left\{ \| (V+A)^{-1} \| > t \right\} < \frac{4 \pi \| \rho \|_\infty}{t}.
\eeq
\end{lemma}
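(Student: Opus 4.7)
The plan is to convert the event $\{\|(V+A)^{-1}\| > t\}$ into a bound on the imaginary part of $\tr((V+A-z)^{-1})$ at $z = i/t$, apply Markov's inequality, and then handle the resulting expectation by rank-one spectral averaging based on the Schur complement for $2 \times 2$ matrices.

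First I would observe that, for a self-adjoint $2 \times 2$ matrix $M$ with real eigenvalues $\lambda_1, \lambda_2$, one has $\|M^{-1}\| = 1/\min(|\lambda_1|,|\lambda_2|)$, so the event $\{\|(V+A)^{-1}\| > t\}$ forces $V+A$ to have at least one eigenvalue $\lambda$ with $|\lambda| < 1/t$. Setting $\eta := 1/t$, the spectral theorem gives
\[
\Im \tr\bigl((V+A-i\eta)^{-1}\bigr) \;=\; \frac{\eta}{\lambda_1^2 + \eta^2} + \frac{\eta}{\lambda_2^2 + \eta^2} \;>\; \frac{\eta}{\lambda^2 + \eta^2} \;>\; \frac{1}{2\eta} \;=\; \frac{t}{2},
\]
whence $\{\|(V+A)^{-1}\| > t\} \subset \{\Im\tr((V+A-i/t)^{-1}) > t/2\}$, and Markov's inequality reduces the lemma to the expectation bound $\E\{\Im\tr((V+A-i\eta)^{-1})\} \leq 2\pi\|\rho\|_\infty$.

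For each diagonal entry I would invoke the $2 \times 2$ Schur complement identity, which rewrites
\[
[(V+A-i\eta)^{-1}]_{11} \;=\; \frac{1}{v_1 - \tilde z(v_2)}, \qquad \tilde z(v_2) := i\eta - a_{11} + \frac{|a_{12}|^2}{v_2 + a_{22} - i\eta},
\]
with $\Im \tilde z(v_2) \geq \eta > 0$. Conditioning on $v_2$ and on $A$ (independent of $V$), integration over $v_1$ amounts to evaluating $\rho$ convolved with the Poisson kernel of width $\Im \tilde z(v_2)$, which is bounded by $\pi\|\rho\|_\infty$ pointwise; Fubini over $v_2$ preserves the bound, and the $(2,2)$-entry is handled by symmetry. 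Summing yields $\E\{\Im\tr((V+A-i\eta)^{-1})\} \leq 2\pi\|\rho\|_\infty$, and combining with Markov delivers the claimed $4\pi\|\rho\|_\infty/t$.

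The main conceptual step is the opening \emph{trace majorization}: one must choose the spectral parameter ($z = i/t$ here) so that the presence of a small eigenvalue of $V+A$ deterministically forces $\Im\tr$ above a quantity of order $t$, while its expectation remains controlled uniformly in $t$. Once this bridge is in place, the remainder is a clean application of Kotani-Simon rank-one spectral averaging packaged through the Schur complement, and the hypothesis that $A$ is independent of $V$ enters only because $A$ must be frozen while each diagonal variable is integrated out.
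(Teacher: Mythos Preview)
Your proof is correct and follows essentially the same route as the paper: both arguments reduce the event $\{\|(V+A)^{-1}\|>t\}$ via Markov's inequality to a bound on $\E\{\Im\tr((V+A-i/t)^{-1})\}$, and then control each diagonal entry by the rank-one Schur complement computation and integration against the Poisson kernel. The only cosmetic difference is that the paper reaches the trace bound through the detour $(H^2+t^{-2})^{-1}=-t\,\Im(H+it^{-1})^{-1}$ and the Hilbert--Schmidt norm, whereas you obtain the inclusion $\{\|(V+A)^{-1}\|>t\}\subset\{\Im\tr((V+A-i/t)^{-1})>t/2\}$ directly from the spectral theorem; the remainder is identical.
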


\begin{proof}
For any self-adjoint operator $H$ with discrete spectrum, we have
\beq\label{eq:InvProb1}
\| {H}^{-1} \| > \frac{1}{t} \iff  \sigma(H) \cap \left( - \frac{1}{t}, \frac{1}{t} \right) \neq \emptyset ,
\eeq
so that
\beq\label{eq:InvProb2}
\| {H}^{-2} \| > \frac{1}{t^2} \iff  \sigma(H^2) \cap \left( 0, \frac{1}{t^2} \right) \neq \emptyset .
\eeq
This, in turn, is equivalent to the fact that $H^{2} + \frac{1}{t^2}$ has an eigenvalue in $\left( \frac{1}{t^2}, \frac{2}{t^2} \right)$.
So for a random self-adjoint matrix $H$, we have
\bea\label{eq:fluxEst1}
\P \{ \| H^{-1} \| > t \} & = & \P \left\{ \sigma \left( H^2 + \frac{1}{t^2} \right) \cap \left( \frac{1}{t^2}, \frac{2}{t^2}          \right) \neq \emptyset                  \right\} \nonumber \\
             & \leq & \P \left\{ \| (H^2 + {t^{-2}} )^{-1} \|  > \frac{t^2}{2} \right\} \nonumber \\
             & \leq & \P \left\{ \| (H^2 + t^{-2} )^{-1} \|_2  > \frac{t^2}{2} \right\} \nonumber \\
              & \leq & \frac{2}{t^2} \E \left\{ {\rm Tr} \left[ \left (H^2 + \frac{1}{t^2} \right)^{-1} \right] \right\} .
\eea
The final simplification comes from the fact that
\beq\label{eq:resolventID1}
(H^2 + t^{-2} )^{-1} = - t \Im ( H + i t^{-1})^{-1},
\eeq
so substituting the right side of \eqref{eq:resolventID1} into the last line of \eqref{eq:fluxEst1},
we obtain
\beq\label{eq:InvProb3}
\P \{ \| H^{-1} \| > t \} = - \frac{2}{t} \sum_{i=1}^2 \E \left\{ \langle e_i, \Im (H + i t^{-1} )^{-1} e_i \rangle \right\} .
\eeq
To evaluate the expectation on the right of \eqref{eq:InvProb3}, we use the Schur complement formula of Lemma \ref{lemma:schur1} for the rank-one projection $P = P_i$, the projection onto $e_i$. Letting $Q := 1 - P_i$, we define $\Gamma_i := \langle e_i, HQ ( QHQ + it^{-1}Q )^{-1} QH e_i \rangle$, for $i = 1,2$. The scalar $\Gamma_i$ is independent of the random variable $v_i$.  We then have
\bea\label{eq:schur4}
\langle e_i, \Im (H + i t^{-1} )^{-1} e_i \rangle  & = & \Im (v_i + i t^{-1} + \Gamma_i )^{-1} \nonumber \\
  & = & - (t^{-1} + \Im \Gamma_i) [ (v_i + \Re \Gamma_i)^2 + ( t^{-1} + \Im \Gamma_i)^2]^{-1} . \nonumber \\
   &
  \eea
We now integrate the right side of \eqref{eq:schur4} with respect to $v_i$ and obtain the upper bound $\pi \| \rho\|_\infty$.
Returning to \eqref{eq:fluxEst1}, we obtain
\beq\label{eq:final1}
\P \{ \| (V + A)^{-1} \| > t \} \leq \frac{4 \pi \| \rho \|_\infty}{t},
\eeq
proving the lemma.
 \end{proof}

Given this technical lemma, we can prove Proposition \ref{proposition:SpectralAve1}.

\begin{proof}
\noindent
1.  We first prove \eqref{eq:matrixEle1} on the diagonal $j = k$. Let $H_{(\omega_j^\perp, 0)}^N$ be the matrix $H_\omega^N$ with the $jj$ entry set to $0$.  Using the second resolvent identity, we have the rank one perturbation formula
\beq\label{eq:SAdiagonal1}
\langle e_j, \left(H_\omega^N - z\right)^{-1} e_j \rangle
= \left( \omega_j + \langle e_j, (H_{(\omega^\perp_j,0)}^N -z)^{-1} e_j \rangle^{-1} \right)^{-1}.
\eeq
Thus, we have
\beq\label{eq:SAdiagonal2}
\E  \{ | \langle e_j,  (H_\omega^N -z )^{-1}  e_j \rangle |^s \}
= \E_{\omega_j^\perp}  \left\{  \int_\R  d\omega_j   \rho(\omega_j)  \left| \left( \omega_j +
\langle e_j, (H_{(\omega_j^\perp,0)} -z )^{-1}    e_j \rangle^{-1}  \right)^{-1} \right|^s  \right\} .
\eeq
Since $s \in (0,1)$ and $\rho \geq 0$ is a probability density, it follows that
for any $a \in \C$,
\beq\label{eq:SAdiagonal3}
 \int_\R d\omega_j \rho(\omega_j) ~|{\omega_j +a }|^{-s} \leq C_{\rho,s} < \infty,
 \eeq
 independent of $a$. Taking $a = \langle e_j, (H_{(\omega_j^\perp,0)}-z )^{-1}    e_j \rangle^{-1}$, it follows that the expectation in \eqref{eq:SAdiagonal2} is uniformly bounded in $z \in \C$.

\noindent
2. For the off-diagonal terms $j \neq k$ in \eqref{eq:matrixEle1},
we let $P_{jk}$ be the orthogonal projection onto the span of $e_j$ and $e_k$. We begin with energies $\lambda \in \R$. Then,
we have
\beq\label{eq:SAoffdiagonal1}
| \langle e_j, (H_\omega^N- \lambda )^{-1} e_k \rangle | \leq \| P_{jk} (H_\omega^N- \lambda )^{-1} P_{jk} \|   ,
\eeq
where the right hand side is the operator norm of the $2 \times 2$ matrix.  From the Schur complement formula, Lemma \ref{lemma:schur1}, we can write the $2 \times 2$ as $P_{jk} (H_\omega^N- \lambda )^{-1} P_{jk}$ as
\beq\label{eq:OffDiag1}
P_{jk} (H_\omega^N - \lambda )^{-1} P_{jk} =  (V_{jk} + A(\lambda)  )^{-1},
\eeq
where the diagonal potential $V_{jk}$ is
\beq
V_{jk}= \left(\begin{array}{cc}
\omega_{j} & 0 \\
0 & \omega_{k} \\
\end{array}\right)
\eeq
and $A(\lambda)$ is a $2 \times 2$ independent of $V_{jk}$ that is self-adjoint for $\lambda \in \R$, which we now assume.
Using the layer cake representation of the expectation, we have
\beq\label{eq:OffDiag2}
 \E \{ \| \left(V_{jk} - A(\lambda) \right)^{-1} \|^s  \}
= \int_0^\infty \P  \left\{ \| \left(V_{jk} - A(\lambda) \right)^{-1} \|^s >  t\right\} \, dt.
\eeq
We now apply Lemma \ref{lemma:SpectralAve2} to the integrand on the right in \eqref{eq:OffDiag2}. For $t > 1$, the integral is finite since the probability behaves like $t^{-\frac{1}{s}}$ and $s \in (0,1)$. For $t \in [0,1]$, we use the fact that the probability is bounded by one. Bounds \eqref{eq:OffDiag2} and \eqref{eq:SAoffdiagonal1} establish \eqref{eq:matrixEle1} for $\lambda \in \R$, with constant independent of $\lambda$.

\noindent
3. To extend the bound \eqref{eq:matrixEle1} from $\lambda \in \R$ (proven in part 2 above) to $z \in \C$, we follow an idea in \cite[Theorem B.1]{asfh}. For $z \in \C^+$, the function
\beq\label{eq:subharm1}
f_{jk}(z) :=  \E  \{ | \langle e_j,  (H_\omega^N -z )^{-1}  e_k \rangle |^s \}
\eeq
is subharmonic with boundary-values on $\R$ that exist almost everywhere. Furthermore, by part 2 of the proof, the function  $f_{jk}(\lambda)$ is uniformly bounded for $\lambda \in \R$. Thus, by the Poisson representation
of a subharmonic function on $\C^+$, we have
\beq\label{eq:subharm2}
f_{jk}(z) \leq \frac{1}{\pi} \int_{\R} ~f_{jk}(\lambda) \frac{y}{(x-\lambda)^2 + y^2} ~d \lambda \leq C_{\rho,s}.
\eeq
This proves the bound for all $z \in \C^+$. Using the Poisson representation for $\C^-$ establishes the result for $z \in \C$.

\noindent
4. We now turn to the \textit{a priori} estimate \eqref{eq:ImMatrixEle1}.
Let $z= E + i \epsilon$, for $\epsilon > 0$.   Using the Schur complement formula in Lemma \ref{lemma:schur1} with $P = P_j$, the rank one projection onto $e_j$, we obtain
\beq\label{eq:schurRank1}
\E  \left\{  \Im\left\langle e_j, \left(H_\omega^N - E-i\eps \right)^{-1} e_j \right\rangle \right\}
=\E\left\{ \Im  ( \omega_j -E- i\eps + a(z) )^{-1 } \right\} ,
\eeq
where $a(z)$ is independent of $\omega_j$.
Writing  $\tilde{E}=E + \Re a$ and $\tilde{\epsilon}= \epsilon + \Im a$. we obtain from \eqref{eq:schurRank1},
\beq\label{eq:schurRank2}
\E\left\{\Im (\omega_j -E-i\epsilon + a )^{-1 } \right\}
=\E\left\{    \frac{\tilde{\epsilon}}{(\omega_j-\tilde{a})^2 + \tilde{\eps}^2}        \right\} .
\eeq
The result now follows by integration of \eqref{eq:schurRank2} with respect to $\omega_j$. 
This  bound is uniform in $E$ and independent of $\eps > 0$. Replacing the imaginary part by the absolute value in the left side of \eqref{eq:schurRank1}, we obtain the corresponding bound for $\epsilon < 0$. 
\end{proof}

\subsection{Wegner and Minami estimates}\label{subsec:WMbounds1}

Localization and LES require two eigenvalue estimates:  a Wegner estimate and a Minami estimate. These are easily obtained by averaging over the diagonal terms only. Since the random variables $v_{ij}$ are scaled with the bandwidth like $L^{- \frac{1}{2}}$, the upper bound in the Wegner estimate scales as $L^{\frac{1}{2}}$. As mentioned above, the more refined calculation of \cite{psss} results in an upper bound independent of the bandwidth for Gaussian random variables. This is anticipated to be an  important improvement for the cases for which the bandwidth grows with $N$. 

We now return to the fixed-width RBM $H_L^N$ as in \eqref{eq:RBMdefn1}.
We begin with the Schur complement formula in section  \ref{subsec:schur1} with $P=P_j$, the rank-one projection onto the subspace of $\C^{2N+1}$ generated by $e_j$, for $j \in \{ -N, \ldots, N\}$, and we write $Q : = 1 - P_j$ for the orthogonal projection. We refer to the proof of Lemma \ref{lemma:SpectralAve2}, near \eqref{eq:schur4}, for a similar calculation and the definition of $\Gamma_j$.  Using this formula, we obtain the spectral averaging bound
\bea\label{eq:schur1}
 \Im \langle e_j, (H_L^N - E - i \eps )^{-1} e_j \rangle  & \leq & 
\int ~ \rho(v_{jj}) \Im  ( v_{jj} - E - i \eps +   \Gamma_j(E+i \eps)  )^{-1} ~d v_{jj} \nonumber \\
& \leq & L^{\frac{1}{2}} \pi \| \rho \|_\infty,
\eea
uniformly in $E$ and in $\eps > 0$.
This estimate and the method of Combes, Germinet, and Klein \cite{cgk1} leads to the following bounds.
For a self-adjoint operator $A$, we write $P_I(A)$ for the spectral projection associated with $A$ and the interval $I \subset \R$. 

\begin{prop}[Wegner and Minami estimates] \label{prop:evCorrelations1}
For any $(2N+1) \times (2N+1)$ real symmetric matrix $H_L^N$ with band width $2L+1$ and with diagonal elements $\{ v_{jj} \}$ $iid$ with density $\rho$, we have
\beq\label{eq:wegnerEst1}
\P \{ {\rm Tr} P_I(H_L^N ) \geq 1 \} \leq \E  \{ {\rm Tr} P_I({H_L^N}) \} \leq  \pi \| \rho \|_\infty L^{\frac{1}{2}} (2N+1) |I|.
\eeq
and
\bea\label{eq:minamiEst1}
\P \{ {\rm Tr} P_I({H_L^N}) \geq 2 \}  & \leq  & \E  \{ {\rm Tr} P_I({H_L^N}) ( {\rm Tr} P_I({H_L^N}) - 1)  \} \nonumber \\
  &  \leq  & (  \pi \| \rho \|_\infty L^{\frac{1}{2}} (2N+1) |I| )^2.
\eea
\end{prop}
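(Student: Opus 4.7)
The first inequality in each display is Chebyshev applied to the nonnegative integer-valued random variable $X := \mathrm{Tr}\,P_I(H_L^N)$: one has $\mathbf 1_{X\ge 1}\le X$ and $\mathbf 1_{X\ge 2}\le X(X-1)$. Hence the content of the proposition lies in the two upper bounds, on $\E\{X\}$ and on $\E\{X(X-1)\}$, and these will both be extracted from the single-site estimate \eqref{eq:schur1} already at hand.

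For the Wegner bound I would expand each diagonal matrix element of the spectral projection via the Stone formula
\beq\label{eq:stoneProp}
\langle e_j,P_I(H_L^N) e_j\rangle = \frac{1}{\pi}\lim_{\eps\downarrow 0}\int_I \Im\langle e_j,(H_L^N-E-i\eps)^{-1} e_j\rangle\, dE ,
\eeq
apply $\E$ and push it inside the limit and integral by Fatou and Fubini (using the uniform pointwise bound \eqref{eq:schur1} on the integrand as the dominating control), and obtain $\E\{\langle e_j,P_I e_j\rangle\}\le \pi\|\rho\|_\infty L^{1/2}|I|$. Summing over the $2N+1$ indices $j$ produces the Wegner estimate, with the $L^{1/2}$ factor directly inherited from the rescaling of the diagonal entry by $\sqrt{2L+1}$ that is already tracked in \eqref{eq:schur1}.

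For the Minami bound I would follow the rank-two spectral averaging scheme of Combes--Germinet--Klein \cite{cgk1}. Since $P_I^2=P_I$, one has the identity
\beq\label{eq:miniIdProp}
X(X-1) = (\mathrm{Tr}\,P_I)^2-\mathrm{Tr}\,P_I^2 = \sum_{j\ne k}\bigl[\,P_I(j,j)P_I(k,k)-|P_I(j,k)|^2\,\bigr],
\eeq
whose summands are $2\times 2$ principal minors of the positive operator $P_I$ and hence nonnegative. Using the Stone formula, each such minor can be represented as a limit of the determinant of $\Im\bigl(P_{jk}(H_L^N-z)^{-1}P_{jk}\bigr)$, integrated over the energy. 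Applying the Schur complement formula of Lemma~\ref{lemma:schur1} to the rank-two projection $P_{jk}$ reduces this $2\times 2$ block of the resolvent to $(V_{jk}+A(z))^{-1}$, where $V_{jk}=\mathrm{diag}(v_{jj},v_{kk})/\sqrt{2L+1}$ and $A(z)$ is a $2\times 2$ matrix independent of the pair $(v_{jj},v_{kk})$. Performing spectral averaging simultaneously in these two variables against $\rho\otimes\rho$, and using the positivity identity $\det(\Im M)=\Im M_{11}\,\Im M_{22}-|\Im M_{12}|^2$ valid for any self-adjoint positive matrix, produces a pointwise bound of order $(\pi\|\rho\|_\infty L^{1/2})^2$ on the expected determinant, uniform in $A(z)$. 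Integrating over the energy interval and summing over the $O(N^2)$ pairs $(j,k)$ yields the Minami bound.

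The principal obstacle is the rank-two averaging step. A naive bound of the summand in \eqref{eq:miniIdProp} by $P_I(j,j)P_I(k,k)$ followed by two independent applications of the Wegner bound would still produce the correct order of magnitude in $N$, $L$, and $|I|$, but without the benefit of the negative $-|P_I(j,k)|^2$ correction it does not capture the true two-particle cancellation --- and, more seriously, two independent rank-one averages do not commute with the Schur reduction of the $2\times 2$ block. What is needed is a joint integration against $\rho(v_{jj})\rho(v_{kk})\,dv_{jj}\,dv_{kk}$ performed after the Schur reduction, combined with the positivity of $\Im G$; tracking the two copies of $\sqrt{2L+1}$ through this joint change of variables is what delivers the square of the Wegner scaling and hence the stated constant.
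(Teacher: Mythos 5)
The paper offers no written proof of this proposition --- it merely states that the spectral averaging bound \eqref{eq:schur1} together with ``the method of Combes, Germinet, and Klein \cite{cgk1}'' yields the result. Your blind proof is correct in its overall structure, and the Wegner half is exactly what is intended: Stone's formula plus the rank-one averaging of \eqref{eq:schur1}, with the $L^{1/2}$ coming from the $1/\sqrt{2L+1}$ scaling of the diagonal entry. For the Minami half, you take the route of Minami's original 1996 determinant argument (rank-two Schur reduction of the $2\times 2$ block, joint averaging over $(v_{jj},v_{kk})$, positivity of $\det\Im$), which is a legitimate and closely related, though not identical, route to the one in \cite{cgk1}; CGK instead iterate a rank-one eigenvalue-interlacing comparison that decouples one of the traces, so that the Minami bound follows from two applications of Wegner without any rank-two averaging. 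Both methods deliver the stated square of the Wegner constant.

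One step in your Minami argument is stated imprecisely and, as written, would not go through: the claim that each minor ``can be represented as a limit of the determinant of $\Im\bigl(P_{jk}(H_L^N-z)^{-1}P_{jk}\bigr)$, integrated over the energy'' is not a valid identity, because $\det$ does not commute with the energy integral in the Stone representation, and a double integral in two energy parameters does not reduce to a single integral of a determinant. The standard fix is to replace the Stone-formula identity by the operator inequality
\begin{equation*}
P_{[E_c-\eps,E_c+\eps]}(H_L^N) \;\le\; 2\eps\,\Im\bigl(H_L^N - E_c - i\eps\bigr)^{-1},
\end{equation*}
take $I=[E_c-\eps,E_c+\eps]$ so that $P_I \le |I|\,\Im G_{|I|/2}(E_c)$, compress both sides by $P_{jk}$, and use that $0\le A\le B$ for $2\times 2$ positive semidefinite matrices implies $\det A\le\det B$. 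This gives
\begin{equation*}
P_I(j,j)P_I(k,k)-|P_I(j,k)|^2 \;\le\; |I|^2\,\det\Bigl(\Im\bigl[P_{jk}(H_L^N-E_c-i\tfrac{|I|}{2})^{-1}P_{jk}\bigr]\Bigr),
\end{equation*}
to which your Schur reduction and joint averaging then apply as you describe. With that correction the argument is complete. (Two minor remarks: your side claim that bounding the summand naively by $P_I(j,j)P_I(k,k)$ would ``still produce the correct order of magnitude'' is not accurate --- that bound loses a power of $|I|$ --- and, like the paper's own statement, your constant carries a spare factor of $\pi$ relative to what Stone's formula actually gives; neither affects the validity of the proof.)
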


Although not needed in the present work, we mention the following eigenvalue correlation estimate that is a version of Theorem 3 of Peled, Schenker, Shamis, and Sodin \cite{psss} for the case of RBM $H_L^N$, generalizing the Minami estimate, mentioned above.

 \begin{theorem} \cite[Theorem 3]{psss} \label{thm:psss1}
 Let $H_L^N$ be a symmetric random band matrix with diagonal random variables $iid$ with an absolutely continuous probability measure with density $\rho$. Then, we have
 \beq\label{eq:genMinami1}
\E \left\{ \prod_{\ell=0}^{m-1}\left( \tr P_I({H_L^N})-\ell  \right) \right\}
 \leq \left( C L^{1/2}(2N+1)|I|\right)^m,
\eeq
 where the constant $C$ depends on $\rho$. In the case when $\rho$ is Gaussian and the diagonal terms $v_{jj}$ are random variables with variance one, there is no factor of $L^{\frac{1}{2}}$ in \eqref{eq:genMinami1}, that is, the upper bound is independent of the bandwidth.
\end{theorem}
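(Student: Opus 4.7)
The plan is to follow the approach of \cite{psss}, which rewrites the factorial moment as a sum of expected principal minors of the spectral projection and then bounds each minor by an $m$-dimensional spectral average. The algebraic starting point is the Cauchy--Binet identity for an orthogonal projection $P$ on $\C^{2N+1}$ of rank $r = \tr P$:
\begin{equation}
\prod_{\ell=0}^{m-1}(\tr P - \ell) = m!\binom{r}{m} = m!\sum_{i_1 < \cdots < i_m} \det\bigl[\langle e_{i_a}, P e_{i_b}\rangle\bigr]_{a,b=1}^m.
\end{equation}
Taking $P = P_I(H_L^N)$ and using $\binom{2N+1}{m} \leq (2N+1)^m/m!$, the theorem reduces to the uniform bound
\begin{equation}
\E\bigl\{\det[\langle e_{i_a}, P_I(H_L^N) e_{i_b}\rangle]_{a,b=1}^m\bigr\} \leq \bigl(CL^{1/2}|I|\bigr)^m
\end{equation}
for every fixed $m$-subset $S := \{i_1,\ldots,i_m\} \subset \{-N,\ldots,N\}$.

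For fixed $S$, the Stieltjes representation $\langle e_{i_a},P_I(H) e_{i_b}\rangle = \tfrac{1}{\pi}\lim_{\eps\downarrow 0}\int_I \Im\langle e_{i_a},(H-E-i\eps)^{-1} e_{i_b}\rangle\, dE$, combined with the rank-$m$ Schur complement formula of Lemma \ref{lemma:schur1} applied to the projection $P_S$ onto $\spann\{e_{i_a}\}_{a=1}^m$, expresses the $m \times m$ block as
\begin{equation}
P_S(H_L^N-E-i\eps)^{-1}P_S = \bigl(\tfrac{1}{\sqrt{2L+1}}V_S + B_S(E+i\eps)\bigr)^{-1},
\end{equation}
with $V_S = \mathrm{diag}(v_{i_1 i_1},\ldots,v_{i_m i_m})$ and $B_S$ an $m\times m$ matrix independent of $V_S$ that is self-adjoint for real $E$. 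Conditioning on all other entries of $H_L^N$, the remaining expectation reduces to integration against $\prod_{a=1}^{m}\rho(v_{i_a i_a})$, so it suffices to control the joint integral over $V_S$ deterministically.

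The main technical obstacle is the resulting multivariable spectral averaging estimate, namely
\begin{equation}
\int \det\!\Bigl[\tfrac{1}{\pi}\int_I \Im\bigl(\tfrac{1}{\sqrt{2L+1}}V_S + B_S - E - i\eps\bigr)^{-1}\!dE\Bigr]_{a,b=1}^m \prod_{a=1}^{m}\rho(v_{i_a i_a})\, dv_{i_a i_a} \leq \bigl(CL^{1/2}|I|\bigr)^m.
\end{equation}
A naive Cauchy--Schwarz attack loses the crucial $|I|^m$ factor and yields only $|I|^{m/2}$. One instead uses a Wronskian-type rewriting (the heart of the PSSS argument) that casts the integrand as a manifestly non-negative quantity, permitting the $m$ integrations over $v_{i_a i_a}$ to be performed sequentially, each contributing a factor $\pi\|\rho\|_\infty L^{1/2}|I|$ via the one-variable Poisson-kernel estimate already underlying \eqref{eq:schur1}. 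Summing over the $\binom{2N+1}{m}$ subsets $S$ then closes the argument. For the Gaussian case, the improvement to a bandwidth-independent constant exploits Gaussian integration-by-parts identities specific to the product Gaussian measure on the diagonal, together with the resulting cancellation against a Vandermonde factor in the spectral-density determinant, which removes the $L^{1/2}$ prefactor.
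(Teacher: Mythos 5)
This theorem is stated in the paper as a citation of \cite[Theorem~3]{psss}; the paper does \emph{not} prove it, so there is no in-paper proof to compare your attempt against. The authors merely remark that their own Proposition~\ref{prop:evCorrelations1} covers $m=1,2$ via the Combes--Germinet--Klein averaging argument (with the $L^{1/2}$ loss), and that \cite{psss} removes the $L^{1/2}$ factor in the Gaussian case.

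As to the sketch itself: the first half has the right shape for the $L^{1/2}$-version of the bound. The identity $\prod_{\ell=0}^{m-1}(\tr P - \ell)=m!\sum_{|S|=m}\det[\langle e_{i_a},Pe_{i_b}\rangle]$ for an orthogonal projection, the reduction via Stone's formula and a rank-$m$ Schur complement to an $m$-variable spectral average over the conditionally independent diagonal entries, and the bookkeeping $m!\binom{2N+1}{m}\le(2N+1)^m$ are all sound. But the step that actually carries the weight --- the ``Wronskian-type rewriting'' that is supposed to let the $m$-fold integral over $v_{i_1i_1},\ldots,v_{i_mi_m}$ factor into $m$ one-dimensional Poisson-kernel estimates, each contributing $\pi\|\rho\|_\infty L^{1/2}|I|$ --- is asserted, not performed, and nonnegativity of the Gram minor alone does not give it to you. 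A naive Hadamard bound $\det[\langle e_{i_a},P_Ie_{i_b}\rangle]\le\prod_a\langle e_{i_a},P_Ie_{i_a}\rangle$ followed by independent one-variable averaging also fails, since the $\langle e_{i_a},P_Ie_{i_a}\rangle$ are not independent. So as written, the argument does not close.

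The description of the Gaussian improvement is also off in an important way. You attribute the removal of $L^{1/2}$ to integration-by-parts identities ``specific to the product Gaussian measure on the diagonal,'' but averaging over the diagonal alone \emph{cannot} beat $L^{1/2}$: after the $1/\sqrt{2L+1}$ normalisation in \eqref{eq:RBMdefn1}, the effective density of the diagonal entries has sup norm of order $L^{1/2}$, which is exactly where the $L^{1/2}$ in \eqref{eq:schur1} and in Proposition~\ref{prop:evCorrelations1} comes from, and no amount of cleverness that only touches the diagonal measure will cancel it. The bandwidth-independent Gaussian bound in \cite{psss} must exploit the randomness of the off-diagonal band entries, which your sketch does not engage with.
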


The Wegner estimate corresponds to $m=1$ and the Minami estimate to $m=2$ in \eqref{eq:genMinami1}.


\subsection{Localization bounds}\label{subsec:localBounds1}

Schenker proves localization estimates on the $s$-moment of the matrix elements of the resolvent at real energies in any interval $[-r, r] \subset \R$ with constants depending on $r > 0$.

\begin{theorem}\cite{schenker}\label{thm:fractMoment1}
Given $r>0$ and $s\in(0,1)$, there are constants $\mu_{r,s}>0$, $C_{r,s}<\infty$, and $\alpha_{r,s}>0$ such that
\beq\label{eq:locBound1}
 \E\left\{ \left| \left\langle e_j, \left(H_L^N- E \right)^{-1} e_k
\right\rangle \right|^s \right\} \leq C_{r,s} L^{s/2} e^{-\alpha_{r,s} {L^{- \mu_{r,s}}}  {|j-k|} }
\eeq 
for all $E \in [-r,r]$, and all $i,j=-N, \dots, N$.
\end{theorem}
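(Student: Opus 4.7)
The plan is to extend the Aizenman--Molchanov fractional moment method to band matrices, exploiting both the a priori bound of Proposition \ref{proposition:SpectralAve1} and the fact that each off-diagonal entry of $H_L^N$ carries the prefactor $(2L+1)^{-1/2}$. Fix an energy $E \in [-r,r]$ and write $G := (H_L^N - E)^{-1}$. The $s$-moment bound of Proposition \ref{proposition:SpectralAve1} serves as the a priori input; the goal is to convert it into an exponentially small bound whenever $|j-k|$ is large compared to $L$.

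First I would set up a barrier decomposition. Choose a block length $\ell = \ell(L)$ to be optimized at the end, and for well-separated $j,k$ select a barrier set $B \subset \{-N,\ldots,N\}$ consisting of $L$ consecutive indices lying strictly between $j$ and $k$. Applying the Schur complement formula (Lemma \ref{lemma:schur1}) with $P$ the orthogonal projection onto $\spann\{e_i : i \in B\}$ produces a geometric resolvent identity expressing $\langle e_j, G e_k\rangle$ as a bilinear form involving matrix elements of the submatrix resolvent $G_{B^c} := (H_L^N|_{B^c} - E)^{-1}$ and entries of $H_L^N$ linking $B$ to $B^c$. Taking $s$-th powers with $s \in (0,1)$ and using $|a+b|^s \leq |a|^s + |b|^s$ yields an upper bound consisting of $O(L^2)$ product terms of the form
\[
|G_{B^c}(j,u)|^s \, |v_{uv}|^s \, (2L+1)^{-s/2} \, |G(v,k)|^s ,
\]
where $u,v$ sit on opposite sides of the barrier.

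Next I would integrate out the randomness. The factor $G_{B^c}(j,u)$ is independent of every $v_{uv}$ that crosses the barrier, while $|G(v,k)|^s$ is controlled on average by the a priori bound. Conditioning and integrating the barrier variables against $\rho$ produces a recursion of the form
\[
\E\{|G(j,k)|^s\} \leq \kappa_L \sup_{|k'-k| \leq L}\E\{|G(j',k')|^s\},
\]
with $j'$ now farther from $k$ by a block length $\ell$ and $\kappa_L$ a function of $L$ alone. Iterating across $\lfloor |j-k|/\ell \rfloor$ disjoint barriers produces decay $\kappa_L^{|j-k|/\ell}$; choosing $\ell \sim L^{\mu_{r,s}}$ so that $|\log \kappa_L|/\ell$ yields a net rate $\alpha_{r,s} L^{-\mu_{r,s}}$ gives the desired bound, with the prefactor $C_{r,s} L^{s/2}$ absorbing the short-range case $|j-k| < \ell$ directly via Proposition \ref{proposition:SpectralAve1} together with the band normalization.

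The main obstacle is proving the contraction $\kappa_L < 1$. A naive sum over the $O(L^2)$ edges crossing the barrier, each carrying $|v_{uv}/\sqrt{2L+1}|^s$ together with $\E|v_{uv}|^s = O(1)$, gives a combinatorial factor of order $L^{2-s/2}$, far too large. Schenker's key device is to perform the fractional-moment estimate on the \emph{entire barrier block at once}, rewriting the bilinear form via the Schur complement as a single matrix inverse of an $L \times L$ random submatrix perturbed by a self-energy; a spectral-averaging bound in the spirit of Lemma \ref{lemma:SpectralAve2}, applied in dimension $L$, then provides the genuine contraction needed to beat the combinatorial blowup. Tracking the dependence of all intermediate constants on the energy window $[-r,r]$ (which enters through the subharmonic extension of the a priori bound to complex $z$ in Proposition \ref{proposition:SpectralAve1}) produces the claimed exponents $\mu_{r,s}$, $\alpha_{r,s}$, and the overall constant $C_{r,s}$.
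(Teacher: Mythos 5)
This theorem is cited from Schenker \cite{schenker} and is not proved in the present paper, so there is no internal proof to compare against; I can only compare your sketch to Schenker's argument. Your high-level plan is faithful to his approach: view $H_L^N$ as a block tridiagonal (Jacobi) operator with $L\times L$ blocks, run an Aizenman--Molchanov style iteration across blocks, and observe that the naive term-by-term estimate across the band fails because of the $O(L^2)$ combinatorial factor, so that the fractional-moment step must be carried out on entire blocks via the Schur complement. You also correctly identify the relevant ancillary lemma in this paper: Lemma~\ref{lemma:SpectralAve2} is precisely a $2\times 2$ toy version of Schenker's block spectral-averaging estimate, and the paper itself flags its debt to \cite[Sect.~5, Thm.~7]{schenker} in the proof of that lemma.

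That said, your sketch leaves the technical heart of Schenker's proof as a black box. The statement that a spectral-averaging bound ``applied in dimension $L$ \ldots provides the genuine contraction'' is the whole ball game: proving that $\E\{\|(V_B + \Sigma)^{-1}\|^s\}$ for an $L\times L$ GOE-distributed block $V_B$ and a self-energy $\Sigma$ is small enough to beat the $L^2$ combinatorics is where Schenker uses specific features of the Gaussian ensemble (rotational invariance, the distribution of singular values of a GOE block, and a careful bookkeeping of how much each block iteration loses). Lemma~\ref{lemma:SpectralAve2} gives a bound that does not improve as the block dimension grows, so it cannot by itself supply a contraction, and the step from ``$2\times 2$'' to ``$L\times L$ with a useful $L$-dependence'' is nontrivial. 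Relatedly, the exponent $\mu_{r,s}$ does not fall out of simply ``choosing $\ell\sim L^{\mu_{r,s}}$'': Schenker's per-block estimate is not $<1$; one must aggregate $L^{c}$ blocks before the accumulated smallness beats the accumulated combinatorial loss, and $\mu$ is produced by that balance. Finally, your attribution of the $r$-dependence to the subharmonic extension in Proposition~\ref{proposition:SpectralAve1} is incorrect — that a priori bound is uniform in $z\in\C$ and carries no $r$-dependence. The restriction to $E\in[-r,r]$ enters in Schenker's proof through the block spectral-averaging and decoupling estimates themselves, whose constants degrade at large real energy (the present paper handles $|E|>R$ separately via Theorem~\ref{thm:locHE1} precisely because Schenker's bound is not uniform in $E$).
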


We need this type of estimate for all $z \in \C^+$. We will prove it by matching the Schenker estimate \eqref{eq:locBound1}
with the following high-energy estimate.

\begin{theorem}[Localization at Extreme Energies]\label{thm:locHE1}
For any $s \in (0,1)$ and for some fixed value $R>0$, there exist constants $C_{R,s}>0$ and $\alpha_{R, s}>0$ such that for each $z\in \C$ with $|z|>R$,
\beq\label{eq:locBoundHE1}
 \E\left\{ \left| \left\langle e_j, \left(H_L^N-z\right)^{-1} e_k
\right\rangle \right|^s \right\} \leq C_{R,s}  e^{-\alpha_{R,s} |j-k|}.
\eeq
\end{theorem}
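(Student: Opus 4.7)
\noindent\textbf{Proof plan for Theorem~\ref{thm:locHE1}.} The plan is to adapt the Aizenman--Molchanov fractional moment iteration to the band matrix setting, exploiting the fact that when $|z|$ is large the off-diagonal hopping part of $H_L^N - z$ is a small perturbation of the dominant diagonal part $-zI$. Fix $j \neq k$ and set $G(z) = (H_L^N - z)^{-1}$. Using the Schur complement formula (Lemma~\ref{lemma:schur1}) with the rank-one projection $P_j$ onto $e_j$, one obtains
\begin{equation}\label{eq:planSchur}
G_{jk}(z) \;=\; -\frac{1}{H_{jj} - z - \Gamma_j(z)}\sum_{l\,:\,0<|l-j|\le L} H_{jl}\,\widetilde{G}_{lk}(z),
\end{equation}
where $\widetilde{G}=(Q_j H_L^N Q_j - z)^{-1}$ with $Q_j = I-P_j$, and $\Gamma_j(z)$ is the associated self-energy; both are independent of $v_{jj}$. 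Applying the fractional subadditivity $|\sum a_l|^s \leq \sum|a_l|^s$ to \eqref{eq:planSchur} and then taking expectation gives
\begin{equation}\label{eq:planSmoment}
\E|G_{jk}(z)|^s \;\leq\; \sum_{l\,:\,0<|l-j|\le L}\E\!\left[\,|H_{jj}-z-\Gamma_j|^{-s}\,|H_{jl}|^s\,|\widetilde{G}_{lk}|^s\,\right].
\end{equation}

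The core of the argument is a double spectral averaging. Conditioning on everything except $v_{jj}$, the change of variables $u = v_{jj}/\sqrt{2L+1}$ together with the split $\{|u|\le |z|/2\}\cup\{|u|>|z|/2\}$ (the tail being controlled by the moment hypotheses on $\rho$ in Assumption~1) yields
\begin{equation}\label{eq:planDiag}
\E_{v_{jj}}\!\left[|H_{jj}-z-\Gamma_j|^{-s}\right] \;\leq\; K_{\rho,s}\,|z|^{-s},
\end{equation}
uniformly in the remaining randomness, provided $|z|$ is large enough (so that $|\Gamma_j|\ll |z|$ on a very probable event, the complementary event being handled by Proposition~\ref{proposition:SpectralAve1}). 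A second Schur complement, this time on the rank-two subspace $\mathrm{span}\{e_j,e_l\}$, isolates the common variable $v_{jl}$ from $\widetilde{G}_{lk}$; a rank-two variant of Lemma~\ref{lemma:SpectralAve2} applied to the resulting $2\times 2$ block then delivers a bound of the form
\begin{equation}\label{eq:planOff}
\E_{v_{jl}}\!\left[|H_{jl}|^s\,|\widetilde{G}_{lk}|^s\right] \;\leq\; \frac{K'_{\rho,s}}{(2L+1)^{s/2}}\,\E_{v_{jl}}|G_{lk}|^s.
\end{equation}
Combining \eqref{eq:planSmoment}--\eqref{eq:planOff} and summing over the at most $2L$ values of $l$ produces the recursion
\begin{equation}\label{eq:planRec}
\E|G_{jk}(z)|^s \;\leq\; \frac{\kappa_{L,s,\rho}}{|z|^s}\sum_{l\,:\,0<|l-j|\le L}\E|G_{lk}(z)|^s,\qquad j\neq k,
\end{equation}
with a constant $\kappa_{L,s,\rho}$ independent of $N$ and of $z$.

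Finally, choose $R$ large enough that $q(R) := 2L\,\kappa_{L,s,\rho}/R^s < 1$; such $R$ exists because $s>0$. Iterating \eqref{eq:planRec} at most $\lfloor|j-k|/L\rfloor$ times (the recursion terminates once the propagating index collides with $k$, at which point \eqref{eq:planSchur} no longer applies), and bounding the boundary matrix elements via the $N$-uniform \emph{a priori} estimate $\E|G_{lk}|^s\le C_{\rho,s}$ from Proposition~\ref{proposition:SpectralAve1}(i), one arrives at
\[
\E|G_{jk}(z)|^s \;\leq\; C_{\rho,s}\,q(|z|)^{\lfloor|j-k|/L\rfloor} \;\leq\; C_{R,s}\,e^{-\alpha_{R,s}|j-k|},
\]
with $\alpha_{R,s} = L^{-1}\log(1/q(R)) > 0$, which is precisely \eqref{eq:locBoundHE1}. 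The main obstacle is the decoupling \eqref{eq:planOff}: because $v_{jl}$ appears both in $H_{jl}$ and inside the denominator defining $\widetilde{G}_{lk}$, a naive factorization of the expectation is illegitimate, and one must use a rank-two Krein/Schur identity to expose the $v_{jl}$-dependence of $\widetilde{G}_{lk}$ before averaging. Keeping $\kappa_{L,s,\rho}$ independent of $N$ while correctly tracking the $(2L+1)^{-1/2}$ scaling of the matrix entries is needed to see that the gain $|z|^{-s}$ in \eqref{eq:planDiag} eventually overcomes the $2L$-fold summation in \eqref{eq:planRec}.
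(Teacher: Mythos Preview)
Your overall architecture (extract a one-step recursion with a gain $|z|^{-s}$, iterate $\lfloor |j-k|/L\rfloor$ times, close with the \emph{a priori} bound) matches the paper, but the two key estimates \eqref{eq:planDiag} and \eqref{eq:planOff} do not go through as written, and the paper obtains the recursion by a different and more robust route.

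\textbf{The gap in \eqref{eq:planDiag}.} After conditioning on $v_{jj}^\perp$, the self-energy $\Gamma_j(z)$ is a fixed complex number, and you need $\int \rho(v)\,|v/\sqrt{2L+1}-z-\Gamma_j|^{-s}\,dv\le K|z|^{-s}$ \emph{uniformly} in that number. This is false: nothing prevents $\Gamma_j$ from nearly cancelling $z$, and then the integral is only $O(1)$, not $O(|z|^{-s})$. Your proposed fix (restrict to the event $\{|\Gamma_j|\ll|z|\}$ and invoke Proposition~\ref{proposition:SpectralAve1} on the complement) produces an \emph{additive} error term in the recursion, $\E|G_{jk}|^s\le q\sum_l\E|G_{lk}|^s+\varepsilon$. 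Iterating this gives $q^nC+\varepsilon/(1-q)$, and the second term is a constant independent of $|j-k|$; the exponential decay is lost.

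\textbf{The gap in \eqref{eq:planOff}.} Since $\widetilde G=(Q_jHQ_j-z)^{-1}$ does not involve the $j$-th row or column at all, $\widetilde G_{lk}$ is \emph{independent} of $v_{jl}$; averaging over $v_{jl}$ therefore cannot convert $\widetilde G_{lk}$ into $G_{lk}$. Without such a conversion your recursion is in terms of $\widetilde G$, and each further step removes a new index, so you never return to a closed inequality for a single quantity.

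\textbf{What the paper does instead.} The paper avoids both problems by starting not from the Schur formula but from the trivial identity $\langle e_j,(H-z)G e_\ell\rangle=0$ for $j\ne\ell$, which gives
\[
(v_{jj}-z)\,G_{j\ell}=-\sum_{0<|m-j|\le L}v_{jm}\,G_{m\ell},
\]
with the \emph{same} resolvent $G$ on both sides and with the bare factor $(v_{jj}-z)$ rather than $(v_{jj}-z-\Gamma_j)$. Taking $s$-th powers and expectations, the $|z|^s$ gain on the left is then extracted by the \emph{lower decoupling lemma} of Aizenman--Molchanov (Lemma~\ref{lemma:lowerdecoup}): writing $G_{j\ell}=A/(B(v_{jj}-z)+C)$ via Schur, one has $\E\big[|v_{jj}-z|^s|G_{j\ell}|^s\big]\ge C(|z|)^s\,\E|G_{j\ell}|^s$ with $C(|z|)\sim|z|$, \emph{regardless} of the hidden pole location. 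On the right, the factor $|v_{jm}|^s$ is removed by the \emph{upper decoupling lemma} (Lemma~\ref{lemma:upperdecoup}), after noting that $G_{m\ell}$ is a ratio of polynomials of degrees $1$ and $2$ in $v_{jm}$. This yields the clean recursion $\E|G_{j\ell}|^s\le (C/C(|z|)^s)\sum_m\E|G_{m\ell}|^s$ with no additive error and no reduced resolvents, and the iteration closes exactly as you describe in your final paragraph.
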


 The proof follows the iteration procedure of Aizenman-Molchanov [AM] and is carried out in the appendix, section \ref{sec:loc1}. Combining Theorem \ref{thm:fractMoment1} and Theorem \ref{thm:locHE1}, we prove exponential decay of the $s$-moment of the matrix elements of the resolvent along the real axis and, by a subharmonicity argument, in $\C$.

\begin{theorem}[Localization at all energies]\label{thm:locAllE1}
For any  $s \in (0,1)$, and for all $z\in \C$, there exist finite constants $C_{L,s} > 0$ and $\alpha_{L,s} >0$, depending on $L$ and $s$ (but uniform in $z \in \C$), such that
\beq\label{eq:ExpDecay1}
 \E\left\{ \left| \left\langle e_j, \left(H_L^N-z\right)^{-1} e_k
\right\rangle \right|^s \right\} \leq C_{L,s} e^{-\alpha_{L,s} |j-k|}.
\eeq
\end{theorem}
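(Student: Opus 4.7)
The plan is to establish the bound first for real energies by combining Theorem \ref{thm:fractMoment1} with Theorem \ref{thm:locHE1}, and then to extend it to the entire complex plane via the subharmonicity/Poisson representation argument already used in the third part of the proof of Proposition \ref{proposition:SpectralAve1}(i).

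Fix $s\in(0,1)$ and let $R>0$ be the threshold supplied by Theorem \ref{thm:locHE1}. First I would apply Theorem \ref{thm:fractMoment1} with the choice $r=R$, obtaining, for every $\lambda\in[-R,R]$,
$$\E\bigl\{|\langle e_j,(H_L^N-\lambda)^{-1}e_k\rangle|^s\bigr\}\leq C_{R,s}L^{s/2}e^{-\alpha_{R,s}L^{-\mu_{R,s}}|j-k|},$$
and, simultaneously, Theorem \ref{thm:locHE1} gives
$$\E\bigl\{|\langle e_j,(H_L^N-\lambda)^{-1}e_k\rangle|^s\bigr\}\leq C_{R,s}e^{-\alpha_{R,s}|j-k|}$$
for every $|\lambda|>R$. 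Taking the maximum of the two constants and the minimum of the two exponential rates produces constants $C_{L,s}$ and $\alpha_{L,s}$ that are finite and strictly positive (since $L$ is fixed and finite), and yields a single bound of the form $C_{L,s}e^{-\alpha_{L,s}|j-k|}$ valid for every $\lambda\in\R$.

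Next, I would promote this real-axis estimate to all of $\C$. For each realization of $H_L^N$, the map $z\mapsto\langle e_j,(H_L^N-z)^{-1}e_k\rangle$ is a rational function whose poles lie on $\R$; hence, for $0<s<1$, the function $z\mapsto|\langle e_j,(H_L^N-z)^{-1}e_k\rangle|^s$ is subharmonic on $\C\setminus\R$. Taking expectations preserves subharmonicity, so
$$F_{jk}(z):=\E\bigl\{|\langle e_j,(H_L^N-z)^{-1}e_k\rangle|^s\bigr\}$$
is subharmonic on $\C^+$ and on $\C^-$, and, by Proposition \ref{proposition:SpectralAve1}(i), is uniformly bounded on all of $\C$. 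Moreover $F_{jk}(z)\to 0$ as $|z|\to\infty$ in $\C^\pm$, since $\|(H_L^N-z)^{-1}\|\leq |\Im z|^{-1}$ and dominated convergence applies. The Poisson representation for bounded subharmonic functions in the upper half-plane then yields
$$F_{jk}(z)\leq\frac{1}{\pi}\int_\R F_{jk}(\lambda)\,\frac{y}{(x-\lambda)^2+y^2}\,d\lambda,$$
and substituting the $\lambda$-independent bound from the previous paragraph gives $F_{jk}(z)\leq C_{L,s}e^{-\alpha_{L,s}|j-k|}$, since the Poisson kernel has total mass one. The identical argument on $\C^-$ completes the proof.

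The main obstacle is conceptually minor but essential: preserving the exponential decay in $|j-k|$ through the Poisson integral. This works precisely because the real-axis bound is $\lambda$-independent, so it factors out of the Poisson integral; any $\lambda$-dependent upper bound that did not itself decay would destroy the clean estimate. A secondary bookkeeping point is verifying that the combined decay rate $\min(\alpha_{R,s}L^{-\mu_{R,s}},\alpha_{R,s})$ is strictly positive, which is guaranteed because $L$ is finite; of course, this step costs us any hope of $L$-uniform decay, but that is unavoidable in the fixed-bandwidth framework used here.
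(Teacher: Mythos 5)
Your proof takes essentially the same route as the paper: first obtain a uniform exponential bound on $\R$ by combining Schenker's bounded-interval estimate (Theorem \ref{thm:fractMoment1}) with the high-energy bound (Theorem \ref{thm:locHE1}), then promote it to all of $\C$ by subharmonicity of $F_{jk}$ and the Poisson kernel representation on $\C^\pm$. Your bookkeeping of the combined rate as $\min(\alpha_{R,s}L^{-\mu_{R,s}},\alpha_{R,s})$ is in fact slightly more careful than the paper's $\min\{\alpha_{R,s},\alpha_{r,s}\}$, which silently absorbs the $L^{-\mu_{r,s}}$ factor into the notation.
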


\begin{proof}
We first obtain uniform bounds on $\R$. 
For any  $s \in (0,1)$, there exists $R >0$ as in Theorem \ref{thm:locHE1} so that the fractional moment bound \eqref{eq:ExpDecay1} holds for $|E| > R$.  We then choose $r > R > 0$ (depending on  $s \in (0,1)$) as in Theorem \ref{thm:fractMoment1}. We set $C_{L,s} := \max \{ C_{R, s}, C_{r, s} \}$ and $\alpha_{L,s} := \min \{ \alpha_{R,s}, \alpha_{r, s} \} > 0$. It follows from Theorem \ref{thm:fractMoment1} and Theorem \ref{thm:locHE1}, that for these constants, uniformly in $E \in \R$, we have
\beq\label{eq:ExpDecayReal1}
\E\left\{\left| \left\langle e_j, \left(H_L^N - E \right)^{-1} e_k\right\rangle  \right|^s\right\}
\leq C_{L,s} e^{-\alpha_{L,s} |j-k|}.
\eeq
We now extend \eqref{eq:ExpDecayReal1} to $\C^+$ using subharmonicity. We mention that similar arguments are given in \cite{asfh} and in \cite{nakano1}.
We note that the function
\beq\label{eq:subharm3}
 f_{jk}(z): = \E\left\{\left| \left\langle e_j, \left(H_L^N - z \right)^{-1} e_k\right\rangle \right|^s\right\} \geq 0,
\eeq
is subharmonic in the upper half-plane with boundary values that exist everywhere on $\R$ by the spectral averaging estimate \eqref{eq:matrixEle1}.
Thus, setting $z=x+iy$ with $y>0$ and using the Poisson kernel representation for harmonic functions in the upper half-plane, we have
\beq\label{eq:subharm4}
 f_{jk}(z)\leq \frac{1}{\pi}\int_\R f_{jk}(E) \frac{y}{(x-E)^2+y^2}\, dE.
\eeq
Using the localization bound which is now uniform in $E \in \R$, we obtain from \eqref{eq:ExpDecayReal1}- \eqref{eq:subharm4}
\beq
 f_{jk}(z) \leq \frac{1}{\pi} C_{L,s} e^{-\alpha_{L,s} |j-k|}   \int_\R ~ \frac{y}{(x-E)^2+y^2} = C_{L,s} e^{-\alpha_{L,s}|j-k|}.
\eeq
The estimate also holds for $y<0$ using the Poisson kernel for the lower half-plane.
\end{proof}


\section{Density of States}\label{sec:dos1}
\setcounter{equation}{0}

In this section, we define and review some properties of the local integrated density of states ($\ell$IDS), and the corresponding local DOS measure ($\ell$DOSm), and their $N \rightarrow \infty$ counterparts.   
As in from section \ref{subsec:mainResult1}, we let   $\{E_L^N(j)\}_{j=-N}^{N}$ denote the set of eigenvalues of the random matrix $H_L^N$. The {\it local integrated density of states} ($\ell$IDS) is the expectation of the normalized eigenvalue counting function:
\beq\label{eq:lids1}
 N_L^N(E) = :  \frac{1}{2N+1} \, \E \{  \#\{ j \, : \, E_L^N(j)\leq E \} \} .
\eeq
This is the distribution function of a point measure on $\R$ called the local DOS measure ($\ell$DOSm) denoted by $\nu_L^N$:
$$
N_L^N(E) :=  \int_{- \infty}^E ~ d \nu_L^N (s) .
$$ 
The integrated density of states (IDS) is the limit of the $\ell$IDS as $N \rightarrow \infty$:
\beq\label{eq:ids11}
 N_L ^\infty(E) = : \lim_{N \rightarrow \infty}  \frac{1}{2N+1} \, \E \{  \#\{ j \, : \, E_L^N(j)\leq E \} \} .
\eeq
and is the distribution function of the density of states measure (DOSm) $\nu_L$. 
When it exists, the derivative of the $\ell$IDS is the {\it local density of states function} ($\ell$DOSf):
\beq\label{eq:localDosf1}
n_L ^N(E) =: \frac{d}{dE} N_L^N (E).
\eeq
Similarly, the derivative of the IDS is the
 {\it density of states function} (DOSf):
\beq\label{eq:dosf1}
n_L^\infty (E) =: \frac{d}{dE} N_L^\infty  (E).
\eeq

Under Assumption 1 on the random variables, the DOSm and IDS exist. In section \ref{sec:ptwConvDOS1} we will prove that for $k \geq 2$ in Assumption 1, the DOSf exists and is $C^k$.


\subsection{Smoothness of the local IDS}\label{subsec:smoothIDS1}

In this section, we prove that because of the expectation in \eqref{eq:lids1} the $\ell$IDS is as smooth as the probablity density $\rho$. We present this short proof, using some ideas from \cite{dkm}, as the technique will be in the proof of Theorem \ref{thm:smoothDos1}.

\begin{prop}\label{prop:smoothLIDS1}
Let $H_L^N$ be an $iid$ random band matrix with probability density function $\rho$ satisfying Assumption 1 with support in $I_\rho$. For example, we make take $\rho$ to be Gaussian.  Then the local integrated density of states $N_L^N (E)$, defined in \eqref{eq:lids1}, is in $C^k(\R)$.
\end{prop}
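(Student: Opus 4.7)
My plan is to perform spectral averaging over one diagonal entry $v_{jj}$ at a time, then apply a change of variables that transfers the smoothness of $\rho$ to smoothness in $E$. Start from
$$N_L^N(E) = \frac{1}{2N+1}\sum_{j=-N}^N \E\bigl\{\mu^\omega_{e_j}((-\infty, E])\bigr\},$$
where $\mu^\omega_{e_j}$ is the spectral measure of $H_L^N$ at $e_j$. Fix $j$ and condition on the configuration $\omega^\perp$ of all random variables other than $v_{jj}$. Setting $u := (2L+1)^{-1/2} v_{jj}$ and $\tilde\rho(u) := \sqrt{2L+1}\,\rho(\sqrt{2L+1}\, u)$, we have $H_L^N = H^{(j)} + u P_j$, a rank-one perturbation in $u$, where $H^{(j)}$ has zero $(j,j)$-entry and $P_j = |e_j\rangle\langle e_j|$. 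Since $P_j \geq 0$, the sorted eigenvalues $\lambda_k(u)$ are non-decreasing in $u$, and by the Hellmann--Feynman formula $\partial_u \lambda_k = |\langle e_j, \psi_k(u)\rangle|^2 =: w_k(u)$, so $\mu^u_{e_j} = \sum_k w_k(u)\, \delta_{\lambda_k(u)}$.

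The change of variable $\lambda = \lambda_k(u)$, with $d\lambda = w_k(u)\, du$, on each monotone branch yields the key identity
$$\int_\R \tilde\rho(u)\, \mu^u_{e_j}((-\infty, E])\, du = \sum_k \int_{-\infty}^E \tilde\rho(u_k(\lambda))\, \mathbf{1}_{\lambda \in \operatorname{range}(\lambda_k)}\, d\lambda,$$
where $u_k := \lambda_k^{-1}$ on each branch. Summing over $j$ and interchanging integrals, $N_L^N(E) = \int_{-\infty}^E n_L^N(\lambda)\, d\lambda$, with
$$n_L^N(\lambda) = \frac{1}{2N+1}\sum_{j,k} \E_{\omega^\perp}\bigl\{\tilde\rho(u_k^{j, \omega^\perp}(\lambda))\, \mathbf{1}_{\lambda \in \operatorname{range}(\lambda_k^{j, \omega^\perp})}\bigr\}.$$

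It then suffices to show $n_L^N \in C^{k-1}(\R)$. For fixed $\omega^\perp$, interlacing of the eigenvalues of $H_L^N$ with those of the reduced matrix $Q H^{(j)} Q$ (with $Q = I - P_j$) tiles $\R$ by the ranges of the $\lambda_k$; on the interior of each interval $u_k(\lambda)$ is real-analytic, so by Assumption~1(i) the composition $\tilde\rho(u_k(\lambda)) \in C^k$ in $\lambda$ and vanishes as $\lambda$ approaches the boundary of the range (where $u_k(\lambda) \to \pm\infty$). The derivatives $\partial_\lambda^m[\tilde\rho(u_k(\lambda))]$ involve $\tilde\rho^{(m)}(u_k(\lambda))$ multiplied by combinations of derivatives of $u_k$, which by perturbation theory ($w_k(u) \sim 1/u^2$ at infinity for non-extremal $k$) grow polynomially in $u_k(\lambda)$ near range boundaries. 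Parts~(ii)--(iv) of Assumption~1 (integrability of $\rho^{(m)}$, polynomial moments of $\rho$, and the $L^\infty$ bound on $\langle\xi\rangle^m \hat\rho$) are precisely what is needed to make these pointwise derivatives uniformly integrable in $\omega^\perp$, so that dominated convergence permits differentiating up to order $k-1$ under $\E_{\omega^\perp}$. This yields $n_L^N \in C^{k-1}(\R)$, hence $N_L^N \in C^k(\R)$.

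The main technical obstacle is precisely this balance: the divergence of $u_k^{(m)}(\lambda)$ near eigenvalues of $QH^{(j)}Q$ must be compensated by the decay of $\rho^{(m)}$ at infinity, uniformly enough in $\omega^\perp$ to preserve integrability and continuity after the $\omega^\perp$-average.
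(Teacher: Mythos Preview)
Your route is far more complicated than the paper's, and the gap you flag at the end is real. The paper avoids rank-one perturbation theory entirely by shifting \emph{all} diagonal entries at once: write $N_L^N(E)=(2N+1)^{-1}\E\{\tr P_{(-\infty,0]}(H_L^N-EI)\}$, then substitute $\tilde v_{ii}=v_{ii}-cE$ for each diagonal variable (with $c$ absorbing the $\sqrt{2L+1}$ normalization). After this change the trace is a bounded, $E$-independent function of the new variables, and the entire $E$-dependence sits in the product $\prod_i\rho(\tilde v_{ii}+cE)$. Differentiating $k$ times under the integral is then immediate: the trace is bounded by $2N+1$, and $\rho^{(m)}\in L^1$ by Assumption~1(ii). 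No perturbation theory, no inverse branches, no balancing of growth against decay.

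By averaging over only one $v_{jj}$ at a time you are forced to control the inverse branches $u_k(\lambda)$, and the uniform-integrability step you defer is not routine. Near an eigenvalue $\nu$ of $QH^{(j)}Q$ with coupling constant $|c|^2=|\langle e_j,H^{(j)}q\rangle|^2$ for the corresponding eigenvector $q$, the Schur formula gives $u_k(\lambda)=\lambda+\Gamma(\lambda)\sim |c|^2/(\nu-\lambda)$ and $u_k'(\lambda)\sim u_k^2/|c|^2$, so $\partial_\lambda\tilde\rho(u_k)\sim\tilde\rho'(u)\,u^2/|c|^2$. Even for Gaussian $\rho$ the supremum over $\lambda$ of this quantity is of order $|c|^{-2}$, and $|c|^2$ is an $\omega^\perp$-random quantity that can be arbitrarily small; nothing in Assumption~1 controls $\E_{\omega^\perp}\{|c|^{-2}\}$. (The $L^1_\lambda$-norm of the first derivative \emph{is} harmless---undoing the substitution gives exactly $\|\tilde\rho'\|_1$---but that yields neither the pointwise $C^{k-1}$ control you claim nor an argument that iterates to higher derivatives, where $u_k''$ and worse appear.) The simultaneous-shift trick sidesteps all of this because the only random factor left after isolating the $E$-dependence is the trace, which is deterministically bounded.
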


\begin{proof}
As in section \ref{subsec:WMbounds1}, we let ${P}_{(-\infty, E]}(H_L^N)$ be the spectral projection for $H_L^N$ and the interval $(-\infty, E].$  By a simple change of the spectral parameter, we may write the $\ell$IDS as
\bea\label{eq:changeVariable1}
N_L^N(E) & = & \frac{1}{2N+1}\E\left\{\tr{P}_{(-\infty, E]}(H_L^N)\right\} \nonumber \\
 & = & \frac{1}{2N+1} \E \left\{ \tr {P}_{(-\infty,0]}(H_L^N- E I_N)\right\}  ,
\eea
where $I_N$ is the identity matrix. We define new diagonal random variables by $\tilde{v}_{ii} := v_{ii} - E$, for $i = - N , \ldots, N$. These new random variables have a density $\rho (\tilde{v}_{ii} + \lambda)$.
In terms of these new random variables, the $m^{th}$ derivative, for $1 \leq m \leq k$, of the $\ell$IDS may be written as
\bea\label{eq:changeVariable2}
 \left(\frac{d^m  N_L^N}{d E^m }\right) (E)  & = &
 \frac{1}{2N+1} \frac{d^m}{d E^m} \left( \prod_{i=-N}^N   \int_{I_\rho- E} d \tilde{v}_{ii} ~ \rho ( {\tilde{v}}_{ii} + E) \right)  \nonumber \\
 & &  \times \prod_{\substack{ i<j: \\ |i-j|\leq L}}\int_{I_M} ~ dv_{ij} \rho (v_{ij})
 ~\tr {P}_{(-\infty, 0]} \left(H_L^N \right).
\eea
The integral over the diagonal random variables $\tilde{v}_{ii}$ is the convolution of a Schwartz function with the trace averaged over the off-diagonal random variables $v_{ij}$, $i \neq j$. As this is an $L^1$ function of $\{ \tilde{v}_{ii}, ~i = -N, \ldots, N \}$, we can bring the derivative onto the product of the probability densities of the diagonal random variables $\rho(\tilde{v}_{ii}+ E)$. The boundary terms vanish as $\rho \in C^k_\infty (\R)$. 
\end{proof}

Unfortunately, the above proof of existence does not give us uniform estimates on the derivatives in either the dimension $N$ or band width $L$.  We do have the following corollary of the Wegner Estimate (Theorem 2 with $m=1$). \\

\begin{corollary}\label{corollary:ldosf1}
Let $H_L^N$ be as in Proposition \ref{prop:smoothLIDS1}.  Then the local density of states function $n_L^N (E) \in C^{k-1} (\R)$ if $\rho \in C^k(\R)$. Furthermore, the $\ell$DOS is  uniformly bounded in $N$.
\end{corollary}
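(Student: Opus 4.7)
The $C^{k-1}$ regularity is immediate from what has already been established. The plan is to note that $n_L^N(E) = \frac{d}{dE}N_L^N(E)$, and by Proposition \ref{prop:smoothLIDS1}, $N_L^N \in C^k(\R)$, so its derivative $n_L^N$ lies in $C^{k-1}(\R)$.

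For the uniform-in-$N$ bound, the plan is to extract the bound from the Wegner estimate of Proposition \ref{prop:evCorrelations1}. Given $E \in \R$ and $h>0$, applying \eqref{eq:wegnerEst1} to the interval $I=(E,E+h]$ yields
\beq\label{eq:ldosfProof1}
 N_L^N(E+h) - N_L^N(E) = \frac{1}{2N+1}\E\{\tr P_{(E,E+h]}(H_L^N)\} \leq \pi\|\rho\|_\infty L^{1/2}\, h,
\eeq
so $N_L^N$ is Lipschitz with constant $\pi\|\rho\|_\infty L^{1/2}$, independent of $N$. Since $n_L^N$ exists and is continuous by the first part, dividing \eqref{eq:ldosfProof1} by $h$ and sending $h\downarrow 0$ gives
\beq\label{eq:ldosfProof2}
 \sup_{E\in\R} n_L^N(E) \leq \pi \|\rho\|_\infty L^{1/2},
\eeq
uniformly in $N$, which is the desired bound.

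There is no real obstacle: the first claim is a one-line consequence of Proposition \ref{prop:smoothLIDS1}, while the uniform bound is a standard reduction of pointwise regularity of a monotone distribution function to a Lipschitz bound, here supplied by Wegner. The only mild point to watch is that the Lipschitz constant depends on the bandwidth through $L^{1/2}$, as was already flagged at the start of section \ref{subsec:WMbounds1}; this is acceptable in the fixed-width regime treated in this paper.
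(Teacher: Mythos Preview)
Your proof is correct and follows essentially the same approach as the paper: the regularity is inherited from Proposition~\ref{prop:smoothLIDS1}, and the uniform bound is obtained by applying the Wegner estimate \eqref{eq:wegnerEst1} to a short interval and passing to the limit $h\downarrow 0$. The paper's version records the bound as $2\pi\|\rho\|_\infty L^{1/2}$ rather than $\pi\|\rho\|_\infty L^{1/2}$, but this is an inessential constant.
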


\begin{proof}
Smoothness in $E$ follows from Proposition  \ref{prop:smoothLIDS1}. As for the uniform bound in $N$, we note that
for any energy $E$ and integer $N$,
\bea\label{eq:Nbdd1}
n_L^N(E)   & =    & \lim_{h\to 0} \frac{1}{h} [ N_L^N( E +h) - N_L^N( E) ]  \nonumber \\
   &  = &  \left( \frac{1}{2N+1} \right) ~  \lim_{h\to 0} \frac{1}{h} \E\{  P_{[E,E+h]} (H_L^N) \}     \nonumber \\
    & \leq  & 2 \pi \| \rho \|_\infty L^{\frac{1}{2}} ,
\eea
where the Wegner estimate,  Proposition \ref{prop:evCorrelations1},  was used to obtain the last inequality.
\end{proof}

\noindent
If the random variables are Gaussian, it was proved in \cite{psss} that the factor of $L^{1/2}$  in \eqref{eq:Nbdd1} may be replaced by one, see Theorem \ref{thm:psss1}. 

\vspace{.1in}

We will use the following representation for the density of states function:

\begin{prop}\label{prop:dosDefn1}
For all $E \in \R$, the $\ell$DOS function has the representation
\beq\label{eq:dosDefn1}
 n_L^N(E) = \frac{1}{2N+1} \frac{1}{\pi} \sum_{j=-N}^N  \lim_{\eps\rightarrow 0} \E  \{\Im \langle e_j, (H_L^N- E - i \eps )^{-1} e_j \rangle  \}.
\eeq
\end{prop}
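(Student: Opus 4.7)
The approach is to identify the right-hand side as the boundary value of the Stieltjes transform of the $\ell$DOS measure $\nu_L^N$ from the upper half-plane and invoke the regularity already established in Corollary \ref{corollary:ldosf1}.

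First I would use the spectral resolution $H_L^N = \sum_{k=-N}^N E_L^N(k) P_k$, where $P_k$ is the rank-one projection onto the $k$-th normalized eigenvector. Using $\sum_{j=-N}^N \langle e_j, P_k e_j\rangle = \tr P_k = 1$, the diagonal sum collapses to a spectral sum:
\[
\sum_{j=-N}^N \Im \langle e_j, (H_L^N - E - i\eps)^{-1} e_j \rangle = \sum_{k=-N}^N \frac{\eps}{(E_L^N(k) - E)^2 + \eps^2}.
\]
Taking expectation and dividing by $(2N+1)$, the definition of $\nu_L^N$ from \eqref{eq:lids1} gives
\[
\frac{1}{(2N+1)\pi} \sum_{j=-N}^N \E\{\Im \langle e_j, (H_L^N - E - i\eps)^{-1} e_j \rangle\} = \int_\R P_\eps(s - E) \, d\nu_L^N(s),
\]
where $P_\eps(x) := (\eps/\pi)(x^2 + \eps^2)^{-1}$ is the Poisson kernel.

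Second, by Proposition \ref{prop:smoothLIDS1} and Corollary \ref{corollary:ldosf1}, $d\nu_L^N(s) = n_L^N(s)\,ds$ with $n_L^N$ continuous and bounded. Since $\{P_\eps\}_{\eps > 0}$ is a standard approximation to the identity, the integral converges to $n_L^N(E)$ as $\eps \downarrow 0$, establishing the identity with a single outer limit.

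Third, to recover the form stated in the proposition, with the limit inside the sum over $j$, I would observe that for each fixed $j$ the per-site expected spectral measure $\mu^{(N,j)}(A) := \E\{\langle e_j, P_A(H_L^N)\, e_j\rangle\}$ is absolutely continuous with a continuous density at $E$: this follows from the rank-one Schur-complement averaging used in Proposition \ref{prop:smoothLIDS1}, applied only to the diagonal variable $v_{jj}$. Hence $\lim_{\eps\downarrow 0}\E\{\Im \langle e_j, (H_L^N - E - i\eps)^{-1} e_j\rangle\}$ exists individually for each $j$, and since the sum over $j$ is finite, limit and sum commute, yielding \eqref{eq:dosDefn1}. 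I do not expect a significant obstacle here; the argument is essentially bookkeeping around the Poisson kernel together with the smoothness of the $\ell$DOS already in hand, and the spectral averaging bound \eqref{eq:ImMatrixEle1} provides the uniform control in $\eps$ that makes all manipulations legitimate.
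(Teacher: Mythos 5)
Your proposal is correct, but it follows a genuinely different route from the paper. The paper's proof goes through Stone's formula for the increment $N_L^N(E+h)-N_L^N(E)$, applies the Dominated Convergence Theorem to interchange the $\eps$-limit with the $\lambda$-integral, and then invokes the Lebesgue Differentiation Theorem (licensed by the regularity in Proposition \ref{prop:smoothLIDS1}) to differentiate in $E$. You instead observe that the sum over the canonical basis collapses to a trace, so that the quantity in question is exactly the Poisson integral of the $\ell$DOS measure $\nu_L^N$; since Proposition \ref{prop:smoothLIDS1} and Corollary \ref{corollary:ldosf1} already establish that $\nu_L^N$ has a continuous bounded density $n_L^N$, the approximation-to-the-identity property of the Poisson kernel gives the limit directly, with no differentiation step at all. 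This is a cleaner way to obtain the outer limit, and it makes the role of the Poisson kernel explicit rather than implicit in Stone's formula. Both arguments ultimately rest on the same established regularity of $n_L^N$, and both need (but neither belabors) the termwise existence of $\lim_{\eps\downarrow 0}\E\{\Im\langle e_j,(H_L^N-E-i\eps)^{-1}e_j\rangle\}$ in order to move the limit inside the finite sum over $j$; your step via the per-site expected spectral measure and the Schur-complement averaging over $v_{jj}$ supplies this correctly, and is in fact slightly more careful on this point than the paper's own proof.
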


\begin{proof}
From Stone's Formula for spectral projectors, the $\ell$IDS
may be expressed as
\[ N_L^N(E+h)-N_L^N(E)
=\frac{1}{\pi} \frac{1}{2N+1} \sum_{j = -N}^N  \lim_{\eps \rightarrow 0} \int_{E}^{E+h}  \E \{ \Im \langle e_j, (H_L^N- \lambda - i \eps )^{-1} e_j \rangle \} \, d\lambda.
\]
By the Dominated Convergence Theorem, we can bring the $\eps$ limit inside of the $\lambda$ integral so that 
\bea\label{eq:ldos3}
n_L^N(E)  &=  & \lim_{h\to 0}\frac{1}{h} [ N_L^N(E+h)-N_L^N(E) ] \nonumber \\
   & = & \lim_{h\to 0} \frac{1}{\pi}  \frac{1}{2N+1} \sum_j \frac{1}{h} \int_{E}^{E+h} \lim_{\eps \rightarrow 0} \E \{ \Im \langle e_j, (H_L^N- \lambda - i \eps )^{-1} e_j \rangle \} \, d\lambda.
\eea
The representation then follows by the Lebesgue Differentiation Theorem that can be applied because of the regularity of the $\ell$DOSf stated in Proposition \ref{prop:smoothLIDS1}.
\end{proof}



\subsection{Infinite-volume density of states}\label{subsec:InfDOS1}

Since the band width $L$ is fixed as $N$ increases, the matrices $H_L^N$ have a natural limiting operator $H_L^\infty$ on $\ell^2(\Z)$, which is defined by its matrix elements in the same way as $H_L^N$. This real symmetric operator $H_L^\infty$
has matrix elements given by
\beq\label{eq:infDos0}
\langle e_i,  H^\infty_L e_j \rangle =\frac{1}{\sqrt{2L+1}}\left\{\begin{array}{ccc}
v_{ij} & if & |i-j| \leq L \\
0  & if & |i-j| > L\end{array}\right.
\eeq
with $\langle e_i,  H^\infty_L e_j \rangle = \langle e_j,  H^\infty_L e_i \rangle$, for
$-\infty \leq i,j \leq \infty.$
In this case, the local matrix $H_L^N= \chi_{[-N,N]} H_L^\infty \chi_{[-N,N]}$ is the restriction of $H_L^\infty$ to a box with simple boundary conditions.

The {\it infinite volume density of states function} $n_L^\infty(E)$ is defined as follows.
Since the infinite-volume operator $H_L^\infty$ is ergodic under translation in $\Z$, the Birkhoff Ergodic Theorem
gives
\beq\label{eq:infDos1}
\lim_{N\to\infty}
\frac{1}{2N+1}\sum_{j=-N}^N  \E \{ \Im \langle e_j, \left(H_L^\infty - z \right)^{-1} e_j \rangle \} =
\E \{ \Im  \langle e_0, \left(H_L^\infty - z \right)^{-1}e_0 \rangle \}  .
\eeq
So for almost every energy $E$, we define the DOSf as
\beq\label{eq:infDos2}
n_L^\infty(E) := \lim_{\eps \rightarrow 0} \frac{1}{\pi}  \E \{ \Im  \langle e_0, \left(H_L^\infty - E - i \eps \right)^{-1}e_0 \rangle \}  .
\eeq
In section \ref{subsec:unifConvDOS1}, we will prove the uniform convergence of the $\ell$DOSf to the DOSf defined in \eqref{eq:infDos2}.

\begin{remark}\label{remark:SclIds1}
The structure of the IDS $N_L^\infty (E)$ was studied by Bogachev, Molchanov, and Pastur \cite{bmp} for $d=1$ RBM. These authors characterized the IDS for different growth rates of the bandwidth $b_N$: 1)  $b_N \rightarrow \infty$ as $N \rightarrow \infty$, and 2) $b_N =  b_0 \geq 1$, a constant.    They proved the following theorem that applies to fixed bandwidth RBM \cite[section 6]{bmp}.

\begin{theorem}\cite[Theorem 5]{bmp}\label{thm:BMPids1}
For a real symmetric fixed bandwidth RBM $H_L^N$ under Assumption 1 and the additional condition that the density $\rho$ be even, with $\E \{ v_{ij} \} = 0$, the moments of the 
$\ell$DOSm $\nu_L^N$, defined by
\beq\label{eq:moments1}
\mu_{N,L}^{(p)}   := \int E^p d \nu_L^N (E)  ,
\eeq
converge as $N \rightarrow \infty$ to the moments of  a measure $\nu_L$. The odd moments vanish and the even moments satisfy
$$
\mu_{L}^{2p} : =  \int E^{2p} ~ d \nu_L (E) = \mu_{SCL}^{2p} + \mathcal{O}(L^{-1}) .
$$
where $\nu_{SCL}$ is the measure associated with the semicircle law (SCL):
$$
d\nu_{SCL} (E) = \frac{1}{2 \pi} ( 4 - E^2)_+^{\frac{1}{2}}~dE  .
$$
Consequently, the DOSm $\nu_L$ converges  to $\nu_{SCL}$ as $L \rightarrow \infty$ in the sense of moment convergence as $L \rightarrow \infty$.
Furthermore, if the support of the random variables is noncompact, then the support of $\nu_L$ is unbounded. 
\end{theorem}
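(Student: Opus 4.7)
The plan is to prove the theorem by the classical method of moments, adapted to the fixed-bandwidth scaling. The starting point is the trace representation
\[
\mu_{N,L}^{(p)} \;=\; \frac{1}{2N+1}\,\E\bigl\{\tr(H_L^N)^p\bigr\}
\;=\; \frac{1}{(2N+1)(2L+1)^{p/2}} \sum_{(i_0,i_1,\dots,i_{p-1})} \E\Bigl[v_{i_0 i_1} v_{i_1 i_2}\cdots v_{i_{p-1} i_0}\Bigr],
\]
where the sum is over closed walks $i_0 \to i_1 \to \cdots \to i_{p-1} \to i_0$ in $\{-N,\dots,N\}$ with jumps $|i_k - i_{k+1}|\leq L$. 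Since the entries $v_{ij}$ are independent (up to symmetry), each expectation factors according to how many times each unordered pair $\{i,j\}$ occurs along the walk.

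\textbf{Odd moments.} Because $\rho$ is even with zero mean, every $v_{ij}$ has symmetric distribution and all odd cumulants vanish. Hence the expectation of the product is zero unless every edge $\{i_k,i_{k+1}\}$ appears an even number of times along the walk. For odd $p$ this is impossible (the total edge count is odd), so $\mu_{N,L}^{(p)} = 0$ for all odd $p$, and the limit $\mu_L^{(p)} = 0$ follows.

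\textbf{Even moments.} For $p = 2q$ the contributing walks partition into equivalence classes indexed by how the $2q$ edges are paired. Walks where each edge is traversed exactly twice are in bijection with rooted plane trees on $q+1$ distinct vertices, and by a standard counting argument (cf.\ Wigner) the number of plane trees with $q$ edges is the Catalan number $C_q$. For each such tree, the count of admissible index assignments with $|i_k-i_{k+1}|\leq L$ is
\[
(2N+1)\,(2L+1)^q \,\bigl(1 + \mathcal{O}(L^{-1})\bigr),
\]
where the main factor comes from choosing a root in $\{-N,\dots,N\}$ and then choosing $q$ independent jumps of size $\leq L$, and the error accounts for the constraint that distinct vertices of the tree land on distinct sites of $\Z$. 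Because $\E\{v_{ij}^2\} = 1$ (assumed by the normalization), each tree contributes $C_q \cdot 1$. Walks with at least one edge repeated more than twice have at most $q$ distinct vertices, so after the normalization $(2N+1)(2L+1)^q$ in the denominator their contribution is $O(L^{-1})$ uniformly in $N$. Taking $N \to \infty$ first yields
\[
\mu_L^{(2q)} \;=\; C_q \,+\, \mathcal{O}(L^{-1}) \;=\; \mu_{\mathrm{SCL}}^{(2q)} \,+\, \mathcal{O}(L^{-1}),
\]
since $C_q = \mu_{\mathrm{SCL}}^{(2q)}$ is the $(2q)$-th moment of $\nu_{\mathrm{SCL}}$. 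Finite-moment conditions (Assumption 1, item (3)) justify the interchange of sum and expectation for every fixed $p$; exponential moment bounds are not required because we work at fixed $p$ and $L$. Moment convergence of $\nu_L \to \nu_{\mathrm{SCL}}$ as $L\to\infty$ then follows, and since the limit measure $\nu_{\mathrm{SCL}}$ is determined by its moments (compact support), one also obtains weak convergence.

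\textbf{Noncompactness of $\mathrm{supp}\,\nu_L$.} If $\rho$ has unbounded support, then for any $M > 0$ the probability that $v_{00} > (2L+1)^{1/2}(M+1)$ is positive. On that event the test vector $e_0$ yields $\langle e_0, H_L^\infty e_0\rangle = v_{00}/\sqrt{2L+1} > M+1$, and the remaining off-diagonal band contributes a bounded perturbation (uniformly in the diagonal entry) by a rank-one Weyl/min-max argument. Hence with positive probability $H_L^\infty$ has spectrum above $M$; by ergodicity this occurs almost surely, so $[M,\infty)\cap\mathrm{supp}\,\nu_L \neq \emptyset$ for every $M$.

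\textbf{Main obstacle.} The delicate point is the quantitative $\mathcal{O}(L^{-1})$ error bound in the even-moment count. One must separately treat (i) walks in which two tree-vertices are forced to collide, losing one factor of $(2L+1)$, and (ii) boundary walks where $i_k$ approaches $\pm N$ and not all $(2L+1)$ jump values are available; the latter contributes at most $O(L/N)$ per walk and vanishes in the $N\to\infty$ limit. Keeping the $L$-dependence uniform while taking $N\to\infty$ is the main bookkeeping challenge, but it is a finite combinatorial problem at each fixed $p$ and is not affected by the unboundedness of $\mathrm{supp}\,\rho$ thanks to the finite-moment hypotheses of Assumption 1.
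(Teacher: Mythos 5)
The paper does not supply a proof of this statement; it is quoted verbatim as \cite[Theorem 5]{bmp} inside Remark \ref{remark:SclIds1}, with a pointer to \cite[Section 6]{bmp}, so there is no in-paper argument to compare against. Your sketch of the moment method (closed walks of bandwidth $\leq L$, vanishing of odd moments by symmetry, Catalan counting of doubly-traversed-edge walks, relegation of collisions and higher-multiplicity edges to the $\mathcal{O}(L^{-1})$ remainder) is indeed the classical Wigner route that underlies the BMP result, and is the right skeleton of a proof.

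Two remarks on the gaps. First, as you note yourself, the substance of the theorem is the uniform-in-$N$ quantification of the error as $\mathcal{O}(L^{-1})$, and that is precisely the part you do not carry out: one must verify, for each fixed $q$, that (a) walk classes in which the plane tree has $\leq q$ distinct images in $\Z$ lose at least one factor of $(2L+1)$, (b) walk classes with some edge traversed $\geq 4$ times contribute $\mathcal{O}(L^{-1})$ after normalization (this requires the fourth-moment bound from Assumption 1 to dominate the combinatorial count), and (c) boundary walks contribute $\mathcal{O}(qL/N)\to 0$ at fixed $q,L$. Sketching this as a "finite combinatorial problem" is honest, but a full proof must perform the count. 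Second, in the noncompactness paragraph, the phrase "the remaining off-diagonal band contributes a bounded perturbation" is both unnecessary and false as stated, since the off-diagonal $v_{0j}$ have the same unbounded law as $v_{00}$. You do not need it: $e_0$ lies in $\dom(H_L^\infty)$ (its row has only $2L+1$ nonzero entries), so the variational inequality $\sup\sigma(H_L^\infty)\geq \langle e_0, H_L^\infty e_0\rangle = v_{00}/\sqrt{2L+1}$ already shows that $\{\sup\sigma(H_L^\infty)>M\}$ has positive probability for every $M$; ergodicity then upgrades this to a.s.\ and identifies $\mathrm{supp}\,\nu_L$ with the a.s.\ spectrum, giving unboundedness. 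Drop the perturbation clause and the argument is clean.
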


The theorem states that the DOSm $\nu_L$ converges  to $\nu_{SCL}$ as $L \rightarrow \infty$ in the sense of moment convergence so the tails of the support of the measure $\nu_L$ vanish in this limit.
For a Gaussian probability density $\rho$, this theorem indicates that the support of the DOSm $\nu_L$ is the real line and resembles the profile of the semicircle law near $[-2,2]$ with long tails outside of this set that vanish as $L \rightarrow \infty$.

\end{remark}

\section{A fractional moment bound for fixed-width random band matrices}\label{sec:fracMoment1}
\setcounter{equation}{0}

As above, we let $H_L^N$ denote a $(2N+1) \times (2N+1)$ real symmetric matrix (an extension to hermitian matrices follows the same arguments) as described in \eqref{eq:RBMdefn1}. We label the rows and columns $(i,j)$, with $-N \leq i,j \leq N$, and the upper left entry has indices $(-N, -N)$. 
 By the band width of $H_L^N$, we mean that there are $(2L + 1)$ non-trivial entries in any row or column of $H_L^N$. We always consider $0 \leq L \ll N$. For example, in the $i^{\rm th}$-row, the column index $j$ for nontrivial entries lies in the range:
\beq\label{eq:columnIndex1}
\max \{-N, i - L \} \leq j \leq \min \{ i+L, N \}, 
\eeq
 and similarly for columns, and $[H_L^N]_{ij} = 0 $ if $|i - j| > L$.
As in section \ref{sec:fundamentals1}, for many arguments, it is sufficient to assume that only the diagonal matrix elements $[H_L^N]_{ii} = v_{ii}$ are random variables as in Assumption 1. 
%
The off-diagonal entries of $H_L^N$ need not be random so that fixed-width real symmetric matrices with diagonal randomness are included in the class of matrices for which our results hold.  These models include the Anderson model on the $d$-dimensional lattice and its higher-rank generalizations (see \cite{dkm}). 


For $N >  M \gg L \geq 0$, we need to compare compare $(2N+1) \times (2 N +1)$ and $(2M+1) \times (2 M +1)$ real symmetric matrices.
We write $\C^{2N+1}$ for the $(2N+1)$ dimensional vector space with standard orthonormal basis $e_j$ for $-N \leq j \leq N$. For $N > M$, we naturally embed $\C^{2M+1}$ into $\C^{2N+1}$ using this basis $e_j$ for $-M \leq j \leq M$. Similarly, we embed $(2M+1) \times (2M+1)$ matrices into $(2N+1) \times (2N +1)$ matrices using this labeling of the basis vectors. In this way, the zeroth row and zeroth column are preserved for $N > M \gg L \geq 0$. Finally, we denote by $A^{-1}$ the inverse of the matrix $A$ on its range and, if considered embedded in a large space, all other matrix elements are set equal to zero. 


In the following theorem,  the probability density $\rho$ may have noncompact support,
removing the restriction in \cite{dkm}. The following theorem generalizes and simplifies \cite[Theorem 2.2]{dkm}.

\begin{theorem}\label{thm:fractionalMoment1}
Suppose that the probability density $\rho$ satisfies Assumption 1.
For any $0<s<1$, integers $N > M \gg L \geq 0$, $\epsilon > 0$, and $E \in \R$, there exists a finite constant $C_{s,\rho} > 0$, 
so that for $-M +L \leq j \leq M - L$,
\bea\label{eq:fracMoment2}
\lefteqn{ \left|\E \left\{\left\langle e_j, \left(H_L^N -E - i \eps\right)^{-1} e_j \right\rangle - \left\langle e_j, \left(H_L^M -E - i \eps\right)^{-1} e_j \right\rangle \right\} \right| } \nonumber \\
 & \leq  & C_{s, \rho} \E \left\{ \left| \left\langle \Psi_j, \left[\left(\widetilde{H}_L^N(j) -E-i\eps\right)^{-1} - \left(\widetilde{H}_L^M(j)-E-i\eps\right)^{-1}\right]  \Psi_j
\right\rangle\right|^s \right\},
\eea
where $\widetilde{H}_L^N(j)$ and $\widetilde{H}_L^M(j)$ are $(2N+1) \times (2N+1)$, respectively, $(2M+1) \times (2M+1)$ matrices, constructed from $H_L^N$ and $H_L^M$, respectively, by setting all the entries in the $j^{\rm th}$-row and $j^{\rm th}$-column equal to zero.
The random vector $\Psi_j \in \R^{2N+1}$ has non-zero entries occurring only for indices between $j-L$ and $j+L$ and with the $j^{\rm th}$-entry equal to zero. 
\end{theorem}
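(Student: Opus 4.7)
\emph{Proof plan.} The plan is to apply the rank-one Schur complement formula (Lemma~\ref{lemma:schur1}) with $P=P_j$ the orthogonal projection onto $e_j$ and $Q_j:=I-P_j$ to the $jj$-matrix element of both resolvents. This gives
\beq
\langle e_j,(H_L^N-z)^{-1}e_j\rangle \;=\; \frac{1}{v_{jj}-z-\Gamma_j^N},\qquad
\Gamma_j^N \;:=\; \bigl\langle \Psi_j,(\widetilde{H}_L^N(j)-z)^{-1} \Psi_j\bigr\rangle,
\eeq
where $\Psi_j := Q_j H_L^N e_j$ has $i$-th entry $v_{ij}/\sqrt{2L+1}$ for $|i-j|\le L$, $i\ne j$, and vanishes otherwise. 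Under the hypothesis $-M+L\le j\le M-L$ the support of $\Psi_j$ is contained in $[-M,M]$, so $\Psi_j$ is canonically the same vector in both the $N$- and $M$-ambient spaces; the analogous identity for $H_L^M$ then yields
\beq
\Gamma_j^N - \Gamma_j^M \;=\; \bigl\langle \Psi_j,\,\bigl[(\widetilde{H}_L^N(j)-z)^{-1} - (\widetilde{H}_L^M(j)-z)^{-1}\bigr]\Psi_j\bigr\rangle,
\eeq
which is exactly the random scalar on the right-hand side of \eqref{eq:fracMoment2}.

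Since $\Gamma_j^N$ and $\Gamma_j^M$ are both independent of $v_{jj}$, I would next integrate $v_{jj}$ out first via Fubini. Writing the Cauchy transform of $\rho$ as $F(w):=\int_\R \rho(v)/(v-w)\,dv$, this gives
\beq
\E\bigl\{\langle e_j,(H_L^N-z)^{-1}e_j\rangle - \langle e_j,(H_L^M-z)^{-1}e_j\rangle\bigr\} \;=\; \E_{v_{jj}^\perp}\bigl\{F(z+\Gamma_j^N) - F(z+\Gamma_j^M)\bigr\},
\eeq
and both arguments of $F$ lie in $\C^+$ because $\Im\Gamma_j^{N/M}\ge 0$ for $z\in\C^+$. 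The theorem therefore reduces to a uniform H\"older estimate
\beq\label{eq:plan-holder}
|F(w_1)-F(w_2)|\;\le\;C_{\rho,s}\,|w_1-w_2|^s,\qquad w_1,w_2\in\overline{\C^+},\ s\in(0,1),
\eeq
since taking absolute values and expectations then delivers \eqref{eq:fracMoment2}.

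To obtain \eqref{eq:plan-holder} I plan to interpolate a uniform $L^\infty$ bound against a uniform Lipschitz bound on $F$. For boundedness, $\Im F$ is $\pi$ times the Poisson extension of $\rho$ (bounded by $\pi\|\rho\|_\infty$) and $\Re F$ is the conjugate Poisson extension of the Hilbert transform $H\rho$; the latter sits in $L^\infty(\R)$ because Assumption~1(4) with $k\ge 2$ forces $\hat\rho\in L^1(\R)$. For Lipschitz continuity, Assumption~1(1)--(3) makes integration by parts legal with vanishing boundary terms, giving $F'(w)=\int_\R \rho'(v)/(v-w)\,dv$, and the same Poisson/Hilbert transform analysis applied to $\rho'$ (which is itself Lipschitz with $L^1$ decay, since $\rho''\in L^1\cap L^\infty$) bounds $F'$ uniformly on $\C^+$. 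Writing $K:=\|F\|_\infty$ and $\Lambda:=\|F'\|_\infty$, the elementary geometric-mean inequality
\beq
|F(w_1)-F(w_2)|\;\le\;\min\bigl(2K,\,\Lambda|w_1-w_2|\bigr)\;\le\;(2K)^{1-s}\Lambda^s|w_1-w_2|^s
\eeq
gives \eqref{eq:plan-holder} with $C_{\rho,s}=(2K)^{1-s}\Lambda^s$. The hard part will be to ensure boundedness and Lipschitz continuity \emph{uniformly up to} the real axis when $\rho$ has noncompact support; this is precisely the role of the Fourier-decay, $L^1$-integrability, and moment conditions in Assumption~1 and the point at which the argument departs from---and simplifies---the compactly-supported framework of \cite{dkm}.
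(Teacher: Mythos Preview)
Your plan is correct, and the overall architecture---Schur complement, isolate $v_{jj}$, then reduce to a uniform H\"older estimate for the Cauchy transform $F(w)=\int \rho(v)/(v-w)\,dv$ on $\overline{\C^+}$---is exactly what the paper does. The difference lies only in how the H\"older bound \eqref{eq:plan-holder} is obtained. The paper does \emph{not} integrate out $v_{jj}$ first; instead it inserts the semigroup representation $1/(v_{jj}-w)=i\int_0^\infty e^{-i\lambda(v_{jj}-w)}\,d\lambda$ (Lemma~\ref{lemma:ResolvInt1}), then integrates $v_{jj}$ to produce $\widehat{\rho}(\lambda)$, and finally uses the Duhamel bound $|e^{i\lambda w_1}-e^{i\lambda w_2}|\le 2^{1-s}\lambda^s|w_1-w_2|^s$ (Lemma~\ref{lemma:duhamel2}) together with $|\widehat{\rho}(\lambda)|\lesssim\langle\lambda\rangle^{-2}$ to make $\int_0^\infty \lambda^s|\widehat\rho(\lambda)|\,d\lambda$ converge. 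In effect the paper proves \eqref{eq:plan-holder} on the Fourier side, with the interpolation trick buried inside Lemma~\ref{lemma:duhamel2}. Your route is purely real-variable: you bound $F$ and $F'$ directly via Poisson/conjugate-Poisson kernels (the key observation being that $\rho'\in L^1$ with $\rho''\in L^\infty$ makes the conjugate Poisson integral of $\rho'$ uniformly bounded by a near--far splitting), and then interpolate $\min(2K,\Lambda|w_1-w_2|)$. This avoids the semigroup representation entirely and makes the regularity requirements on $\rho$ perhaps more transparent, at the cost of a small amount of extra real-variable work; the paper's Fourier argument is shorter once Lemmas~\ref{lemma:ResolvInt1}--\ref{lemma:duhamel2} are in hand and makes the role of condition~(4) in Assumption~1 explicit.
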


\begin{proof}
1. First we use the the Schur complement formula presented in Lemma \ref{lemma:schur1}  with $P=|e_j\rangle \langle e_j|$, the rank-one  projection onto the vector $e_j$, in order to obtain an expression for
$P(H_L^N -z)^{-1}P$, where $z = E + i \eps$,  in which the random variable $v_{jj}$ is explicit. We have
\bea\label{eq:schur2}
{ P (H_L^N -z )^{-1} P } &=& [  P (H_L^N-z) P  \nonumber  \\
 & &   - P H_L^N Q  \left(Q \left(H_L^N-z\right) Q \right)^{-1} Q H_L^N P ]^{-1} .
\eea
To analyze this formula, we note that
\beq\label{diagonal1}
P \left(H_L^N \right) P = \left[H_L^N\right]_{jj} P =: v_{jj} (N)P ,
\eeq
and that the matrix $Q H_L^N P$ is the matrix whose $j^{\rm th}$ column is the $j^{\rm th}$ column of $H_L^N$ with the ${jj}^{\rm th}$-entry set equal to zero. That is, the only nonzero entries of $Q H_L^N P$ are in the $j^{\rm th}$ column with indices $i$ satisfying $j-L \leq i \leq j+L$ and $i \neq j$. 
In particular,
 this matrix $Q H_L^N P$ is independent of $v_{jj}(N)$. This matrix has the form
\beq\label{eq:randomVec1}
 Q H_L^N  P = \sum_{\substack{i= j-L,\dots,j+ L \\ i \neq j}}
\left[H_L^N\right]_{ij}  e_i  \otimes e_j ,
\eeq
where $\otimes$ is the outer product of vectors. 
We can interpret this matrix as a column vector which we denote by
$$
\widetilde{ \Psi}_j(N ) : = Q  H_L^N P e_j ,
$$
that is also independent of $v_{jj}(N)$.
 The operator $Q H_L^N Q$  occurring in the Schur complement formula \eqref{eq:schur2}  will be denoted by $\widetilde{H}_L^N(j)$. This $(2N +1) \times (2N + 1)$ matrix is obtained from $H_L^N$ by setting the entries in the $j^{\rm th}$-row and the $j^{\rm th}$-column equal to zero. 
Since $P = P_j$ is the rank-one projection onto $e_j$, we may write \eqref{eq:schur2} as
\beq\label{eq:schur3}
\left\langle e_j, \left(H_L^N -E - i \eps\right)^{-1} e_j \right\rangle
= \left[ v_{jj} - E - i \eps - \Psi_j^T \left(\widetilde{H}_L^N(j) - E-i\eps \right)^{-1} \Psi_j \right] ^{-1}.
\eeq
where $\Psi_j^T$ is the transpose of the column vector $\Psi_j$.


\noindent
2. We now consider integers $0 \leq L \ll M < N$, and the two matrices $H_L^N$ and $H_L^M$ coming from the same sample of random variables. We then have $v_{jj}(M)= v_{jj}(N) := v_{jj}$. Furthermore, we can identify $\widetilde{\Psi}_j(M) \in \C^M$ with a vector $\widetilde{\Psi}_j (N) \in \C^N$ by setting the extra entries equal to zero. With this identification, we have $\widetilde{\Psi}_j(M)=\widetilde{\Psi}_j(N)$. We denote this $(2N+1) \times 1$ column vector by $\Psi_j$ and note that it has $2L$ nonzero entries with indices between $j-L$ and $j+L$.  
Furthermore, since the width of the matrices is a constant, the matrix $\widetilde{H}_L^N(j)$ differs from $\widetilde{H}_L^M(j)$ (recall $N > M$) only in the entries $(i,j)$ for which $|i| > M$ or $|j| > M$ (see \eqref{eq:upperCap1}--\eqref{eq:lowerCap1}). 
We write $\widetilde{R}_{N,L,j} (z)$ for the resolvent $(\widetilde{H}_L^N(j) - z)^{-1}$. 
Letting $z = E + i \eps$ and recalling $|j| \leq M-L$, we derive a representation for the 
difference
\beq\label{eq:differenceSchur1}
\E \{ \left\langle e_j, {R}_{N,L}(z)  e_j \right\rangle - \left\langle e_j,  R_{M,L}(z)  e_j \right\rangle \}.
\eeq
The denominator on the right side of \eqref{eq:schur3} has a negative imaginary part for $\eps > 0$:
\beq\label{eq:ImPart1}
-\eps  ( 1 +  \Psi_j^T [ {(\widetilde{H}_L^N(j) -E )^2 + \eps^2} ]^{-1} \Psi_j ) < 0 .
\eeq
Consequently, we may apply Lemma \ref{lemma:ResolvInt1} to the right side of \eqref{eq:schur3} to obtain the integral representation:
\beq\label{eq:IntegralRep1}
\left[  v_{jj} - z - \Psi_j^T \widetilde{R}_{N,L} (z)  \Psi_j \right]^{-1 }
= \int_0^\infty  e^{-i \lambda \left(v_{jj} -z - \Psi_j^T \widetilde{R}_{N,L,j} (z) \Psi_j \right) }\,d\lambda.
\eeq
Substituting this representation into \eqref{eq:differenceSchur1} and taking the expectation, we find the following representation of  \eqref{eq:differenceSchur1}:
\beq\label{eq:differenceSchur2}
\E \left\{ \int_0^\infty  e^{-i \lambda (v_{jj}-E-i\eps) }
\left(e^{i \lambda \left(\Psi_j^T  \widetilde{R}_{N,L,j} (z)  \Psi_j\right) } - e^{i \lambda \left(\Psi_j^T  \widetilde{R}_{M,L,j} (z)
 \Psi_j\right) }\right) \right\} \,d\lambda.
\eeq
The second factor containing the difference of the two exponentials of the matrix elements of the resolvents  is independent of $v_{jj}$. Since the probability measure is a product measure,  we use Fubini's Theorem to bring the expectation with respect to $v_{jj}$ inside the $\lambda$ integral, to get
\beq\label{eq:differenceSchur3}
 \E_{v_{jj}^\perp} \left\{  \int_0^\infty  ~C(\lambda)
e^{-i \lambda (E - i \eps) }\left(e^{i  \lambda \left(\Psi_j^T  \widetilde{R}_{N,L,j} (z) \Psi_j\right)}
  - e^{ i   \lambda \left(\Psi_j^T   \widetilde{R}_{M,L,j} (z)  \Psi_j\right) }\right) ~d\lambda \right\}, 
\eeq
where 
\beq\label{eq:characteristic1}
C(\lambda) = \E_{v_{jj}} \left\{ e^{- i  \lambda v_{jj} }\right\}
=\int_\R ~ e^{-i \lambda  v_{jj} } \rho(v_{jj}) \,dv_{jj} = \sqrt{2 \pi} ~\widehat{\rho}(\lambda),
\eeq
is the characteristic function of $\rho$. 
 Under Assumption 1, the Fourier
transform $\widehat{\rho}$  has the bound
\beq\label{eq:DecayCharac1}
 | \widehat{\rho}(\lambda) | \leq \frac{C}{1+|\lambda|^2} ,
\eeq
for some finite constant $C > 0$.

\noindent
3. Due to the decay of the characteristic function of $\rho$ in \eqref{eq:DecayCharac1}, the expectation of the difference of the matrix elements in \eqref{eq:differenceSchur1} may be bounded as
\bea\label{eq:differenceSchur4}
\lefteqn{ | \E \{ \left\langle e_j, {R}_{N,L}(z)  e_j \right\rangle - \left\langle e_j,  R_{M,L}(z)  e_j \right\rangle \} | } \nonumber \\
 & \leq & \E_{v_{jj}^\perp} \left\{  \int_0^\infty \frac{C}{1+|\lambda|^2}
\left|\left(e^{i  \lambda \left(\Psi_j^T \ \widetilde{R}_{N,L,j} (z) \Psi_j\right) } - e^{i \lambda \left(\Psi_j^T  \widetilde{R}_{M,L,j} (z) 
 \Psi_j\right) }\right)\ \right|\, d\lambda \right\}. 
\eea
Now applying Lemma \ref{lemma:duhamel2} to the integrand in \eqref{eq:differenceSchur4}, we have an upper bound of
\beq\label{Eq:differenceSchur6}
 2^{1-s} C \left( \int_0^\infty \frac{\lambda^s}{1+|\lambda|^2} ~d \lambda \right)
   \E_{v_{jj}^\perp} \left\{  \left|\Psi_j^T   \widetilde{R}_{N,L} (z)  \Psi_j
- \Psi_j^T  \widetilde{R}_{M,L} (z) \Psi_j \right|^s \right\}.
\eeq
Since the $\lambda$-integral is convergent and the expectation is independent of $v_{jj}$, we obtain the desired bound
\beq\label{eq:differenceSchur7}
 C_{s, \rho} \E_{} \left\{ \left|\Psi_j^T  \widetilde{R}_{N,L} (z) \Psi_j
- \Psi_j^T  \widetilde{R}_{M,L} (z)  \Psi_j \right|^s \right\} ,
\eeq
for  a finite constant $C_s > 0$ depending on $s$ and the constant in \eqref{eq:DecayCharac1}.
\end{proof}

\begin{remark}
Theorem \ref{thm:fractionalMoment1} is a simplification of \cite[Theorem 2.2]{dkm} developed for random Schr\"odinger operators with single-site probability measures having compact support.
The theorem of \cite{dkm} applies to fixed bandwidth random band matrices for which the probability density $\rho$ has compact support.
 In this setting, the theorem can be stated as follows:

\begin{theorem} \cite[Theorem 2.2]{dkm} \label{thm:Thm4dkm}
Let $H_L^N$ be a symmetric random band matrix with a fixed bandwidth, and the common probability density compactly supported and smooth with support in $(0,R)$ for some $R$.  Then for any $z\in\C^+$,
\[ \left| \E \left \{ \left\langle e_0, \left( (H_L^N-z)^{-1} - (H_L^M-z)^{-1}\right) e_0 \right\rangle \right\} \right|
\]
\[\leq C \E_{\omega_0^\perp}\left\{ \int d\omega_0\, \chi_{\left[-1-\frac{5R}{2}, - \frac{R}{2}\right]}\left(\omega_0\right)\left|  \left\langle e_0, \left( (H_L^N-z)^{-1} - (H_L^M-z)^{-1}\right) e_0 \right\rangle \right|^s \right\}.
\]
\end{theorem}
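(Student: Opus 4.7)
The plan is to mirror the Schur--complement strategy of the proof of Theorem~\ref{thm:fractionalMoment1}, but to exploit the compact support of $\rho$ so as to bypass the Fourier-transform estimate \eqref{eq:DecayCharac1} and instead work directly with the rational structure of the diagonal Green's function as a function of $\omega_0 := v_{00}$. The different right-hand side here (involving $|F|^s$ on a real interval instead of the $\Psi^T R \Psi$ difference) then comes from a deterministic comparison between integrals of this rational function over disjoint intervals of $\R$.

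First, I would apply the Schur complement formula with $P = |e_0\rangle\langle e_0|$ to both $H_L^N-z$ and $H_L^M-z$, exactly as in the proof of Theorem~\ref{thm:fractionalMoment1}. This yields
\begin{equation*}
\langle e_0, (H_L^N - z)^{-1} e_0 \rangle = \frac{1}{\omega_0 + a_N(z,\omega_0^\perp)}, \qquad \langle e_0, (H_L^M - z)^{-1} e_0 \rangle = \frac{1}{\omega_0 + a_M(z,\omega_0^\perp)},
\end{equation*}
where $a_N = -z - \Psi_0^T \widetilde R_{N,L,0}(z)\Psi_0$ and $a_M$ is defined analogously; both are independent of $\omega_0$ with $\Im a_N,\Im a_M \leq -\Im z < 0$ for $z \in \C^+$. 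Setting
\begin{equation*}
F(v) := \frac{1}{v + a_N} - \frac{1}{v + a_M} = \frac{a_M - a_N}{(v + a_N)(v + a_M)},
\end{equation*}
I would view $F$ as a rational function in $v$ with two simple poles in the open lower half-plane.

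Next I would reduce the theorem to the following deterministic (pointwise in $\omega_0^\perp$) inequality:
\begin{equation*}
\left| \int_0^R \rho(v)\, F(v)\, dv \right| \leq C \int_J |F(v)|^s\, dv, \qquad J := \left[ -1 - \tfrac{5R}{2},\, -\tfrac{R}{2}\right].
\end{equation*}
Once this is established, Fubini and Jensen's inequality applied to the modulus on the left give the claim after taking $\E_{\omega_0^\perp}$ on both sides. The deterministic inequality would then be proved by case-splitting on $c := |a_M - a_N|$: when $c \leq 1$ I would use $|F(v)| = c/|v+a_N||v+a_M|$ together with the fact that on $J$ the denominators are bounded above by constants depending only on $R$ and on \emph{a priori} bounds on $|a_N|,|a_M|$, giving $\int_J |F|^s \gtrsim c^s \geq c$, while the left-hand side is directly bounded by $c\,\|\rho\|_\infty \cdot \sup_{v \in (0,R)} (|v+a_N||v+a_M|)^{-1}$, itself a multiple of $c$; when $c \geq 1$, the left-hand side is uniformly bounded via the spectral-averaging estimate \eqref{eq:matrixEle1} of Proposition~\ref{proposition:SpectralAve1}, whereas the right-hand side grows like $c^s$, so the inequality holds trivially.

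The main obstacle will be the uniform control of the locations of the poles $-a_N,-a_M$ relative to $J$ needed to make the lower bound on $\int_J |F|^s$ effective: the real parts $\Re a_N,\Re a_M$ are a priori unbounded, and can push the poles far from $J$, weakening the pointwise lower bound for $|F|$ on $J$. This is precisely what forces the specific width $1 + 2R$ of $J$ in the statement---enough room, given that the poles have imaginary parts bounded away from zero and the support of $\rho$ has diameter $R$, that the pole locations cannot simultaneously be far from every point of $J$. Handling this cleanly will likely require either restricting the lower bound to a favorable subinterval of $J$ adapted to $\Re a_N,\Re a_M$, or invoking an additional average in $\omega_0^\perp$ to bring $|a_N|,|a_M|$ under control, and this is the step where the exact numerology $[-1-5R/2,-R/2]$ becomes essential.
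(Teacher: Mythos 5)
The paper does not prove this theorem: it is quoted in a remark as \cite[Theorem 2.2]{dkm} precisely to motivate Theorem~\ref{thm:fractionalMoment1}, which the paper proves by a Fourier-transform argument that \emph{avoids} the compact-support hypothesis and produces a structurally different right-hand side. So there is no internal proof to compare against, and your attempt has to be judged on its own terms.

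Your reduction (Schur complement with $P=|e_0\rangle\langle e_0|$, pass to the rational function $F(v)=(v+a_N)^{-1}-(v+a_M)^{-1}$, then argue a deterministic comparison between $\int_0^R \rho F\,dv$ and $\int_J |F|^s\,dv$) is a reasonable reconstruction of how such a proof must go, but the case split as written has a gap in both branches, and the paragraph you add to address it rests on a false premise. In the branch $c:=|a_M-a_N|\le 1$ you need $\int_J|F|^s\gtrsim c^s$, which requires $|v+a_N||v+a_M|\lesssim 1$ on $J$, i.e.\ an a priori upper bound on $|a_N|,|a_M|$; no such bound exists, since $\Re a_N,\Re a_M$ are Borel transforms of $\Psi_0$ against $\widetilde{R}_{N,L,0}(z)$ and can be arbitrarily large of either sign. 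The same obstruction reappears in the $c\ge1$ branch. Worse, your claimed repair is wrong: the poles $-a_N,-a_M$ of $F$ \emph{can} both lie arbitrarily far from $J$ (take $\Re a_N,\Re a_M$ large and negative, pushing both poles far to the right of $(0,R)$), and their imaginary parts are bounded below only by $\Im z=\eps$, which must eventually be sent to zero. What actually saves the inequality in that far-pole regime is not pole localization relative to $J$ but the concavity of $t\mapsto t^s$: there both sides are comparable to the single small quantity $q:=c/(|a_N|\,|a_M|)$, with the left side $\lesssim q$ and the right side $\gtrsim |J|\,q^s\gg q$ for $q\le1$. A correct proof therefore has to split on a quantity measuring the size of $F$ itself on $J$ (say $\sup_J|F|$ against an $R$-dependent threshold) rather than on $c$, and must not rely on bounding the poles. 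As written, the central deterministic inequality is asserted rather than established, and the heuristic offered to close the gap would not close it.
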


\noindent
One of the main advances of Theorem \ref{thm:fractionalMoment1} is the removal of the compact support condition on the  probability measure.

\end{remark}

\section{A localization bound for fixed-width random band matrices}\label{sec:localBound1}

In this section, we prove that the expectation in \eqref{eq:differenceSchur7} decays exponentially with respect to $M$ (for $N \geq M)$ provided the matrix elements of the resolvents of  reduced random matrices $\widetilde{H}_L^N(j)$ and 
$\widetilde{H}_L^M(j)$ satisfy an exponential fractional moment bound. These bounds on the resolvents of the reduced random matrices are proved in Proposition \ref{prop:fracMomentreduced1} provided the original random matrices $H_L^N$ and $H_L^M$ satisfy a fractional moment bound. For fixed-width random band matrices, these fractional monent bounds follow from  the extension of Schenker's result,  Theorem \ref{thm:fractMoment1},  presented in Theorem \ref{thm:locAllE1}.  


\subsection{A localization bound for the difference of fixed-width RBM}\label{subsec:fmb1}

In this section, we prove an exponentially-decaying upper bound on the right side of \eqref{eq:fracMoment2}, assuming a localization bound on reduced RBM proved in section \eqref{subsec:reducedRM1}.
As in section \ref{sec:fracMoment1}, we write  $\widetilde{R}_{N,L,j} (z) = (\widetilde{H}_L^N(j) - z)^{-1}$, where $\widetilde{H}_L^N(j)$ is obtained from $H_N^L$ by setting the $j^{\rm th}$-row and $j^{\rm th}$-column equal to zero. 

\begin{theorem}\label{thm:fractionalMoment2}
Suppose $N  \geq  M \gg L \geq 0$.  As in the proof of Theorem \ref{thm:fractionalMoment1}, we let $P = |{e_j}\rangle \langle e_j|$ and $Q=1-P$ acting on $\C^{2N+1}$ with the understanding that $\C^{2M+1}$ is embedded in $\C^{2N+1}$ as described in section \ref{sec:fracMoment1}.
By the same embedding, the vector $\Psi_j= Q H_L^M Pe_j$ can be identified with $Q H_L^N Pe_j$ by the same embedding. We also write
$\hn (j)= Q H_L^N(j) Q$ and $\hm(j) = Q H_L^M(j) Q$ as in Theorem \ref{thm:fractionalMoment1}.
Finally, suppose $0 < s<1/9$.  Then, for any $\epsilon > 0$, there exist finite constants $C_{L,s, \rho} > 0$ and $\gamma_{L,s}>0$, depending on $s$, the band width $L$, and the single-site density $\rho$, such that for $| j| \leq M - L$,
\beq\label{eq:locFMM1}
\E \left\{ \left|\left\langle \Psi_j,  \widetilde{R}_{N,L,j} (z) \Psi_j\right\rangle
- \left\langle \Psi_j,  \widetilde{R}_{M,L,j} (z) \Psi_j \right\rangle \right|^s \right\}  \leq  C_{L,s, \rho}   e^{-\gamma_{L,s, \rho} (M- |j|)}.
\eeq
\end{theorem}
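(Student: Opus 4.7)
The plan is to convert the difference of resolvents into a sum of matrix elements between spatially separated sites via the second resolvent identity, and then apply the fractional moment bound for the reduced resolvent (Proposition~5.2). First, I would write
\[
\widetilde{R}_{N,L,j}(z) - \widetilde{R}_{M,L,j}(z) = \widetilde{R}_{N,L,j}(z)\, K\, \widetilde{R}_{M,L,j}(z),
\]
with $K := \widetilde{H}_L^{M}(j) - \widetilde{H}_L^{N}(j)$, and expand $\langle \Psi_j, (\widetilde{R}_{N,L,j} - \widetilde{R}_{M,L,j}) \Psi_j\rangle$ as
\[
\sum_{(i,k)} \langle \Psi_j, \widetilde{R}_{N,L,j} e_i\rangle\, K_{ik}\, \langle e_k, \widetilde{R}_{M,L,j}\Psi_j\rangle.
\]
The crucial support fact is that $K_{ik}\neq 0$ forces $|i-k|\leq L$ together with $\max(|i|,|k|)>M$; combined with $|j|\leq M-L$ this yields $|i-j|+|k-j|\geq 2(M-|j|)-L$. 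Applying $|a+b|^s\leq |a|^s+|b|^s$ then pushes the $s$-power inside the double sum.

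Next, H\"older's inequality with three factors of exponent $3$ would split the expectation of each summand as
\[
\E\bigl[|\langle \Psi_j, \widetilde{R}_{N,L,j}e_i\rangle|^{3s}\bigr]^{1/3} \cdot \E\bigl[|K_{ik}|^{3s}\bigr]^{1/3} \cdot \E\bigl[|\langle e_k, \widetilde{R}_{M,L,j}\Psi_j\rangle|^{3s}\bigr]^{1/3}.
\]
The middle factor is uniformly bounded using the moment hypotheses of Assumption~1. For the outer factors I would expand $\Psi_j = \sum_{|m-j|\leq L,\, m\neq j}(v_{mj}/\sqrt{2L+1})\,e_m$ and apply $|a+b|^{3s}\leq |a|^{3s}+|b|^{3s}$ term-by-term, which is legal since $3s<1$. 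The key independence is that $\widetilde{H}_L^{N}(j)$ has the $j$-th row and column zeroed and is therefore independent of every $v_{mj}$; the expectation factorizes as $\E[|v_{mj}|^{3s}]\cdot \E[|\langle e_m,\widetilde{R}_{N,L,j}e_i\rangle|^{3s}]$. The remaining fractional moment is precisely what Proposition~5.2 controls with decay $e^{-\gamma_{L,s}|m-i|}$; since $|m-j|\leq L$, this converts into $C_L e^{-\gamma_{L,s}|i-j|}$ by absorbing an $e^{\gamma_{L,s}L}$ factor into the constant. The third factor is treated symmetrically.

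Combining the three pieces gives a pointwise bound $C_{L,s,\rho}\,e^{-\gamma_{L,s}(|i-j|+|k-j|)/3}$ for each summand, and the sum over $(i,k)$ reduces to a convergent geometric series whose effective starting point is controlled by $|i-j|+|k-j|\geq 2(M-|j|)-L$, producing the desired $e^{-\gamma_{L,s,\rho}(M-|j|)}$ decay.

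The main obstacle will be the exponent bookkeeping. H\"older contributes one factor of $3$, and a second factor of $3$ enters through the expansion of $\Psi_j$ and the fractional-moment argument behind Proposition~5.2 itself (which is in turn built on Theorem~\ref{thm:locAllE1}); this chain forces $s<1/9$ so that every intermediate exponent stays strictly below $1$. Assumption~1 must then supply finite moments of $\rho$ at all exponents of the form $3s$ that appear when evaluating $\E[|v_{mj}|^{3s}]$ and $\E[|K_{ik}|^{3s}]$, uniformly in the summation indices.
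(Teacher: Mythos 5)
Your proposal is correct and follows essentially the same route as the paper's proof: second resolvent identity, localization of the perturbation $K$ to the cap regions $||i|\vee|k|>M$, the $s$-concavity bound $|\sum a|^s\le\sum|a|^s$, a three-factor H\"older inequality, and the reduced-resolvent fractional moment estimate of Proposition~\ref{prop:fracMomentreduced1}, with $s<1/9$ arising exactly as you describe. The only cosmetic difference is the order of operations: the paper expands $\Psi_j=\sum_{u}v_{uj}e_u$ \emph{before} applying H\"older, obtaining a sum over quadruples $(u,v,k,\ell)$ with a product $v_{uj}v_{jv}v_{k\ell}$ grouped into one H\"older factor, whereas you apply H\"older first to $\langle\Psi_j,\widetilde R_N e_i\rangle\cdot K_{ik}\cdot\langle e_k,\widetilde R_M\Psi_j\rangle$ and then expand $\Psi_j$ inside the two resulting expectations, using the independence of $v_{mj}$ from $\widetilde H_L^N(j)$ to factor; both orderings produce the same bound and exponent bookkeeping. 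One small point: what you actually invoke from Proposition~\ref{prop:fracMomentreduced1} is the intermediate estimate $\E\{|\langle e_m,\widetilde R_{N,L,j}(z)e_i\rangle|^s\}\lesssim e^{-\alpha_{L,s}(|m-i|-O(L))}$ appearing in its proof (eq.~\eqref{eq:diffExp6}), since the proposition as stated carries the extra hypothesis $||k|-N|\le L$ and concludes $e^{-\gamma(N-|j|)}$; the paper's own application in \eqref{eq:resolvDiff4} has the same mismatch, so this is a presentational quirk shared by both arguments rather than a defect in yours.
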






\begin{proof}
Applying the second resolvent identity to the left side of \eqref{eq:locFMM1}, we must estimate
\beq\label{eq:resolvDiff1}
\E \left\{  \left| \left\langle \Psi_j,  \widetilde{R}_{N,L,j} (z) 
 \left(\hn(j) - \hm(j) \right)
 \widetilde{R}_{M,L,j} (z)  \Psi_j\right\rangle\right|^s \right\}.
\eeq
We recall that $[\hm(j)]_{km}=[\hn(j)]_{km}=v_{km}$, for $-M \leq k, m \leq M$, $|k-m|\leq L$, so that 
$[ \hn(j) - \hm(j) ]_{km} = 0$, for all $-M \leq k,m \leq M$.
All matrix elements $[\hm(j)-\hn(j)]_{km}$ are zero\emph{ except} for a subset of indices $(k,m)$ in either of the following two cap regions: For indices near $(-N,-N)$, the cap region $\mathcal{C}^-_{N,M}$  is described by
\bea\label{eq:upperCap1}
\mathcal{C}^-_{N,M}  & := & \{ (k, m) ~|~  -N \leq k \leq -M - 1 ~~ {\rm and} ~~ k \leq m \leq k + L \}  \nonumber \\
       && \cup ~ \{ (k,  m ) ~|~  m \leq k \leq m + L ~~ {\rm and} ~~ -N  \leq m \leq -M - 1 \} ,
\eea
and, for near indices $(N,N)$, the cap region $\mathcal{C}^+_{N,M}$ is
\bea\label{eq:lowerCap1}
\mathcal{C}^+_{N,M}  & := & \{ (k,m) ~|~   M+1 \leq k \leq N ~~{\rm and} ~~ k-L \leq m \leq k   \} \nonumber \\
 & &  \cup
 \{ (k,m) ~|~ m - L \leq k \leq  m ~~{\rm and} ~~ M+1 \leq m \leq  N \} .
\eea
We also recall that the vector $\Psi_j$ is defined by
\beq\label{eq:psi1}
 \Psi_j = \sum_{\substack{i=j- L,\dots, j +L \\ i \neq j}}
[H_L^N ]_{ij}  e_i.
\eeq
Let $\mathcal{S}_j := \{ (u,v,k,\ell) \}$ denote the set of indices $\{(u,v,k,\ell)\}$ satisfying the conditions:
\begin{itemize}
\item $(k,\ell) \in \mathcal{C}^-_{N,M} \cup \mathcal{C}^+_{N,M}$
\item $-L+j \leq u, v \leq L+j, ~~~u,v \neq j$;
\end{itemize}
Then, substituting this into \eqref{eq:resolvDiff1}, we obtain
\bea\label{eq:resolvDiff2}
\lefteqn{ \left\langle \Psi_j, \widetilde{R}_{N,L,j}(z)  \left(\hm(j) - \hn(j) \right)
  \widetilde{R}_{M,L,j}(z)  \Psi_j\right\rangle }  \nonumber \\
 &  = &
\sum_{(u,v,k,\ell) \in \mathcal{S}_j} v_{uj}v_{jv}v_{k\ell}
  \left\langle e_v,  \widetilde{R}_{N,L,j} (z) e_k \right\rangle \left\langle e_\ell,
 \widetilde{R}_{M,L,j}(z) e_u\right\rangle.
\eea
In order to take the $s^{\rm th}$-power of \eqref{eq:resolvDiff2}, we use the the basic inequality 
\beq\label{eq:BasicIneq1}
\left|\sum_{i=1}^N a_i \right|^s \leq \sum_{i=1}^N |a_i|^s, 
\eeq
for $0 < s < 1$.
Then, the expectation of the $s^{\rm th}$-power of the left side of \eqref{eq:resolvDiff2} satisfies the inequality
\bea\label{eq:resolvDiff3}
\lefteqn{ \E \left\{  \left| \left\langle \Psi_j, \widetilde{R}_{N,L,j}(z)  \left(\hm(j) - \hn(j) \right)
  \widetilde{R}_{M,L,j}(z)  \Psi_j\right\rangle \right|^s \right\} }  \nonumber \\
 & \leq &  \sum_{(u,v,k,\ell) \in \mathcal{S}_j} \E  \left\{ \left| v_{uj}v_{jv}v_{k\ell} \left\langle e_v, \widetilde{R}_{N,L,j}(z) e_k\right\rangle
\left\langle e_\ell, \widetilde{R}_{M,L,j} (z) e_u\right\rangle\right|^s \right\}.
\eea
%
Using a generalized Holder inequality, we bound each term in the sum by
\beq\label{eq:resolvDiff4}
 \left(\E \{ |v_{uj}v_{jv}v_{k\ell}|^{3s} \} \right)^{1/3}
\E \left\{  \left|\left\langle e_v, \widetilde{R}_{N,L,j} (z)  e_k\right\rangle\right|^{3s}\right\}^{1/3}
 \E \left\{ \left|\left\langle e_\ell, \widetilde{R}_{M,L,j}(z) e_u \right\rangle \right|^{3s}\right\}^{1/3}.
\eeq
The first term is bounded by a constant under the conditions on the moments of the random variables in Assumption 1. Exponential bounds on the second and third expectations involving the resolvents of $\widetilde{H}_L^N(j)$ and 
$\widetilde{H}_L^M(j)$ are given in Proposition \ref{prop:fracMomentreduced1}, proved section \ref{subsec:reducedRM1}.
In Proposition \ref{prop:fracMomentreduced1}, the parameter $s$ must be resrticted to $0 < s < \frac{1}{3}$, see \eqref{eq:diffExp42}. Consequently, expression \eqref{eq:resolvDiff4} limits $s$ to the range $0 < s < \frac{1}{9}$.
Although $| \mathcal{S}_j| = \mathcal{O}(\max ((N-M)L, L^2))$, we note that only $\mathcal{O}(L^2)$ terms contribute to the sum on the right in \eqref{eq:resolvDiff3} because, in addition to the constraint of being in $\mathcal{S}_j$, the index 
$\ell$ must satisfy $||\ell| - M| \leq L$,  and the index $k$ must satisfy $| k - \ell| \leq L$. Consequently, we must have $||k| - M | \leq 2 L$. Hence, there are only $\mathcal{O}(L^2)$ nonzero terms in the sum on the right in \eqref{eq:resolvDiff3}.    As a result, we obtain the bound  
\beq\label{eq:ModResolv1}
\E \left\{ \left|\left\langle \Psi_j,  \widetilde{R}_{N,L,j} (z) \Psi_j\right\rangle
- \left\langle \Psi_j,  \widetilde{R}_{M,L,j} (z)   \Psi_j \right\rangle\right|^s \right\} 
\leq  C_{L,s, \rho} e^{-\gamma_{L,s} \left(M- |j|  \right)}  ,
\eeq 
proving \eqref{eq:locFMM1}.
\end{proof}




\subsection{A localization bound for reduced RBM}\label{subsec:reducedRM1}

We prove that the resolvents of the reduced random band matrices $\widetilde{H}_L^N(j)$ have fractional moments that decay exponentially if the original random band matrices exhibit the same behavior. This will follow from Theorem \ref{thm:locAllE1} which is valid for fixed bandwidth random matrices. Recall that $\widetilde{H}_L^N (j)$ is not an  random band matrix itself 
since the elements of the $j^{\rm th}$-row and $j^{\rm th}$-column are equal to zero.

\begin{prop}\label{prop:fracMomentreduced1}
For $j \in \{ -N, \ldots, N \}$, let $\widetilde{H}_L^N(j)$ be the matrix obtained from $H_L^N$  by setting the $j^{th}$-row and the $j^{\rm th}$-column equal to $0$.  Suppose the indices $(i,k)$ satisfy $|i-j|\leq L$ and $||k| - N|\leq L$ and that $i \neq j$.   Then, there exist finite, positive, constants $C_{L,s, \rho}$ and $\gamma_{L,s}$ such that for $0 < s < \frac{1}{3}$,
\begin{equation}
\E \left\{  \left|\left\langle e_i, \widetilde{R}_{N,L,j}(E + i\eps) e_k \right\rangle\right|^{s} \right\} \leq 
C_{L,s, \rho}e^{-\gamma_{L,s}(N-|j| )}
\end{equation}
\end{prop}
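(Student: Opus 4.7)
The plan is to reduce the matrix elements of the reduced resolvent to those of the full resolvent $G(z) = (H_L^N - z)^{-1}$ via the classical submatrix--inversion identity
\[
 \widetilde G_{ik}(z) = G_{ik}(z) - \frac{G_{ij}(z)\, G_{jk}(z)}{G_{jj}(z)}, \qquad i, k \neq j,
\]
where $G_{pq}(z) := \langle e_p, (H_L^N - z)^{-1} e_q\rangle$ and $\widetilde G_{pq}(z) := \langle e_p, (\widetilde H_L^N(j) - z)^{-1} e_q\rangle$. This identity is a direct consequence of the Schur complement formula applied with the rank--one projection $P = |e_j\rangle\langle e_j|$ to $H_L^N - z$, together with the fact that $\widetilde H_L^N(j)$ coincides with $H_L^N$ on the subspace $e_j^\perp$. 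Using the fractional subadditivity $|a - b|^s \leq |a|^s + |b|^s$ valid for $0 < s < 1$, the estimate splits into bounding $\E[|G_{ik}|^s]$ and the ``correction'' $\E[|G_{ij}|^s\, |G_{jk}|^s\, |G_{jj}|^{-s}]$.

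The first piece is handled directly by Theorem \ref{thm:locAllE1}: from $|i-j|\leq L$ and $||k|-N|\leq L$ one has $|i-k| \geq N - |j| - 2L$, which yields the desired exponential decay after absorbing $L$-dependent constants. For the correction I apply the generalized H\"older inequality with three equal exponents $3$, which forces the condition $3s < 1$ stated in the proposition. The factor $\E[|G_{jk}|^{3s}]$ supplies the decay in $N - |j|$ via Theorem \ref{thm:locAllE1} (using $|j - k|\geq N - |j| - L$), while $\E[|G_{ij}|^{3s}]$ is merely a bounded constant because $|i - j| \leq L$.

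The main technical step, which I expect to be the principal obstacle, is a bound on the inverse moment $\E[|G_{jj}|^{-3s}]$ that is uniform in $\eps > 0$. Here I would invoke the Schur formula once more to write $G_{jj}^{-1} = v_{jj} - z - \alpha$ with $\alpha := \langle\Psi_j, (\widetilde H_L^N(j) - z)^{-1}\Psi_j\rangle$ independent of $v_{jj}$. Subadditivity reduces the task to bounding $\E[|v_{jj}|^{3s}]$ (finite by Assumption 1) and $\E[|\alpha|^{3s}]$. For the latter I would expand $\alpha$ as an $O(L^2)$-term sum of products $(\Psi_j)_m (\Psi_j)_n \widetilde G_{mn}(z)$ over the support of $\Psi_j$ and apply subadditivity once more. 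The key observation is that the $j$-th row and column of $\widetilde H_L^N(j)$ are identically zero, so $\widetilde H_L^N(j)$ does not involve any of the variables $v_{jm}$; this decouples each factor $(\Psi_j)_m$ from the reduced--resolvent entry $\widetilde G_{mn}$. The $\Psi_j$-factors have finite moments by Assumption 1, while the entries $\widetilde G_{mn}$ are bounded in $3s$-th moment by applying Proposition \ref{proposition:SpectralAve1} directly to $\widetilde H_L^N(j)$ --- its diagonal at sites $i \neq j$ is still iid with density $\rho$, so the spectral--averaging proof goes through without modification. Once the uniform--in--$\eps$ bound on $\E[|G_{jj}|^{-3s}]$ is in hand, assembling the two contributions (and absorbing the $O(L)$ exponent shifts into the constants) yields the claimed estimate $C_{L, s, \rho}\, e^{-\gamma_{L, s}(N - |j|)}$.
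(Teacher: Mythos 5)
Your route is genuinely different from the paper's and the starting identity is correct: the paper applies the second resolvent identity to $\widetilde R_{N,L,j}-R_{N,L}$ and expands the difference $H_L^N-\widetilde H_L^N(j)$ over the index set $\mathcal T_j$ supported on the $j$-th row and column, never dividing by a Green's function; you instead invoke the exact rank-one reduction identity $\widetilde G_{ik}=G_{ik}-G_{ij}G_{jk}/G_{jj}$. The trade-off is visible: your formula produces only full-resolvent matrix elements (so Theorem \ref{thm:locAllE1} applies to all of them directly), but at the price of introducing the inverse moment $\E\{|G_{jj}|^{-3s}\}$, which the paper deliberately avoids. The paper's version requires the \emph{a priori} spectral-averaging bound for the reduced resolvent (handled by Proposition \ref{proposition:SpectralAve1} with diagonal averaging away from $j$), while yours requires an inverse-moment bound; otherwise the two arguments are of comparable weight.

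There is, however, a gap in the inverse-moment step. From the Schur complement, $G_{jj}^{-1}=v_{jj}-z-\alpha$, so subadditivity gives
\[
\E\{|G_{jj}|^{-3s}\}\leq \E\{|v_{jj}|^{3s}\}+|z|^{3s}+\E\{|\alpha|^{3s}\},
\]
and you silently dropped the $|z|^{3s}$ term. That term is unbounded in $E=\Re z$, whereas the constant $C_{L,s,\rho}$ in Proposition \ref{prop:fracMomentreduced1} must be uniform in $z=E+i\eps$: this uniformity is what feeds the subharmonicity argument and, more crucially, the uniform convergence of the local density of states in Theorem \ref{thm:dosLimit1} and the uniform-on-$\C^+$ bounds in the proof of Theorem \ref{thm:smoothDos1}. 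Theorem \ref{thm:locAllE1} provides $z$-uniform constants with no decay in $|z|$, so the $|z|^{3s}$ cannot be absorbed into the other two H\"older factors. A fix is available within your framework --- for instance, split into $|z|\leq R$ (where $|z|^{3s}$ is just a constant) and $|z|>R$ (where one can run the Aizenman--Molchanov high-energy argument of Theorem \ref{thm:locHE1} directly on $\widetilde H_L^N(j)$, or observe that $G_{ij}/G_{jj}=-\langle e_i,(QH_L^NQ-z)^{-1}\Psi_j\rangle$ never actually requires dividing by $G_{jj}$) --- but as written the proposal does not establish the claimed $z$-uniform estimate.
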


\begin{proof}
1.  We compare the resolvent $\widetilde{R}_{N,L,j}(z)$ for $\widetilde{H}_L^N (j)$ with the resolvent $R_{N,L}(z)$ for the original band matrix $H_L^N$, with $z=E + i \epsilon$. The second resolvent formula yields 
\begin{eqnarray}\label{eq:diffExp1}
\E \left\{  \left|\left\langle e_i,  \widetilde{R}_{N,L.j}(z) e_k\right\rangle\right|^{s} \right\}
&\leq &  \E \left\{ \left|\left\langle e_i,  {R}_{N,L}(z) e_k\right\rangle\right|^{s}  \right\} \nonumber \\ 
&  & +
 \E \left\{ \left|\left\langle e_i,   \left[ \widetilde{R}_{N,L,j}(z) -   {R}_{N,L}(z) \right]
e_k\right\rangle\right|^{s} \right\}. 
\end{eqnarray}
The first term on the right in \eqref{eq:diffExp1} decays exponentially like $|i-k|$ by Theorem \ref{thm:locAllE1}.

\noindent
2. The second term on the right in \eqref{eq:diffExp1} may be written as
\beq\label{eq:diffExp3}
 \E \left\{  \left|\left\langle e_i,   \left[ \widetilde{R}_{N,L,j}(z)( H_L^N - \widetilde{H}_L^N(j))  {R}_{N,L}(z) \right] e_k\right\rangle\right|^{s} \right\} .
\eeq
The difference $(H_L^N- \hn (j) )$ is the matrix consisting only of the $j^{\rm th}$-row and $j^{\rm th}$-column of $H_L^N$, for $j  \in \{ -N,  \ldots, N \}$. 
We define an  index set $\mathcal{T}_j$ by
\beq\label{eq:tauIndex1}
\mathcal{T}_j :=\left\{(m,n)\,:\, \begin{array}{cl} & m=j\ and \  \max ( -N, j-L) \leq n \leq  \min ( N, j+L) \\ or &
n=j\ and \   \max ( -N, j-L)  \leq m \leq  \min (N, j+L) \end{array} \right\}.
\eeq
 Then, the matrix elements of  difference $(H_L^N- \hn (j) )$ are
\[ [ H_L^N-\hn (j) ]_{mn}=
\left\{\begin{array}{ccc}
v_{mn} & if & (m,n) \in \mathcal{T}_j \\
0 & otherwise \end{array}\right.\]
Using this and the basic inequality \eqref{eq:BasicIneq1}, we obtain an upper bound for \eqref{eq:diffExp3}:
\beq\label{eq:diffExp41}
\sum_{(m,n)\in \mathcal{T}_j} \E \{
 | \langle e_i,   \widetilde{R}_{N,L,j}(z) e_m \rangle |^{s}  |v_{mn} |^{s} 
\left|\left\langle e_n,  {R}_{N,L}(z) e_k\right\rangle\right|^{s} \}.
\eeq
With another use of the generalized Holder inequality, we have the bound
\beq\label{eq:diffExp42}
\sum_{(m,n)\in \mathcal{T}_j} ( \E \{ | \langle e_i, \widetilde{R}_{N,L,j}(z) e_m \rangle |^{3s} \} )^{1/3} 
 (\E  |v_{mn} |^{3s} )^{1/3}
 ( \E \{ | \langle e_n, R_{N,L}(z) e_k \rangle |^{3s} \} )^{1/3} .
\eeq
We restrict $s$ so that $0 < s < \frac{1}{3}$. 
We bound the first term in the product by a constant using the {\it a priori} bound from Proposition \ref{proposition:SpectralAve1} when $m$ is not equal to $j$. For these cases, we note that the proof of Proposition \ref{proposition:SpectralAve1} requires averaging with respect to the diagonal elements only. When $m=j$, we note that the vector $e_j$ is in the kernel of $\widetilde{H}_L^N(j)$ so that $\widetilde{R}_{N,L,j}(z) e_j = - z^{-1} e_j$, valid as $\Im z \neq 0$. Substituting this equation into the first matrix element of \eqref{eq:diffExp42}, and using the fact that $i \neq j$, we find that the term is zero due to orthogonality. 
 The second term in the product is bounded by the first moment of the random variables and hence is bounded by Assumption 1.  The third term decays exponentially by Theorem \ref{thm:locAllE1}. 

\noindent
3. As a consequence, we obtain the bound
\bea\label{eq:diffExp6}
\E \{  | \langle e_i,  \widetilde{R}_{N,L,j}(z) e_k \rangle|^{s} \} & \leq &  \E \{  | \langle e_i,  {R}_{N,L}(z) e_k \rangle|^{s} \}  \nonumber \\
 & & +  \sum_{(m,n)\in\mathcal{T}_j} C_{L,s, \rho}  \E \{ | \langle e_n, R_{N,L}(z) e_k \rangle |^{3s} \}^{1/3} \nonumber \\
  & \leq & \widetilde{C}_{L,s,\rho} e^{- \alpha_{L,s} |i - k|} +    \sum_{(m,n)\in\mathcal{T}_j} C_{L,s, \rho} 
       e^{- \alpha_{L,s} |n-k|}  \nonumber \\
 & \leq &  C_{L,s,\rho}  \left(   \sum_{(m,n)\in\mathcal{T}_j} e^{-\alpha_{L,s} \min( |n-k|, |i-k| ) } \right) . 
\eea
Recalling the definition of $\mathcal{T}_j$ in \eqref{eq:tauIndex1}, we have,
\begin{align}
    &| n - j | \leq L \nonumber\\
    &||k| - N| \leq L\nonumber\\
\end{align}
    and so, as $|j| \leq N$, we obtain
\bea\label{eq:Indices1}
0 < N - |j| & \leq & | N - |k|| + | |k| -  n| + |n-|j|| \nonumber \\
          &      \leq & 3L + | n - k| .
\eea
A similar bound holds for $|i-k|$ as $|i-j| \leq L$. 
Since $| \mathcal{T}_j| = \mathcal{O}(L)$, there exist finite constants $\gamma_{L,s} > 0$ and $C_{L,s, \rho} > 0$, so that 
\beq\label{eq:diffExp7}
\E \{  | \langle e_i,  \widetilde{R}_{N,L.j}(z) e_k \rangle |^{s} \} \leq C_{L,s, \rho} e^{- \gamma_{L,s} ( N - |j| ) },
\eeq
completing the proof.
\end{proof}

%
%


\section{Convergence and smoothness of density of states function}\label{sec:ptwConvDOS1}
\setcounter{equation}{0}


In this section, we  prove the uniform convergence of the local density of states functions $n_L^N(E)$ to $n_L^\infty (E)$. 
We also prove that the integrated density of states $N_L^\infty(E)$ is differentiable as many times as the probability density $\rho$. Throughout this section, we use the representations of the local density of states functions $n_L^N$ given in \eqref{eq:dosDefn1} of Proposition \ref{prop:dosDefn1}: 
\begin{equation}
    n_L^N(E)= \lim_{\eps \rightarrow 0^+} \frac{1}{2N+1} \frac{1}{\pi} \sum_{j=-N}^N \E   \{ \Im
    \left\langle e_j, (H_L^N - E - i \eps)^{-1}
    e_j\right\rangle \} ,
\end{equation}
We also recall that the infinite-volume DOSf, $n_L^\infty(E)$, has the following representation following from the Birkhoff Ergodic Theorem:
\bea\label{eq:repDosInf1}
    n_L^\infty(E) & =  &  \lim_{\eps \rightarrow 0^+} \frac{1}{\pi} \E   \{ \Im
    \left\langle e_0, (H_L^\infty - E - i \eps)^{-1}    e_0\right\rangle   \} \nonumber \\
 & = &     \lim_{\eps \rightarrow 0^+} \frac{1}{2N+1} \frac{1}{\pi} \sum_{j=-N}^N \E  \{ \Im   \left\langle e_j, (H_L^\infty - E - i \eps)^{-1}    e_j \right\rangle \} .
\eea

\subsection{Uniform convergence of the local density of states function}\label{subsec:unifConvDOS1}

The main result of this section is the following theorem that is essential for the identification of the intensity measure of the Poisson point process in section \ref{sec:poisson1}.

\begin{theorem}\label{thm:dosLimit1}
Let $H_L^N$ be a real, symmetric random band matrix with fixed bandwidth $2L+1$ with entries satisfying Assumption 1. 
Then the local density of states function $n_L^N$ converges uniformly on $\R$ to the density of states function $n_L^\infty$. 
\end{theorem}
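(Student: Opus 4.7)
The plan is to compare, term-by-term, the two representations
\[
n_L^N(E) = \lim_{\eps\to 0^+}\frac{1}{\pi(2N+1)}\sum_{j=-N}^N \E\bigl\{\Im \langle e_j, (H_L^N - E - i\eps)^{-1} e_j\rangle\bigr\}
\]
and the analogous ergodic-average representation \eqref{eq:repDosInf1} for $n_L^\infty(E)$, and then to show that the $j$-wise differences sum to an $\mathcal{O}(1/N)$ quantity that is uniform in $E\in\R$ and in $\eps>0$. I would split the sum into a \emph{bulk} region $\{|j|\le N-L\}$ and a \emph{boundary} region $\{N-L<|j|\le N\}$ and treat them separately.

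For the bulk, I would apply Theorem \ref{thm:fractionalMoment1} combined with the exponential estimate of Theorem \ref{thm:fractionalMoment2}, choosing any $s\in(0,1/9)$. These theorems are stated for finite $M<N$, but since the constants $C_{L,s,\rho}$ and $\gamma_{L,s}$ are independent of $N$ and the bounds in Theorem \ref{thm:locAllE1} and Proposition \ref{proposition:SpectralAve1} extend to the infinite-volume operator $H_L^\infty$, one may take $N\to\infty$ in both theorems (with $H_L^\infty$ playing the role of the larger matrix) via strong resolvent convergence and dominated convergence. This yields, for each $|j|\le N-L$ and every $z=E+i\eps$ with $\eps>0$,
\[
\bigl|\E\{\langle e_j,(H_L^\infty-z)^{-1}e_j\rangle - \langle e_j,(H_L^N-z)^{-1}e_j\rangle\}\bigr| \;\le\; C_{L,s,\rho}\, e^{-\gamma_{L,s}(N-|j|)},
\]
uniformly in $z\in\C^\pm$. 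Summing in $j$ gives a geometric series bounded by $2C_{L,s,\rho}/(1-e^{-\gamma_{L,s}})$, which after division by $\pi(2N+1)$ is $\mathcal{O}(1/N)$ uniformly in $E$ and $\eps$.

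For the boundary region, where the proof of Theorem \ref{thm:fractionalMoment2} does not apply because one loses the margin $N-|j|$, I would instead use the $L$-\emph{independent} a priori bound $|\E\{\Im\langle e_j,(H_L^N-z)^{-1}e_j\rangle\}|\le C_\rho$ from Proposition \ref{proposition:SpectralAve1}(ii) (and its infinite-volume analogue). Since there are at most $4L$ boundary indices, their total contribution is bounded by $8LC_\rho/[\pi(2N+1)]=\mathcal{O}(L/N)$. Adding the bulk and boundary estimates and invoking uniformity in $\eps$ to pass to $\eps\to 0^+$, one obtains $\|n_L^N - n_L^\infty\|_\infty = \mathcal{O}(1/N)$, which is the uniform convergence claimed.

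The main obstacle is the justification of the $N\to\infty$ extension of Theorems \ref{thm:fractionalMoment1} and \ref{thm:fractionalMoment2} to the infinite-volume operator $H_L^\infty$. The finite-$N$ proofs rely on the Schur complement formula and on the exponential localization bound of Theorem \ref{thm:locAllE1}, both of which survive the limit, but Assumption 1 permits probability densities of unbounded support, so one must control the moments $\E\{|v_{k\ell}|^{3s}\}$ in \eqref{eq:diffExp42} and verify that dominated convergence applies to the cap-region sums \eqref{eq:upperCap1}--\eqref{eq:lowerCap1} in the infinite-volume setting. These points are handled by routine but careful approximation arguments using the finite-moment hypothesis of Assumption 1 and the $\eps$- and $z$-uniform a priori bounds of Proposition \ref{proposition:SpectralAve1}.
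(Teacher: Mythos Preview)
Your proposal is correct and follows essentially the same route as the paper: both use the representations \eqref{eq:dosDefn1} and \eqref{eq:repDosInf1}, apply Theorems~\ref{thm:fractionalMoment1} and~\ref{thm:fractionalMoment2} (extended to $H_L^\infty$) for the bulk indices, and invoke the a~priori bound of Proposition~\ref{proposition:SpectralAve1}(ii) for the boundary indices. The only difference is the location of the cut: the paper splits at $|j|\le N-N^\alpha$ for some $\alpha\in(0,1)$ and obtains a rate $\mathcal{O}(N^{\alpha-1})$, whereas your split at $|j|\le N-L$ lets you sum the geometric series directly and yields the sharper rate $\mathcal{O}(1/N)$; your choice is in fact the more natural one here.
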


\begin{proof}
1. For $z = E + i \epsilon \in \C^+$, we estimate the difference
\beq\label{eq:dosLimit2}
n_L^N (E) - n_L^\infty (E)  =   
 \frac{1}{2N+1} \frac{1}{\pi}  \sum_{j=-N}^N  \lim_{\eps \rightarrow 0^+} \E \{ \Im [ \langle e_j,
 R_{N,L}(z)  e_j  \rangle - \langle e_j,   R_{\infty,L} (z) e_j \rangle ]   \}.
\eeq
It follows from Theorems \ref{thm:fractionalMoment1} and \ref{thm:fractionalMoment2}, that
\beq\label{eq:dosLimit3}
\E  \left\{ \left|  \left\langle e_j,
\left(H_L^N - z\right)^{-1} e_j\right\rangle -
\left\langle e_j, \left(H_L^\infty - z\right)^{-1} e_j\right\rangle \right| \right\}
\leq C_{L,s,\rho} e^{-\alpha_{L,s} (N- |j|) }.
\eeq
In addition, by part 2 of Proposition \ref{proposition:SpectralAve1}, we have the uniform bound
\beq\label{eq:dosUnifBd1}
| \E \{ \Im \langle e_j, (H_L^N - z )^{-1} e_j  \rangle
- \Im \langle e_j,  (H_L^\infty - z )^{-1} e_j  \rangle \} | \leq C.
\eeq
The constants in both \eqref{eq:dosLimit3} and \eqref{eq:dosUnifBd1} are  independent of $N$ and $z$.

\noindent
2. We take $0<\alpha<1$ and divide the $j$ sum in \eqref{eq:dosLimit2} into two terms: $|j| \leq N - N^\alpha$, and $N- N^\alpha < |j| \leq N$. For the first regime, it follows from \eqref{eq:dosLimit3} that
\bea\label{eq:dosLimit4}
 \sum_{ |j| \leq N - N^\alpha} \E \left\{ \left| \left\langle e_j, R_{N,L}(z)  e_j\right\rangle - \left\langle e_j,
 R_{\infty,L} (z)  e_j\right\rangle \right| \right\}
& \leq &  {C_1 (N-N^\alpha)} e^{-\alpha_{L,s}  N^\alpha} \nonumber \\
 & &
\eea
To control the second regime, we use the uniform bounds \eqref{eq:dosUnifBd1} and obtain
\beq\label{eq:dosLimit5}
\sum_{ N - N^\alpha < |j| \leq N } \E \left\{ \left| \left\langle e_j, R_{N,L}(z)  e_j\right\rangle - \left\langle e_j,
 R_{\infty,L} (z)  e_j\right\rangle \right| \right\}
\leq C_2  N^\alpha.
\eeq
Both constants $C_1$ and $C_2$ are uniform in $z$ and $N$. 
Returning to \eqref{eq:dosLimit2}, we have the upper bound
\bea\label{eq:dosLimit6}
\lefteqn{ \frac{1}{2N+1} \frac{1}{\pi} \sum_{ j= -N}^N \E \left\{ \left| \left\langle e_j, R_{N,L}(z)  e_j\right\rangle - \left\langle e_j,
 R_{\infty,L} (z)  e_j\right\rangle \right| \right\}  } \nonumber \\
 & \leq & 
  C_1  \frac{(N-N^\alpha)}{2N + 1} e^{-\alpha_{L,s} N^\alpha}
   + C_2 \frac{ N^\alpha}{2N + 1} ,
\eea
that vanishes as $N \rightarrow \infty$. 
Because of the uniformity of the convergence in \eqref{eq:dosLimit6}, we can interchange the $\eps$ and $N$ limits in the following and apply the uniform bound  \eqref{eq:dosLimit5} to obtain:
\bea\label{eq:dosLimit7} 
 \lefteqn{   \lim_{N\to\infty} | n_L^N(E) -  n_L^\infty (E) | } \nonumber \\ 
    & = & \lim_{\eps\rightarrow 0^+} \lim_{N\to\infty} \frac{1}{2N+1} \frac{1}{\pi}  \left| \sum_{j=-N}^N  \E\left\{
     \Im \left\langle e_j, R_{N,L}(z)    e_j\right\rangle - \Im \left\langle e_j, R_{\infty,L}(z) e_j\right\rangle
    \right\} \right| \nonumber \\
 & = & 0,  
\eea
proving the result. 
\end{proof}

\subsection{Smoothness of the DOSf}\label{subsec:smoothDosf1}

The estimates of section \ref{sec:fundamentals1} and \ref{sec:localBound1} allow us to prove that the integrated density of states $N_L^\infty (E)$ is smooth in the case of Gaussian random variables for which the probability density $\rho$ is given by  \eqref{eq:gauss1}. 
In general, for a probability density $\rho$ satisfying Assumption 1 with $\rho \in C^k(\R)$, we will prove that the IDS $N_L^\infty \in C^k(\R)$.  Inspiration and some of the techniques for the following proofs come from \cite{dkm}. As a first step, we state the following lemma which appears in \cite[Lemma A.1]{dkm}.

\begin{lemma}\label{lemma:smoothCondition1}
Consider a positive function $f \in L^1(\R,dx)$ and an interval $J\subset \R$.  For $z \in \C^+$, the Borel transform of $f$
is defined by
$$
F(z) :=\int \frac{1}{x-z} f(x) ~dx  , ~~~ {\rm for} ~~ z \in \C^+ .
$$
 Then, if for some $m\in\N$,
\beq\label{eq:smoothCondition2}
    \sup_{z\in \C^+,\ \Re(z)\in J} \left|\frac{d^m}{dz^m}\Im(F(z))\right|< \infty ,
\end{equation}
then, we have
\begin{equation}\label{eq:smoothCondition3}
    \mbox{\rm ess sup}_{x\in J} \left|\frac{d^m}{dx^m}f(x)\right| < \infty .
\end{equation}
\end{lemma}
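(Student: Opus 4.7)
The plan is to identify $\Im F(E+i\eps)$, up to the factor $\pi$, with a Poisson convolution and then pass from the pointwise bound on derivatives of that convolution to an $L^\infty$-bound on the distributional derivative $f^{(m)}$ by testing against smooth compactly supported functions. Concretely, writing $P_\eps(t) = (\eps/\pi)/(t^2+\eps^2)$ for the Poisson kernel, the identity
$$\Im F(E+i\eps) \;=\; \eps \int_\R \frac{f(x)}{(x-E)^2 + \eps^2}\,dx \;=\; \pi (P_\eps * f)(E)$$
holds for every $E\in\R$ and $\eps>0$. Since $F$ is holomorphic on $\C^+$, the holomorphic derivative $d/dz$ can be replaced by $\partial/\partial E$ acting on the real part of $z=E+i\eps$, and hence the hypothesis \eqref{eq:smoothCondition2} translates into the uniform bound
$$\sup_{\eps>0,\;E\in J}\;\bigl|\partial_E^{\,m}(P_\eps*f)(E)\bigr| \;\le\; C/\pi < \infty.$$

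Next, I would fix an arbitrary test function $\phi\in C_c^\infty(J)$ and integrate by parts. Since $P_\eps*f$ is smooth on $\R$ and $\phi$ is compactly supported inside $J$, this is justified with no boundary contribution and gives
$$\int_J \bigl(\partial_E^{\,m}(P_\eps*f)\bigr)(E)\,\phi(E)\,dE \;=\; (-1)^m \int_J (P_\eps*f)(E)\,\phi^{(m)}(E)\,dE.$$
The left-hand side is bounded above in absolute value by $(C/\pi)\|\phi\|_{L^1}$ uniformly in $\eps>0$ by the previous step. For the right-hand side, as $\eps\downarrow 0$ the Poisson mollification $P_\eps*f$ converges to $f$ in $L^1_{\mathrm{loc}}(\R)$ (standard approximation-to-the-identity for $f\in L^1(\R)$), and $\phi^{(m)}\in C_c(J)$, so by dominated convergence the right-hand side tends to $(-1)^m\int_J f(E)\,\phi^{(m)}(E)\,dE$, which is by definition $\langle f^{(m)},\phi\rangle$ in the distributional sense.

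Combining these two observations, I obtain
$$\bigl|\langle f^{(m)},\phi\rangle\bigr| \;\le\; (C/\pi)\,\|\phi\|_{L^1(J)} \qquad \text{for every } \phi\in C_c^\infty(J).$$
By density of $C_c^\infty(J)$ in $L^1(J)$ and the duality $L^1(J)^* = L^\infty(J)$, this implies that the distribution $f^{(m)}$ is represented on $J$ by an element of $L^\infty(J)$ with $\esssup_{E\in J}|f^{(m)}(E)| \le C/\pi$, which is the claim.

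The only place where some care is required is the interpretation of $d^m/dz^m$ applied to the harmonic function $\Im F$: one must note that holomorphicity of $F$ on $\C^+$ allows the identification $d/dz = \partial/\partial x$ when acting on $F$, so that $d^m(\Im F)/dz^m$ reads unambiguously as $\partial_x^m \Im F$. Once this convention is fixed, the rest of the argument is a standard duality/approximation-of-identity manipulation and does not rely on any finer regularity of $f$ beyond $f\in L^1(\R)$.
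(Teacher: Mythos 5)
Your proof is correct, and it fills in a proof that the paper itself omits: the paper simply cites \cite[Lemma A.1]{dkm} without giving an argument. The key ingredients you use --- the identity $\Im F(E+i\eps)=\pi(P_\eps*f)(E)$ with the Poisson kernel $P_\eps$, the reading of $\frac{d^m}{dz^m}\Im F$ as $\partial_E^m\Im F$ via holomorphicity (which matches how the lemma is invoked in the proof of Theorem \ref{thm:smoothDos1}), integration by parts against $\phi\in C_c^\infty(J)$, the approximate-identity convergence $P_\eps*f\to f$ in $L^1$, and $L^1$--$L^\infty$ duality to conclude $f^{(m)}\in L^\infty(J)$ --- are exactly the standard ones for this type of regularity statement and almost certainly coincide with the argument in \cite{dkm}. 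Two very minor remarks: the positivity of $f$ is not actually used anywhere in your argument (it is a harmless extra hypothesis, presumably present because $f$ is a density in the application); and the passage to the limit on the right-hand side does not need dominated convergence as such, since $\bigl|\int(P_\eps*f - f)\phi^{(m)}\bigr|\le\|P_\eps*f-f\|_{L^1}\|\phi^{(m)}\|_\infty\to 0$ directly. Neither affects correctness.
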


We will apply this lemma with $f=n_L^\infty$, the infinite-volume density of states function. This smoothness will follow from 
bounds on the derivatives of Borel transform of the density of states measure that will be uniform in $\C$.
We recall that from \eqref{eq:repDosInf1} and the ergodicity of the infinite-volume operator $H_L^\infty$, we have the representation:
\begin{equation}
    n_L^\infty(E)=\lim_{\eps\rightarrow 0^+} \frac{1}{\pi} \E \{ \Im\langle e_0, (H_L^\infty -E-i\eps)^{-1} e_0 \rangle \}.
\end{equation}
To simplify the presentation, we prove $n_L^\infty \in \mathcal{C}^1(\R)$ provided $\rho$ satisfies Assumption 1.
 Higher-order derivatives are treated as in \cite{brodie1} and \cite{dkm}, and we summarize this in Corollary \ref{cor:anyDerivativeIDS1}.

\begin{theorem}\label{thm:smoothDos1}
 Let $H_L^\infty$ be a fixed-width random band matrix on $\ell^2(\Z)$ with entries satisfying Assumption 1. Then the corresponding density of states function $n_L^\infty\in \mathcal{C}^1(\R)$.
\end{theorem}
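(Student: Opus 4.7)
The plan is to apply Lemma \ref{lemma:smoothCondition1} with $m = 1$ and $f = n_L^\infty$ on an arbitrary bounded interval $J \subset \R$. By the ergodic representation \eqref{eq:infDos2}, the Borel transform of $n_L^\infty$ equals $F(z) = \E\{\langle e_0, (H_L^\infty - z)^{-1} e_0\rangle\}$, and it suffices to bound $|\Im F'(z)|$ uniformly on $\{z \in \C^+ : \Re z \in J\}$. I apply the Schur complement formula (Lemma \ref{lemma:schur1}) with $P = |e_0\rangle\langle e_0|$ to isolate the random variable $v_{00}$:
$$\langle e_0, (H_L^\infty - z)^{-1} e_0\rangle = (v_{00} - \zeta(z))^{-1} =: g(v_{00}, z),$$
where $\zeta(z) = z + \Gamma_0(z)$, $\Gamma_0(z) = \Psi_0^T(\widetilde{H}_L^\infty(0) - z)^{-1}\Psi_0$ is independent of $v_{00}$, and $\Im\zeta(z) > 0$ on $\C^+$. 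Averaging over $v_{00}$ gives $F(z) = \E_{v_{00}^\perp}\{F_\rho(\zeta(z))\}$, where $F_\rho(\zeta) = \int \rho(v)/(v-\zeta)\, dv$ is the Stieltjes transform of $\rho$.

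The key identity transferring the $z$-derivative onto the density is $\partial_z g = -\zeta'(z)\,\partial_{v_{00}} g$. Integration by parts in $v_{00}$ against $\rho$ (justified by Assumption 1, since $\rho, \rho' \in L^1$ and $\rho$ vanishes at infinity) yields $\E_{v_{00}}\{g^2\} = \int \rho'(v)\, g(v,z)\, dv = F_{\rho'}(\zeta(z))$, and hence
$$F'(z) = \E_{v_{00}^\perp}\{\zeta'(z)\, F_{\rho'}(\zeta(z))\}.$$
The Poisson-kernel representation of the imaginary part of a Stieltjes transform immediately gives the uniform bound $|\Im F_{\rho'}(\zeta)| \leq \pi\|\rho'\|_\infty$ on $\C^+$, so the role of the smoothness of $\rho$ in Assumption~1 is to produce a bounded factor inside the expectation. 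All remaining singular behavior is concentrated in the prefactor $\zeta'(z) = 1 + \Psi_0^T(\widetilde{H}_L^\infty(0) - z)^{-2}\Psi_0$.

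The main obstacle, and the hard part of the argument, is to control this prefactor, which is not uniformly bounded as $\Im z \to 0$. The plan is a second integration by parts, with respect to one of the off-diagonal random variables $v_{j0}$, $0 < |j| \leq L$, that appear quadratically in $\Psi_0$ but are absent from $\widetilde{H}_L^\infty(0)$. Each such integration by parts trades a squared-resolvent matrix element of $\widetilde{H}_L^\infty(0)$ for a first-order resolvent matrix element, to which the a priori spectral-averaging bound of Proposition \ref{proposition:SpectralAve1} applies, together with a factor of a derivative of the density that is absorbed into the constant. The Ward-type identity $\|(\widetilde{H}_L^\infty(0) - z)^{-1}\Psi_0\|^2 = \Im\Gamma_0(z)/\Im z$ together with Cauchy--Schwarz provide the bookkeeping to assemble the resulting terms into a uniform bound on $|\Im F'(z)|$. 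Lemma \ref{lemma:smoothCondition1} then delivers $n_L^\infty \in C^1(\R)$, and the higher-derivative case summarized in Corollary \ref{cor:anyDerivativeIDS1} will be obtained by iterating this reduction, each additional derivative consuming one more order of smoothness of $\rho$.
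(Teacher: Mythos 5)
Your approach is genuinely different from the paper's. The paper writes the infinite-volume Green's function as a telescoping series
$\E\{\Im\langle e_0,R_{\infty,L}(z)e_0\rangle\} = \sum_{M\geq N}\bigl(\E\{\Im\langle e_0, R_{M+1,L}(z)e_0\rangle\} - \E\{\Im\langle e_0, R_{M,L}(z)e_0\rangle\}\bigr) + \E\{\Im\langle e_0, R_{N,L}(z)e_0\rangle\}$,
differentiates term by term in $E$ inside a finite-$N$ expectation (finite by Proposition~\ref{prop:smoothLIDS1}), and then controls the tail via the fractional-moment bounds of Theorems~\ref{thm:fractionalMoment1} and~\ref{thm:fractionalMoment2}, which rest on the localization estimates. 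The exponential decay in $M$ makes the sum of derivatives converge uniformly on $\C^+$. You avoid localization entirely and attempt to bound $F'(z)$ in one step via the Schur complement and integration by parts in $v_{00}$.

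The step you flag as ``the hard part'' --- controlling the prefactor $\zeta'(z) = 1 + \Psi_0^T(\widetilde{H}_L^\infty(0)-z)^{-2}\Psi_0$ --- is a genuine gap, and the sketch you give to close it does not obviously work. First, the Ward-type identity you invoke, $\|(\widetilde{H}_L^\infty(0)-z)^{-1}\Psi_0\|^2 = \Im\Gamma_0(z)/\Im z$, carries an explicit $1/\Im z$: this \emph{is} the divergence you need to kill, and Cauchy--Schwarz alone does not cancel it. Second, $\zeta'(z)$ is complex, so bounding $\Im F'(z)=\Im\E_{v_{00}^\perp}\{\zeta'(z)F_{\rho'}(\zeta(z))\}$ requires control of $\Re F_{\rho'}$ as well as $\Im F_{\rho'}$; the real part is a Hilbert-transform-type quantity and is not uniformly bounded by $\pi\|\rho'\|_\infty$. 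Third, a second integration by parts in $v_{j0}$ does not cleanly trade a squared resolvent for a first-order one: after the first integration by parts the integrand contains $F_{\rho'}(\zeta(z))$, which still depends on $v_{j0}$ through $\zeta$, so the $v_{j0}$-derivative produces extra terms that are not addressed. Until these points are resolved the argument is incomplete; the paper's telescoping route avoids bounding $\zeta'(z)$ altogether by paying with the localization hypothesis.
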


\begin{proof} 
\noindent
1. By Lemma \ref{lemma:smoothCondition1}, we must prove that    
\beq
    \sup_{z \in \C^+} \left|\frac{d}{dz}
    \E  \{ \Im \langle e_0, (H_L^\infty-z)^{-1} e_0 \rangle  \}
    \right|
    < \infty.
\eeq
From Theorem \ref{thm:dosLimit1}, 
we have that 
    \begin{equation} 
    \E \{ \langle e_0, (H_L^N-z)^{-1} e_0\rangle \}
    \to \E  \{ \langle e_0, (H_L^\infty-z)^{-1}
    e_0\rangle \}
    \end{equation}
    as $N\to\infty$ uniformly for $z \in \C^+$.  Furthermore, since the Green's functions are analytic for $z \in \C^+$, this implies that the derivatives also converge uniformly on compact subsets of the upper half-plane.
    We can therefore write the infinite-volume Green's function at fixed $z\in \C^+$ as the following telescoping series:
    \bea\label{eq:telescoping}
    \E \{  \Im  \langle e_0, R_{\infty,L}(z) e_0 \rangle \}     &= & \sum_{M=N}^\infty
    \left[ \E \{ \Im \langle e_0, R_{M+1,L}(z) e_0 \rangle -  \E  \{ \Im \langle e_0, R_{M,L}(z) e_0 \rangle \} \right] \nonumber \\
    & & +\E \{ \Im \langle e_0, R_{N,L}(z) e_0 \rangle \}.
    \eea
    From Proposition \ref{prop:smoothLIDS1},
    \begin{equation}
\mbox{ess sup}_{z \in \C^+} \frac{d}{dz}\E \{ \Im \langle e_0,  R_{N,L}(z) e_0 \rangle \}
\end{equation}
    is finite.  Thus it remains to bound 
    \begin{equation}
        \frac{d}{dz}\E \{ \Im \left[ \langle e_0,  R_{M+1,L}(z) e_0 \rangle -  \langle e_0,  R_{M,L}(z)  e_0 \rangle \right] \} ,
    \end{equation}
    and prove that it is summable.


\noindent
2. For $z=E+i\eps \in \C^+$, the Cauchy-Riemann equations show that  $ \frac{d}{dz}\langle e_0,  R_{M,L}(z) e_0 \rangle$ 
    can be written in terms of derivative with respect to $E$ of the real and imaginary parts of the Green's function.  Thus it suffices to obtain estimates for
    \begin{equation}\label{eq:smooth1}
    \frac{d}{dE} 
    \E \left\{ \langle e_0,  R_{M+1,L}(E + i\eps) e_0 \rangle  - \langle e_0,  R_{M,L}(E + i\eps) e_0 \rangle \right\} .
     \end{equation}
 We denote by $\E_{\rm off}$ the expectation with respect to the off-diagonal elements of $H_L^M$ and factor out  the expectation with respect to the diagonal terms  in \eqref{eq:smooth1}. We write the expectation in \eqref{eq:smooth1} as
    \begin{align}
    & \E \left\{ \langle e_0,  R_{M+1,L}(z) e_0 \rangle - \langle e_0,  R_{M,L}(z) e_0 \rangle \right\} \nonumber\\
    &=\E_{\rm off} \left\{ \int_{I_\rho} ~ \prod_{i=-M-1}^{M+1} dv_{ii} \rho(v_{ii}) 
    \left[ \langle e_0,  R_{M+1,L}(z) e_0 \rangle - \langle e_0,  R_{M,L}(z) e_0 \rangle \right] \right\}, 
    \end{align}
where $\rho$ is the probability density function with support $I_\rho \subset \R$ satisfying Assumption1. 
Proceeding as in the proof of Proposition \ref{prop:smoothLIDS1}, we make a change of the diagonal variables by $v_{ii} \rightarrow v_{ii} - E$
and obtain:
    \bea\label{eq:derivSmooth1}
   \lefteqn{ \frac{d}{dE}   \E \left\{ \langle e_0,  R_{M+1,L}(z) e_0 \rangle - \langle e_0,  R_{M,L}(z) e_0 \rangle \right\}  } \nonumber \\
 & = &    \sum_{i=-M-1}^{M+1} \int_{I _\rho - E} \rho'(v_{ii} + E) 
    \prod_{j\neq i}\rho(v_{jj} + E)      
  \E_{\rm off}  \left\{ \langle e_0,   R_{M+1,L}(i \eps) e_0 \rangle - \langle e_0,  R_{M+1,L}(i \eps)  e_0 \rangle\right\}  \nonumber \\
  & &
\eea
Undoing this change of variables, we arrive at
\bea\label{eq:Smooth1v2}
  \lefteqn{ \frac{d}{dE}   \E \left\{ \langle e_0,  R_{M+1,L}(z) e_0 \rangle - \langle e_0,  R_{M,L}(z) e_0 \rangle \right\}  } \nonumber \\
 & = &  \sum_{i=-M-1}^{M+1} \E_{{v_{ii}}^\perp} \left\{  \int_{I_\rho}  \rho^\prime (v_{ii}) \left[\langle e_0,  R_{M+1,L}(z) e_0 \rangle - \langle e_0,  R_{M,L}(z) e_0 \rangle\right] \right\}  ,
\eea
where $\displaystyle \E_{{v_{ii}}^\perp}$ is the expectation of all random variables except for $v_{ii}$.
We now proceed as in the proof of Theorem \ref{thm:fractionalMoment1}.
In particular, in analogy to \eqref{eq:differenceSchur1}-\eqref{eq:differenceSchur2}, we have 
\begin{align}\label{eq:smooth4}
   &     \int_{I} \rho^\prime (v_{ii}) \left[ \langle e_0,  R_{M+1,L,j}(z) e_0 \rangle - \langle e_0,  R_{M,L,j}(z) e_0 \rangle \right]   \nonumber \\
   &= \int_{I} dv_{ii} \rho^\prime (v_{ii}) \int_0^\infty  e^{-i  \lambda (v_{00}- z) } 
  \left( e^{i \lambda \left( \Psi_0^T \widetilde{R}_{M+1,L,0}( z) \Psi_0\right)} - e^{i \lambda \left(\Psi_0^T \widetilde{R}_{M,L,0}(z)  \Psi_0\right) }  \right) \,d\lambda.
\end{align}

\noindent
3. We treat the cases $i = 0$ and $i \neq 0$ separately. If $i=0$, we obtain the Fourier transform of $\rho^\prime$ from the $\lambda$-integral in \eqref{eq:smooth4}. Returning to the expectation, we obtain:
\begin{align}\label{eq:smooth5}
  \E_{{v_{00}}^\perp}  \left\{ \int_0^\infty \widehat{\rho'}(\lambda) e^{i  \lambda z } \left(e^{i \lambda \left(\Psi_0^T \widetilde{R}_{M+1,L,0}(z) \Psi_0\right)} - e^{i \lambda \left(\Psi_0^T \widetilde{R}_{M,L,0}(z) \Psi_0\right) }\right)\,d\lambda \right\} .
\end{align}
Applying Lemma \ref{lemma:duhamel2}, we obtain the upper bound
\beq\label{eq:smooth6}
 \E_{{v_{00}}^\perp} \left\{  \int_0^\infty |\widehat{\rho'}(\lambda)| 2^{1-s} \lambda^s   
     \left|\Psi_0^T   \widetilde{R}_{M+1,L,0}(z)   \Psi_0
- \Psi_0^T   \widetilde{R}_{M,L,0}(z) \Psi_0 \right|^s ~  d\lambda   \right\}.
\eeq
By Assumption 1, $ \langle \lambda \rangle^m \widehat{\rho}(\lambda) \in L^\infty (\R)$, for $0 \leq m \leq k+1$.  Thus, for some finite constant $C_\rho > 0$, uniform on $\C^+$, the expression in \eqref{eq:smooth5} above is bounded by
\begin{equation}\label{eq:smooth7}
    C_\rho ~ \E_{{v_{00}}^\perp} \left\{ \left|\Psi_0^T \widetilde{R}_{M+1,L,0}(z)  \Psi_0
- \Psi_0^T \widetilde{R}_{M,L,0}(z) \Psi_0 \right|^s \right\} .
\end{equation}
We can now directly apply Theorem \ref{thm:fractionalMoment2} (with $N = M+1$) and obtain the bound
\beq\label{eq:smooth71}
   \left|  \E_{{v_{00}}^\perp}  \int_{I_\rho}  dv_{00} ~\rho'(v_{00})  \left\{ \langle e_0, R_{M+1, L}(z) e_0 \rangle -  \langle e_0, R_{M,L}(z)  e_0 \rangle \right\} \right|     \leq   C_{L,s,\rho} e^{-\alpha_{L,s} M}.
\eeq

\noindent
4. If $i\neq 0$ in \eqref{eq:smooth4}, we first integrate \eqref{eq:smooth4} over $v_{00}$. 
Using Lemma \ref{lemma:duhamel2} and the decay of $\widehat{\rho}$ in Assumption 1, we bound \eqref{eq:smooth4} by 
\beq\label{eq:smooth8}
\E_{(v_{00},v_{ii})^\perp} \left\{ \int_{I_\rho}  dv_{ii} \left|\rho'(v_{ii})\right| \left|\Psi_0^T  \widetilde{R}_{M+1,L,0}(z)  \Psi_0 - \Psi_0^T  \widetilde{R}_{M,L,0}(z)  \Psi_0 \right|^s \right\} . 
\eeq
We now let
\begin{align} C_\rho^\prime = \int_\R |\rho'(x)| \, dx > 0,
\end{align}
which is finite by Assumption 1, and define a probability density 
\begin{equation}
    \widetilde{\rho}=\frac{1}{C_\rho'} | \rho^\prime |. 
\end{equation}
We let $\widetilde{\E}$ denote the expectation given by 
\beq\label{eq:expPrime1}
\widetilde{\E }\{ X \} :=  \E_{(v_{00},v_{ii})^\perp} \left\{ \int_{I_\rho}  dv_{ii} \widetilde{\rho}(v_{ii}) \{ X \} \right\} ,
\eeq
for a random variable $X$ depending on all $v_{ij}$.   
Consequently, we can bound the expectation in $\eqref{eq:smooth8}$ by 
\begin{equation}
C_\rho' ~ \widetilde{\E} \left\{  \left|\Psi_0^T \widetilde{R}_{M+1,L,0}(z) \Psi_0
- \Psi_0^T  \widetilde{R}_{M,L,0}(z)  \Psi_0 \right|^s \right\}  .
\end{equation}
%
%
The analysis in the proof of Theorem \ref{thm:fractionalMoment2} does not require identically distributed diagonal random variables. Consequently, we obtain
\bea \label{bound2}
{  \E_{(v_{00},v_{ii})^\perp}  ~ \int_{I_\rho} dv_{ii} \left| \rho^\prime (v_{ii}) \right|  \left \{   \left| \Psi_0^T  \widetilde{R}_{M+1,L,0}(z)  \Psi_0 - \Psi_0^T  \widetilde{R}_{M,L,0}(z)  \Psi_0  \right|^s  \right\} }  
&  \leq & C_{L,s,\rho}e^{-{\alpha_{L,s}}M} , \nonumber \\
 &   &    
\eea
for some finite positive constants uniform in $\C^+$.

\noindent
5. Therefore, combining the bounds \eqref{eq:smooth71} and \eqref{bound2}, we obtain the bound,
\bea\label{eq:telescopeSum2}
  \left| \frac{d}{dE} \E  \left\{ \langle e_0,  \widetilde{R}_{M+1,L,0}(z) e_0 \rangle \right\}  -  \frac{d}{dE} \E \left\{ \langle e_0,  \widetilde{R}_{M,L,0}(z)   e_0 \rangle \right\} \right|   & \leq  &  \sum_{i=-M-1}^{M+1} ~ C_{L,s,\rho} e^{- \alpha_{L,s} M}     \nonumber\\
     &\leq &  C_{L,s,\rho} (2M+1)e^{-\alpha_{L,s} M} .   \nonumber \\
  &  &
\eea 
From the telescoping series expansion \eqref{eq:telescoping} for $ \E \{ \Im\langle  e_0, (H_L^\infty-z)^{-1} e_0 \rangle \}$, we obtain the bound
        \begin{align}
    \left|\frac{d}{dz}\E  \left\{  \Im\langle  e_0, (H_L^\infty-z)^{-1} e_0 \rangle \right\} \right.& \left.-\frac{d}{dz}\E  
\left\{   \Im\langle e_0, (H_L^N-z)^{-1} e_0 \rangle \right\}    \right| \nonumber\\
    &\leq \sum_{M=N}^\infty C(1+2M)e^{-\alpha_{L,s} M}\nonumber\\
 &< \infty.
    \end{align}
Since this bound holds uniformly for $z \in \C^+$, we have smoothness of the density of states function by Lemma \ref{lemma:smoothCondition1}.
\end{proof}

As in \cite{dkm},  we can extend the methods above to prove the existence of a $k^{th}$-order derivative of the IDS when
the probability density $\rho$ satisfies Assumption 1 with index $k$, and, in particular, when $\rho$ is a Gaussian density as in \eqref{eq:gauss1}.  
The details of the proof of the following corollary are given in \cite{brodie1} and follow the ideas presented here and in \cite{dkm}.

%

\begin{corollary}\label{cor:anyDerivativeIDS1}
Let $H_L^\infty$ be a fixed-width random band matrix on $\ell^2(\Z)$ with random entries distributed with a probability density satisfying Assumption 1. Then the integrated density of states $N_L^\infty\in \mathcal{C}^{k}(\R)$.
\end{corollary}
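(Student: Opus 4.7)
The plan is to iterate the Fourier-Schur argument from the proof of Theorem \ref{thm:smoothDos1} to obtain uniform bounds on arbitrarily high derivatives of the Borel transform of $n_L^\infty$. By Lemma \ref{lemma:smoothCondition1} applied to $f = n_L^\infty$, it suffices to prove that for each $0 \leq m \leq k-1$,
\[
\sup_{z \in \C^+} \left|\frac{d^m}{dz^m} \E\{\Im\langle e_0, (H_L^\infty - z)^{-1} e_0\rangle\}\right| < \infty .
\]
Via the Cauchy-Riemann equations I would convert these $z$-derivatives into $E$-derivatives of the real and imaginary parts and then expand the infinite-volume Green's function as in the telescoping series \eqref{eq:telescoping}. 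The leading piece $\E\{\Im\langle e_0, R_{N,L}(z) e_0\rangle\}$ is handled by Proposition \ref{prop:smoothLIDS1}, which already delivers $C^k$ smoothness at finite volume, so the task is to bound the $m$-th $E$-derivative of each difference $\E\{\langle e_0, R_{M+1,L}(z) e_0\rangle - \langle e_0, R_{M,L}(z) e_0\rangle\}$ by something exponentially small in $M$.

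For each difference term I would perform the diagonal change of variables $v_{ii} \mapsto v_{ii} - E$ exactly as in \eqref{eq:derivSmooth1}--\eqref{eq:Smooth1v2}, so that all $E$-derivatives land on the product of densities $\prod_i \rho(v_{ii} + E)$. The Leibniz rule then produces a finite sum, over multi-indices $(\alpha_i)$ with $\sum_i \alpha_i = m$, of expressions of the form
\[
\E_{v^\perp}\!\Bigl\{ \prod_i \rho^{(\alpha_i)}(v_{ii})\, \bigl[\langle e_0, R_{M+1,L}(z) e_0\rangle - \langle e_0, R_{M,L}(z) e_0\rangle\bigr] \Bigr\}.
\]
For each multi-index I would mimic the two cases $i=0$ and $i \ne 0$ treated in the proof of Theorem \ref{thm:smoothDos1}. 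The distinguished site $i=0$ carries the oscillatory factor $e^{-i\lambda v_{00}}$ coming from the Schur complement representation \eqref{eq:schur3}, so integrating in $v_{00}$ against $\rho^{(\alpha_0)}$ yields $\widehat{\rho^{(\alpha_0)}}(\lambda) = (i\lambda)^{\alpha_0}\widehat{\rho}(\lambda)$; combined with the factor $\lambda^s$ coming from Lemma \ref{lemma:duhamel2}, condition (4) of Assumption 1 is used to secure integrability of the resulting $\lambda$-integral uniformly in the remaining variables. Each site $i \neq 0$ carrying a genuine derivative $\rho^{(\alpha_i)}$ is treated by renormalizing $\widetilde{\rho}_{\alpha_i} := |\rho^{(\alpha_i)}|/\|\rho^{(\alpha_i)}\|_{L^1}$ to a probability density, which is legitimate since $\rho^{(\alpha_i)} \in L^1(\R)$ by Assumption 1 (2), and then invoking the fractional moment bound of Theorem \ref{thm:fractionalMoment2} (whose proof does not require the diagonal variables to be identically distributed). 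This produces an exponential bound of the form $C_{L,s,\rho}\, e^{-\alpha_{L,s} M}$ for each Leibniz term. Summing the polynomial-in-$M$ number of multi-index contributions and then the telescoping series in $M$ converges and yields the required uniform-in-$z$ bound.

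The hard part will be keeping track of the combinatorics of the higher-order Leibniz expansion and verifying that the decay of $\widehat{\rho}$ provided by Assumption 1 (4) dominates the growing power of $\lambda$ produced by $\widehat{\rho^{(\alpha_0)}}$; this is where the quantitative value of $k$ in Assumption 1 becomes essential, and it sharpens the single-derivative analysis of Theorem \ref{thm:smoothDos1}. The remaining bookkeeping is routine but lengthy, and is carried out in detail in \cite{brodie1} following the strategy of \cite{dkm}.
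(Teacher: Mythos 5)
Your proposal follows exactly the strategy the paper indicates: iterate the Fourier--Schur argument of Theorem~\ref{thm:smoothDos1}, feed the telescoping series into Lemma~\ref{lemma:smoothCondition1}, distribute the $E$-derivatives over the diagonal densities via Leibniz, handle the distinguished site $i=0$ through $\widehat{\rho^{(\alpha_0)}}(\lambda)=(i\lambda)^{\alpha_0}\widehat{\rho}(\lambda)$, and renormalize $|\rho^{(\alpha_i)}|$ to a (non-identically-distributed) density at the other sites before invoking Theorem~\ref{thm:fractionalMoment2}. The paper itself defers the combinatorial bookkeeping to \cite{brodie1}, and your sketch reproduces the same outline. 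One caveat is worth recording explicitly rather than just flagging: with Assumption~1(4) as stated, $\widehat{\rho}(\lambda)=O(\langle\lambda\rangle^{-k})$, so the top Leibniz term $\alpha_0=k-1$ combined with the $\lambda^s$ from Lemma~\ref{lemma:duhamel2} produces an integrand $\sim\lambda^{s-1}$ at infinity, which is not integrable for $s\in(0,1)$; note that even the one-derivative proof of Theorem~\ref{thm:smoothDos1} quietly invokes ``$0\le m\le k+1$'' for Assumption~1(4), so this off-by-one is inherited from the paper, and is vacuous in the Gaussian case, but a careful write-up should either strengthen Assumption~1(4) to $m\le k+1$ or reduce $\alpha_0$ to $k-2$ and lose one order of smoothness.
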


\section{Local eigenvalue statistics for fixed-width RBM}\label{sec:poisson1}
\setcounter{equation}{0}

We recall that $\{E_L^N(j)\}_{j=-N}^N$ is the set of the $2N+1$ eigenvalues of $H_L^N$. To study the local eigenvalue statistics (LES) for $H_L^N$ around $E_0\in\R$, we define the re-scaled eigenvalues
\beq\label{eq:rescaledEV1}
\widetilde{E}_L^N(j):=(2N+1)\left(E_L^N(j) - E_0\right).
\eeq
The {\it local eigenvalue point process for $H_L^N$ centered at $E_0$} is a random point measure on $\R$ supported on the re-scaled eigenvalues:
\beq\label{eq:localEVptPr1}
\xi_{N,L}^\omega (s) ~ ds := \sum_{j=-N}^N \delta ( \widetilde{E}_L^N(j) - s) ~ ds.
\eeq

\begin{theorem}\label{thm:poisson1}
Let $H_L^N$ be a random band matrix with fixed bandwidth $L$ with matrix elements satisfying Assumption 1 and let
$\{E_L^N(j) \}_{j=-N}^N$ be the set of eigenvalues of $H_L^N$. The re-scaled eigenvalue point process
$\xi_{N,L}^\omega$, defined in \eqref{eq:localEVptPr1}
converges weakly to a Poisson point process with intensity measure given by $n_L^\infty (E_0)  ds$, the density of states function at $E_0$ times Lebesgue measure.
That is, for any bounded interval $A\subset \R$, the intensity of the limiting point process is
$$
\lim_{N\to\infty} \E\{\xi_{N,L}^\omega(A)\} = n_L^\infty (E_0)|A|,
$$
where $n_L^\infty (E)=\lim_{N\to\infty}n_L^N(E)$ is the uniform limit of the local density of states functions at $E$.
\end{theorem}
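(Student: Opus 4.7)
The plan is to follow the Minami strategy adapted to random band matrices, now that the three essential ingredients are in place: the Wegner and Minami estimates of Proposition~\ref{prop:evCorrelations1}, the uniform (in energy) localization bound of Theorem~\ref{thm:locAllE1}, and the uniform convergence of the local density of states to the limiting DOSf $n_L^\infty$ of Theorem~\ref{thm:dosLimit1}. The overall scheme is to show that the random point measure $\xi_{N,L}^\omega$ is, up to negligible error as $N\to\infty$, a superposition of independent point measures coming from the restrictions of $H_L^N$ to a partition of $[-N,N]$ into mesoscopic blocks, and then to invoke Grigelionis's theorem on the convergence of infinitesimal triangular arrays of independent random measures to a Poisson point process.

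First, I would partition $\{-N,\ldots,N\}$ into $K_N = \lfloor (2N+1)/\ell_N\rfloor$ disjoint consecutive blocks $\Lambda_k^N$ of length $\ell_N$, where $\ell_N\to\infty$ and $\ell_N=o(N)$ (for instance $\ell_N=(\log N)^2$). Let $H_{L,k}^N$ denote the restriction of $H_L^N$ to $\ell^2(\Lambda_k^N)$; because $L$ is fixed, the block operators $\{H_{L,k}^N\}_{k}$ are independent, and they couple to each other only through the $\mathcal{O}(L)$ boundary entries that are zeroed out by the restriction. Let $\eta_{k,N}$ denote the associated rescaled eigenvalue point process of $H_{L,k}^N$, and set $\eta_N:=\sum_k \eta_{k,N}$. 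The key decoupling step is to show, using the exponential fractional moment bound of Theorem~\ref{thm:locAllE1} together with the Minami estimate to convert resolvent decay into eigenvector localization (via the argument of Aizenman--Elgart--Naboko--Schenker--Stolz and Combes--Germinet--Klein), that for any bounded $A\subset\R$,
\begin{equation*}
\lim_{N\to\infty}\mathbb{P}\{\xi_{N,L}^\omega(A)\neq \eta_N(A)\}=0.
\end{equation*}
This reduces the problem to showing that the independent superposition $\eta_N$ converges to a Poisson process of intensity $n_L^\infty(E_0)\,ds$.

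For the Poisson convergence of $\eta_N$, I would verify the two standard hypotheses for a null-array of independent point processes. The uniform infinitesimality, $\max_k \mathbb{P}\{\eta_{k,N}(A)\geq 1\}\to 0$, follows from the Wegner estimate applied to each block of size $\ell_N$, since the expected number of eigenvalues in the rescaled window $E_0+A/(2N+1)$ is $\mathcal{O}(\ell_N/N)\to 0$. The asymptotic negligibility of multiple points, $\sum_k \mathbb{P}\{\eta_{k,N}(A)\geq 2\}\to 0$, follows from the Minami estimate applied blockwise, which contributes an $\mathcal{O}(K_N\cdot(\ell_N/N)^2)=\mathcal{O}(\ell_N/N)\to 0$ upper bound. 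Finally, the intensity is computed from
\begin{equation*}
\sum_k \mathbb{E}\{\eta_{k,N}(A)\} \;=\; \frac{1}{2N+1}\sum_k\mathbb{E}\{\#\{\mbox{eigenvalues of }H_{L,k}^N\mbox{ in }E_0+A/(2N+1)\}\}\cdot (2N+1),
\end{equation*}
which, after replacing each block trace by its expectation and using the translation invariance of the law of the random variables together with the uniform convergence $n_L^N\to n_L^\infty$ from Theorem~\ref{thm:dosLimit1}, tends to $n_L^\infty(E_0)|A|$. Combining these three facts with Grigelionis's theorem gives weak convergence of $\eta_N$, hence of $\xi_{N,L}^\omega$, to a Poisson point process with intensity $n_L^\infty(E_0)\,ds$.

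The main obstacle I anticipate is the decoupling step, that is, justifying that replacing $H_L^N$ by the direct sum of the block operators $\bigoplus_k H_{L,k}^N$ produces an error in the rescaled eigenvalue process that vanishes as $N\to\infty$. Since the probability density $\rho$ in Assumption~1 may have unbounded support (for instance Gaussian), the standard Minami argument requires some care: one typically compares eigenvalues via an $L^1$-resolvent-difference estimate obtained by a Schur complement/second-resolvent-identity reduction of exactly the kind developed in Section~\ref{sec:fracMoment1} and Section~\ref{sec:localBound1}, so the localization input with the unbounded tails handled here is precisely what is needed. Once this decoupling is controlled, the remaining arguments (Wegner, Minami, and identification of the intensity via the uniform DOS convergence) are routine applications of the ingredients already established in the paper.
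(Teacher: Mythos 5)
Your proposal follows essentially the same Minami-style strategy as the paper: partition $[-N,N]$ into mesoscopic independent blocks, verify the null-array criteria (uniform infinitesimality from Wegner, negligibility of multiple points from Minami, intensity identification from the uniform DOS convergence of Theorem~\ref{thm:dosLimit1}), and correctly pinpoint the block decoupling as the substantive step requiring the localization machinery of Sections~\ref{sec:fracMoment1}--\ref{sec:localBound1}. The only minor organizational difference is in the intensity computation: you evaluate $\sum_k \E\{\eta_{k,N}(A)\}$ directly at the block level (via the block $\ell$DOS of size $\ell_N$), whereas the paper first passes back to the full-matrix quantity $\E\{\xi_{N,L}^\omega(A)\}$ via the reduction lemma and then computes it explicitly with Herglotz test functions $\varphi_z(u)=(u-z)^{-1}$; both routes rely on the same uniform convergence $n_L^N\to n_L^\infty$ and the continuity of $n_L^\infty$ from Theorem~\ref{thm:smoothDos1}, so the arguments are interchangeable.
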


Given the local eigenvalues bounds of section \ref{sec:fundamentals1} and  the localization bounds of section \ref{sec:localBound1},  the proof of local eigenvalue statistics follows the same strategy as that of Minami \cite{minami} for the Anderson tight-binding model. In the first step, we use the Wegner and Minami estimates, and the localization bounds to prove that the process $\xi_{N,L}^\omega$ has the same weak limit points as a process  $\zeta_{N,L}^\omega$ constructed from an associated array of independent point processes as in \eqref{eq:sumPtProc1}. The second step consists of proving that the limit point of the process  $\zeta_{N,L}^\omega$ is unique and is a Poisson point process with the intensity measure $n_L^\infty (E_0)  ds$. 
The uniqueness relies critically on the Minami estimate. The identification of the intensity of the Poisson point process as $n_L^\infty (E_0)  ds$  requires control over the density of states as proven in Theorem \ref{thm:dosLimit1}.   We will sketch the proof of the first step as it is now rather standard. The second step, the study of the intensity measure of the limiting process is, however, different for random band matrices and relies  on Theorem \ref{thm:dosLimit1}.  This is presented in Proposition \ref{prop:Intensity1}. 


%
%
%
%


\subsection{Reduction to the point process $\zeta_{N,L}^\omega$}\label{subsec:array1}

We construct a point  process  $\zeta_{N,L}^\omega$ from an array of independent point processes and prove that it has the same limit points as $\xi_{N,L}^\omega$. To begin, we divide the set of indices $[-N,N]$ into sub-intervals of length $n$ where $n\sim N^\alpha$ for some $0<\alpha<1$.  We will label the subset of indices $N_p$ for $p=1, \dots, N^{1-\alpha}$.
When $N$ is large enough so that $N \gg n \gg L$, this produces $N^{1-\alpha}$ sub-matrices $H_L^{N,p}$ with band width $L$, and indices ranging over $N_p \times N_p$.  The sub-matrices are independent of each other, and thus the eigenvalue statistics of each sub-matrix are independent of the statistics of any other sub-matrix.  We first want to show that in the limit as $N\to\infty$, the eigenvalue point process of $H_L^N$ is approximated exactly by the sum of the point processes for $H_L^{N,p}$.

\begin{lemma}\label{lemma:reduction1}
Let $\{E_L^{N,p} (j)\}_{j\in N_p}$ denote the eigenvalues of $H_L^{N,p}$ and define the rescaled eigenvalues by 
$\widetilde{E}_L^{N,p} (j ):= (2N+1)(E_L^{N,p}(j) - E_0)$, for a fixed $E_0 \in \R$. We
define the local eigenvalue point process for $H_L^{N,p}$ by
$$
\xi_{N,L}^{\omega;p}(s)\,ds=\sum_{j\in N_p} \delta ( \widetilde{E}_L^{N,p}(j) -s) \,ds,  
$$
and let $\zeta_{N,L}^\omega$ be the sum of these independent point processes:
\beq\label{eq:sumPtProc1}
\zeta_{N,L}^\omega (s) ~ds :=  \sum_{p=1}^{\lfloor N - N^\alpha \rfloor}  \xi_{N,L}^{\omega;p}(s) ~ds. 
\eeq
Then for any bounded interval $A \subset \R$,
\beq\label{eq:evStat1} 
 \lim_{N \rightarrow \infty} | \E \{ \xi_{N,L}^\omega (A) \} - \E  \{ \zeta_{N,L}^\omega (A)  \} | = 0.
\eeq 
\end{lemma}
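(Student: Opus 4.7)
The plan is to identify $\zeta_{N,L}^\omega$ as the (rescaled) eigenvalue point process of the block-diagonal matrix $H_L^{N,\mathrm{block}} := \bigoplus_p H_L^{N,p}$, and to show that the inter-block coupling $V_N := H_L^N - H_L^{N,\mathrm{block}}$, supported on $\mathcal{O}(N^{1-\alpha})$ boundary zones of width at most $L$ between consecutive blocks, affects the expected eigenvalue count on the window $I_N := E_0 + A/(2N+1)$ only by $o(1)$. Writing
\begin{equation*}
\E\{\xi_{N,L}^\omega(A)\} = \E\{\mathrm{tr}\,P_{I_N}(H_L^N)\}, \qquad \E\{\zeta_{N,L}^\omega(A)\} = \E\{\mathrm{tr}\,P_{I_N}(H_L^{N,\mathrm{block}})\},
\end{equation*}
the Wegner estimate (Proposition \ref{prop:evCorrelations1}) immediately shows that both expectations are $\mathcal{O}(L^{1/2}|A|)$ uniformly in $N$.

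Let $\Lambda_\partial \subset \{-N,\ldots,N\}$ denote the union of $L$-neighborhoods of the block interfaces, so that $|\Lambda_\partial| = \mathcal{O}(LN^{1-\alpha})$, with corresponding coordinate projections $\chi_\partial$ and $\chi_{\mathrm{in}} := I - \chi_\partial$. Decompose
\begin{equation*}
\mathrm{tr}\,\Delta_N = \mathrm{tr}(\chi_\partial \Delta_N) + \mathrm{tr}(\chi_{\mathrm{in}}\Delta_N\chi_{\mathrm{in}}), \qquad \Delta_N := P_{I_N}(H_L^N) - P_{I_N}(H_L^{N,\mathrm{block}}).
\end{equation*}
The boundary term is controlled by applying Wegner to each matrix separately:
\begin{equation*}
\E\{|\mathrm{tr}(\chi_\partial \Delta_N)|\} \leq 2\pi\|\rho\|_\infty L^{1/2}|\Lambda_\partial||I_N| = \mathcal{O}(L^{3/2}N^{-\alpha}) \to 0.
\end{equation*}

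For the interior term, I would use Stone's formula to express $\chi_{\mathrm{in}}\Delta_N\chi_{\mathrm{in}}$ as the $\epsilon \to 0^+$ limit of $\tfrac{1}{\pi}\int_{I_N}\Im\,\chi_{\mathrm{in}}[R_N(\lambda+i\epsilon) - R_\mathrm{block}(\lambda+i\epsilon)]\chi_{\mathrm{in}}\,d\lambda$, then apply the second resolvent identity $R_N - R_\mathrm{block} = -R_N V_N R_\mathrm{block}$ to reduce the integrand to a sum of products $\langle e_i, R_N e_j\rangle (V_N)_{jk}\langle e_k, R_\mathrm{block} e_i\rangle$ with $i \in \Lambda_{\mathrm{in}}$ and $j,k \in \Lambda_\partial$. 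The uniform-in-$z$ fractional-moment bound of Theorem \ref{thm:locAllE1} then furnishes exponential decay of the expectation of each such product in $\mathrm{dist}(i,\Lambda_\partial)$; summing this decay over the $\mathcal{O}(N)$ interior sites, multiplying by $|I_N| = \mathcal{O}(N^{-1})$, by the $\mathcal{O}(L^2 N^{1-\alpha})$ relevant boundary index pairs, and using the moment bounds on $V_N$-entries from Assumption 1, yields an overall contribution of $o(1)$.

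Combining these two bounds gives \eqref{eq:evStat1}. The main obstacle will be converting the $s$-moment localization bound of Theorem \ref{thm:locAllE1} (valid only for $0<s<1$) into a genuine first-moment estimate on the triple product above, since a direct H\"older inequality would require higher-moment bounds on each resolvent factor that are not available. The standard resolution combines the $s$-moment bound with the a priori estimates of Proposition \ref{proposition:SpectralAve1} in an Aizenman--Molchanov-style interpolation argument --- producing exponential decay for the expected matrix elements of $P_{I_N}(H)$ between interior and boundary sites --- together with the observation that $\int_{I_N}\Im\,\langle e_i, R(\lambda+i\epsilon)e_i\rangle\,d\lambda$ converges to $\pi\langle e_i, P_{I_N}(H)e_i\rangle$, a quantity controlled by spectral averaging and Cauchy--Schwarz. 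All the essential ingredients --- uniform-in-$z$ exponential decay (Theorem \ref{thm:locAllE1}), spectral averaging (Proposition \ref{proposition:SpectralAve1}), and the Wegner estimate (Proposition \ref{prop:evCorrelations1}) --- are precisely the tools developed in Sections \ref{sec:fundamentals1}--\ref{sec:localBound1} for this comparison.
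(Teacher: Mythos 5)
The paper does not actually prove Lemma \ref{lemma:reduction1}: it explicitly defers the details to \cite{brodie1}, noting only that the argument follows Minami \cite{minami}. So there is no in-text proof to compare against. With that caveat, your outline is structurally aligned with the Minami-style reduction: identifying $\zeta^\omega_{N,L}$ with $\mathrm{tr}\,P_{I_N}(\bigoplus_p H_L^{N,p})$, splitting the trace difference into boundary and interior zones, applying a local Wegner bound to the boundary sites, and using the resolvent identity together with localization to control the interior. Your order-of-magnitude bookkeeping ($|\Lambda_\partial|=\mathcal{O}(LN^{1-\alpha})$, boundary contribution $\mathcal{O}(L^{3/2}N^{-\alpha})$, interior contribution $\mathcal{O}(L^2 N^{-\alpha})$) is correct.

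The gap you yourself flag is real and is not filled by the sketch you give. The triple product $\langle e_i, R_N e_j\rangle (V_N)_{jk}\langle e_k, R_{\mathrm{block}} e_i\rangle$ genuinely has no usable first-moment estimate available by H\"older from Theorem \ref{thm:locAllE1} alone, since each resolvent factor is only controlled in $L^s$ for $s<1$. Appealing to ``an Aizenman--Molchanov-style interpolation producing eigenfunction correlator decay'' is where your proof stops being a proof: eigenfunction correlator bounds (exponential decay of $\E\{|\langle e_i, P_I(H)e_j\rangle|\}$) are a standard consequence of fractional-moment localization, but they are \emph{not} developed anywhere in this paper, and deriving them from Theorem \ref{thm:locAllE1} plus Proposition \ref{proposition:SpectralAve1} is itself a nontrivial lemma you would need to write out. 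The paper's own toolkit points to a cleaner route which your proposal does not use: the characteristic-function technique of Theorem \ref{thm:fractionalMoment1}. There, the integral representation of Lemma \ref{lemma:ResolvInt1} combined with the decay of $\widehat{\rho}$ converts the \emph{first-moment} expectation of a difference of diagonal resolvent matrix elements directly into the expectation of the \emph{$s$-th power} of that difference --- exactly the gadget you need, and one that requires no first-moment localization bounds at all. Adapting that argument from the nested comparison $H_L^N$ vs.\ $H_L^M$ to the block-diagonal comparison $H_L^N$ vs.\ $\bigoplus_p H_L^{N,p}$ is the step your proposal should take, and is almost certainly how the proof in \cite{brodie1} proceeds. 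As written, your outline conveys the correct geometry and scaling but leaves the decisive moment-conversion step as an unproved assertion, so it is not yet a proof.
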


Note that the scaling for the sub-matrix point processes is the same as the scaling of the point process for the full matrix.
The details of the proof of Lemma \ref{lemma:reduction1} for fixed-width RBM is given in \cite{brodie1}, following the ideas in \cite{minami}. 

\subsection{Analysis of the point process $\zeta_{N,L}^\omega$}\label{subsec:arrayAnalysis1}

We recall the criteria in \cite[Volume 2, Chapter 11]{dvj}
for the limit point of an array of independent point processes to be a Poisson point process with intensity measure 
$\mu_{E_0}$.  For each bounded interval $A \in \R$,
\begin{enumerate}
		\item The array is uniformly asymptotically negligible:
		\begin{equation}
                     \lim_{N\to \infty}\sup_p \P \left\{ \xi_{N,L}^{\omega;p} (A)\geq 1 \right\} =0. 
                      \end{equation}
	
\item The limit point is unique:
		\begin{equation} 
                     \lim_{N\to\infty} \sum_p  \P  \left\{ \xi_{N,L}^{\omega;p}(A)\geq 2 \right\}=0. 
                      \end{equation}

		\item The intensity measure is identified via the convergence:
		\begin{equation}\label{eq:LimitDos1}
                      \lim_{N\to\infty} \sum_p \P  \left\{ \xi_{N,L}^{\omega;p}(A) \geq 1 \right\}= \mu_{E_0}(A).
                      \end{equation}
		
\end{enumerate}
In addition, we identify the intensity measure  as $d \mu_{E_0} (s) = n_L^\infty (E_0) ~ds$.

It is straight-forward to show that the Wegner estimate \eqref{eq:wegnerEst1} establishes point 1. 
Similarly, the Minami estimate \eqref{eq:minamiEst1} implies point 2.


\subsection{Identification of the intensity of the limiting process}

Before moving to the proof of point 3, we note the following consequence of the Minami estimate \eqref{eq:minamiEst1} which appeared \cite[(2.53)-(2.54)]{minami}.

\begin{lemma}\label{lemma:minami2}
Let $X_N$ be a sequence of random variables taking values in $\mathbb{N}$.  Further suppose $\E\{X_N(X_N-1)\}\to 0$ as $N\to \infty$.  Then
\begin{equation}\lim_{N\to\infty} |\E\{X_N\}-\P \{X_N\geq 1\}|=0. \end{equation}
\end{lemma}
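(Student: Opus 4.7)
The plan is to exploit the fact that $X_N$ is integer-valued to rewrite the difference $\E\{X_N\} - \P\{X_N \geq 1\}$ as the expectation of a nonnegative quantity that is pointwise bounded by the second factorial moment $X_N(X_N-1)$.

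First I would note that since $X_N \in \mathbb{N}$, the inequality $X_N \geq \mathbf{1}_{\{X_N \geq 1\}}$ holds pointwise, with equality on the event $\{X_N \in \{0,1\}\}$. Writing $\P\{X_N \geq 1\} = \E\{\mathbf{1}_{\{X_N \geq 1\}}\}$ and combining with linearity of expectation gives the identity
\begin{equation*}
\E\{X_N\} - \P\{X_N \geq 1\} = \E\{(X_N - 1)\mathbf{1}_{\{X_N \geq 1\}}\} = \E\{(X_N - 1)\mathbf{1}_{\{X_N \geq 2\}}\},
\end{equation*}
where the last equality uses that $X_N - 1 = 0$ on $\{X_N = 1\}$. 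In particular, the difference is nonnegative, so the absolute value may be dropped.

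The key pointwise bound is then $(X_N - 1)\mathbf{1}_{\{X_N \geq 2\}} \leq X_N(X_N - 1)$, which is immediate since $X_N \geq 1$ on the relevant event and both sides vanish on $\{X_N \leq 1\}$. Taking expectations yields
\begin{equation*}
|\E\{X_N\} - \P\{X_N \geq 1\}| \leq \E\{X_N(X_N - 1)\},
\end{equation*}
and the hypothesis $\E\{X_N(X_N-1)\} \to 0$ concludes the proof.

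There is no real obstacle here; the lemma is a short elementary consequence of integer-valuedness combined with a one-line pointwise comparison. The content of the result, in the context of Section~\ref{sec:poisson1}, is that once the Minami estimate \eqref{eq:minamiEst1} controls the second factorial moment of $\tr P_A(H_L^{N,p})$, one may freely interchange $\P\{\xi_{N,L}^{\omega;p}(A) \geq 1\}$ with $\E\{\xi_{N,L}^{\omega;p}(A)\}$ when summing over $p$, thereby reducing the verification of \eqref{eq:LimitDos1} to the convergence of the local density of states established in Theorem~\ref{thm:dosLimit1}.
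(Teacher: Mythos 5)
Your proof is correct. The paper itself does not prove this lemma but simply cites Minami \cite[(2.53)--(2.54)]{minami}; the argument there is the same elementary one you give, namely for integer-valued $X$ one has $\E\{X\}-\P\{X\geq1\}=\sum_{k\geq2}(k-1)\P\{X=k\}\leq\sum_{k\geq2}k(k-1)\P\{X=k\}=\E\{X(X-1)\}$, which is precisely your pointwise comparison $(X-1)\mathbf{1}_{\{X\geq2\}}\leq X(X-1)$ written out in terms of the distribution.
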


This result is used in the proof of the following proposition establishing the Poisson nature of the limiting point process of $\zeta_{N,L}^{\omega}$. 



\begin{prop}\label{prop:Intensity1}
With the above definitions, the point process $\zeta_{N,L}^\omega$ 
converges weakly to a Poisson point process with intensity measure $n_L^\infty(E_0)dx$, where $n_L^\infty(E_0)$ is the uniform
 limit of the density of states functions evaluated at $E_0$.
\end{prop}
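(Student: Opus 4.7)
The plan is to verify the three limit criteria from \cite[Vol.~2, Ch.~11]{dvj} stated after Lemma \ref{lemma:reduction1} and then identify the limiting intensity as $n_L^\infty(E_0)\,ds$. For each sub-matrix $H_L^{N,p}$ of size $n\sim N^\alpha$, the rescaled interval $A$ corresponds to a physical window of width $|A|/(2N+1)$. The Wegner bound \eqref{eq:wegnerEst1} gives
\[
  \P\{\xi_{N,L}^{\omega;p}(A)\geq 1\}\leq \E\{\xi_{N,L}^{\omega;p}(A)\}\leq \pi\|\rho\|_\infty L^{1/2}\,n\,|A|/(2N+1)=O(N^{\alpha-1})
\]
uniformly in $p$, which establishes Criterion 1. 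The Minami bound \eqref{eq:minamiEst1} summed over the $\sim N^{1-\alpha}$ sub-matrices gives $\sum_p \P\{\xi_{N,L}^{\omega;p}(A)\geq 2\}=O(N^{\alpha-1})\to 0$ for $0<\alpha<1$, establishing Criterion 2.

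For Criterion 3 I would invoke a summed version of Lemma \ref{lemma:minami2}: the elementary inequality $\E\{X(X-1)\}\geq 2(\E\{X\}-\P\{X\geq 1\})$, combined with the Minami estimate, yields $\sum_p(\E\{\xi_{N,L}^{\omega;p}(A)\}-\P\{\xi_{N,L}^{\omega;p}(A)\geq 1\})=O(N^{\alpha-1})$. Hence Criterion 3 reduces to computing $\lim_{N\to\infty}\sum_p \E\{\xi_{N,L}^{\omega;p}(A)\}$ and showing the limit equals $n_L^\infty(E_0)|A|$. Because each sub-matrix $H_L^{N,p}$ has the same joint distribution as a fixed-width RBM of size $n$, its $\ell$IDS is the one defined in Section \ref{sec:dos1} at size $n$. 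Writing $A=[a,b]$ and unfolding the rescaling \eqref{eq:rescaledEV1}, Corollary \ref{corollary:ldosf1} produces
\[
  \E\{\xi_{N,L}^{\omega;p}(A)\}=n\int_{E_0+a/(2N+1)}^{E_0+b/(2N+1)} n_L^n(E)\,dE.
\]

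Now Theorem \ref{thm:dosLimit1} supplies the uniform convergence $n_L^n\to n_L^\infty$ on $\R$ and Theorem \ref{thm:smoothDos1} the continuity of $n_L^\infty$; together these force the integrand on the shrinking window $[E_0+a/(2N+1),E_0+b/(2N+1)]$ to converge to the constant $n_L^\infty(E_0)$ uniformly in $p$. Therefore $\E\{\xi_{N,L}^{\omega;p}(A)\}=(n|A|/(2N+1))\,n_L^\infty(E_0)(1+o(1))$, and summing over the partition of $\{-N,\dots,N\}$ (so that $\sum_p n=2N+1+O(n)$) gives $\sum_p \E\{\xi_{N,L}^{\omega;p}(A)\}\to n_L^\infty(E_0)|A|$, completing the identification of the intensity. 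The main obstacle is precisely this last joint passage to the limit: the per-block size $n=N^\alpha$ grows while the evaluation window collapses like $1/N$, so pointwise or a.e.\ convergence of the $\ell$DOS would not suffice. It is the uniform convergence of Theorem \ref{thm:dosLimit1}, rather than the smoothness result alone, that is indispensable here.
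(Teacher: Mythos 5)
Your argument is correct, and it takes a genuinely different (arguably cleaner) route than the paper's. The paper proves Criterion 3 by first undoing the block decomposition: it invokes Lemma~\ref{lemma:reduction1} to pass from $\E\{\zeta_{N,L}^\omega(A)\}$ back to $\E\{\xi_{N,L}^\omega(A)\}$, then computes $\E\{\xi_{N,L}^\omega(\varphi_z)\}$ for the Poisson-kernel test function $\varphi_z(u)=\Im(u-z)^{-1}$, expressing it as
\[
\int \frac{1}{u^2+1}\,n_L^N\bigl(E_0+u\,\Im z(N)+\Re z(N)\bigr)\,du
\]
with $z(N)=z/(2N+1)$, and concludes by Dominated Convergence using the $N$-uniform bound on $n_L^N$ plus the uniform convergence $n_L^N\to n_L^\infty$ plus the continuity of $n_L^\infty$; a final density argument upgrades Poisson-kernel test functions to indicator functions of intervals. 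You instead stay at the block level throughout: you express $\E\{\xi_{N,L}^{\omega;p}(A)\}$ directly as $n\int_{E_0+a/(2N+1)}^{E_0+b/(2N+1)} n_L^n(E)\,dE$ using the per-block $\ell$DOSf of a size-$n$ RBM ($n\sim N^\alpha$), then sum over $p$ and pass to the limit of this integral over a shrinking window, again via uniform convergence of $n_L^n$ and continuity of $n_L^\infty$. This avoids Lemma~\ref{lemma:reduction1} in the intensity computation and also avoids the test-function/density step, at the cost of needing to invoke Theorem~\ref{thm:dosLimit1} with the block size $n=N^\alpha$ playing the role of the matrix-size parameter (which is legitimate since $n\to\infty$). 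Both routes hinge on the same analytic inputs: the uniform bound on the $\ell$DOSf (Proposition~\ref{proposition:SpectralAve1}), uniform convergence (Theorem~\ref{thm:dosLimit1}), and continuity of $n_L^\infty$ (Theorem~\ref{thm:smoothDos1}). Your closing observation, that pointwise or a.e.\ convergence of the $\ell$DOS would not suffice because the block size grows while the energy window collapses, is exactly the right reason why the uniform statement is indispensable; the same point is implicit in the paper's Dominated Convergence step. Two minor notational slips, neither fatal: the paper's $n_L^N$ is normalized for matrices of size $2N+1$, so for a block of $n$ sites you should write the $\ell$DOSf of a size-$n$ matrix rather than literally $n_L^n$; and the formula $\E\{\xi_{N,L}^{\omega;p}(A)\}=n\int\cdots$ follows from the definition of the $\ell$DOSf as the derivative of the $\ell$IDS (Propositions~\ref{prop:smoothLIDS1} and~\ref{prop:dosDefn1}), not from Corollary~\ref{corollary:ldosf1} itself, which you only need for the uniform boundedness of the integrand.
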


\begin{proof}
It remains to compute the limit in \eqref{eq:LimitDos1} and identify the limiting measure.  Suppose $A \subset \R$ is abounded interval. By Lemma \ref{lemma:minami2}, we can replace
$\P \{\xi_{N,L}^{\omega;p}(A) \geq 1\}$ with $\E\{\xi_{N,L}^{\omega;p}(A)\} $ in the $N\to\infty$ limit.
Further, from Lemma  \ref{lemma:reduction1} we can replace
$ \E \{ \zeta_{N,L}^\omega (A) \}$  
with $\E\left\{\xi_{N,L}^\omega (A) \right\} $ 
in the $N \rightarrow \infty$ limit. 
Then for $\varphi_z (u)  = \frac{1}{u-z}$, with $\Im z > 0$, 
\bea\label{eq:reduction2}
 \E\left\{\xi_{N,L}^{\omega}(\varphi_z)\right\} &  = &
\E\left\{ \sum_{j=-N}^N \delta_{(2N+1)\left(E_L^{N}(j)-E_0\right)}(\varphi_z) \right\}
 \nonumber \\
  & =  & \E \left\{ \tr \varphi_z\left( (2N+1)\left(H_{L}^{N}-E_0 \right)\right)
\right\} \nonumber \\
   & =   &  \sum_{j=-N}^N   \E \left\{  \Im \left\langle e_j , [ (2N+1)\left(H_{L}^{N}-E_0\right) - z ]^{-1} e_j \right\rangle
\right\}
\eea
Letting $z (N) := \frac{z}{2 N + 1}$, we obtain from \eqref{eq:reduction2}
\beq\label{eq:reduction3}
\E \left\{ \xi_{N,L}^\omega (\varphi_z)\right\}  = \int_\R  \Im \left( \frac{1}{x-E_0 - z(N) } \right) ~d\nu_L^N(x) ,
\eeq
where $\nu_L^N$ is the density of states measure for $H_L^N$. 
Using the fact that the $\ell$DOSm has a density $n_L^N$, after a change of variables, we obtain from \eqref{eq:reduction3}
\beq\label{eq:reduction4}
\E\left\{\xi_{N,L}^\omega (\varphi_z)\right\}  =
 \int \frac{1}{u^2 +1 } n_L^N\bigl( E_0 + u \Im z(N) + \Re z(N)\bigr)\, du.
\eeq
We now compute $\lim_{N \rightarrow \infty} \E\ \{ \xi_{N,L}^\omega (\varphi_z) \}$. The $\ell$DOSf  $n_L^N$ is pointwise uniformly bounded in $N$ by the density of states representation \eqref{eq:dosDefn1} and the spectral averaging estimate Proposition \ref{proposition:SpectralAve1}.  Thus, there exists a constant $C$ such that
\begin{equation}
    \frac{1}{u^2 +1 } n_L^N\left( E_0 + u \Im z(N) + \Re z(N)\right)\leq \frac{C}{u^2+1} \in L^1(\R).
\end{equation}
We can therefore apply the Dominated Convergence Theorem to bring the limit inside the integral in \eqref{eq:reduction4}, so
it remains to compute $\lim_{N\to\infty} n_L^N ( E_0 + u \Im z(N) + \Re z(N) )$. The uniform convergence of the density of states functions $n_L^N$ to $n_L^\infty$,  proven in Theorem \ref{thm:dosLimit1}, and the continuity of $n_L^\infty$, following from Theorem \ref{thm:smoothDos1}, imply that 
\begin{align}
    &\lim_{N\to\infty}\left|n_L^N\left( E_0 + u \Im z(N) + \Re z(N)\right)-n_L^\infty(E_0)\right|\nonumber\\
    &\leq \lim_{N\to\infty}\biggl(\left|n_L^N\left( E_0 + u \Im z(N) + \Re z(N)\right)-n_L^\infty\left( E_0 + u \Im z(N) + \Re z(N)\right)\right|\nonumber\\
    &\hspace{170pt}+\left|n_L^\infty\left( E_0 + u \Im z(N) + \Re z(N)\right)-n_L^\infty(E)\right|\biggr)\nonumber\\
    &=0.
\end{align}
Thus, evaluating the limit $N \rightarrow \infty$ of \eqref{eq:reduction4}, we have
\begin{equation}
 \lim_{N \rightarrow \infty} \E\left\{\xi_{N,L}^\omega (\varphi_z)\right\} =    \int \frac{1}{u^2+1} n_L^\infty(E_0)\,du =\pi n_L^\infty(E_0)
 =  \|\varphi_z\|_1  n_L^\infty(E_0),
\end{equation}
since $\pi = \|\varphi_z\|_1$.  Since the convergence is uniform for $z\in \C^+$, the convergence holds holds for characteristic functions of bounded intervals by a density argument.  This establishes condition (3) and identifies the intensity of the Poisson point process as stated after condition (3).
\end{proof}

\begin{appendices}

 \section{Appendix A: Basic identities}\label{app:identities1}
\setcounter{equation}{0}

\noindent
We make use of the following standard results of functional analysis.

\subsection{Resolvent and semi-group identities}
\begin{lemma}\label{lemma:ResolvInt1}
Let $A$ be a self-adjoint $N \times N$-matrix, $E\in\R$, and $\eps>0$.  Then
$$
 ( A-E-i \eps)^{-1} = i \int_0^\infty e^{-i \lambda (A-E-i \eps) }\, d\lambda. 
$$
\end{lemma}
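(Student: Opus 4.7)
The plan is to reduce both sides of the claimed identity to a diagonal (scalar) statement via the spectral theorem, then evaluate the resulting elementary integral. Since $A$ is a self-adjoint $N\times N$ matrix, write $A = \sum_{k} \mu_k P_k$ with $\mu_k \in \R$ and mutually orthogonal spectral projections $P_k$. By the finite-dimensional functional calculus, both operator-valued expressions appearing in the lemma act as scalar multiples of $P_k$ on each spectral subspace:
\[
(A-E-i\eps)^{-1} = \sum_k (\mu_k - E - i\eps)^{-1} P_k, \qquad
e^{-i\lambda(A - E - i\eps)} = \sum_k e^{-i\lambda(\mu_k - E - i\eps)} P_k.
\]

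Because the sum is finite, I can interchange it with the integral over $\lambda$, so the lemma follows once I establish, for each eigenvalue $\mu_k$, the scalar identity
\[
(\mu_k - E - i\eps)^{-1} = i \int_0^\infty e^{-i\lambda(\mu_k - E - i\eps)}\, d\lambda.
\]
This is where the hypothesis $\eps>0$ becomes essential. Writing the exponent as $-i\lambda(\mu_k - E) - \lambda\eps$, we have $|e^{-i\lambda(\mu_k - E - i\eps)}| = e^{-\lambda\eps}$, so the integrand is absolutely integrable on $[0,\infty)$. Direct antidifferentiation gives
\[
i \int_0^\infty e^{-i\lambda(\mu_k - E - i\eps)}\, d\lambda
= i \cdot \left[\frac{e^{-i\lambda(\mu_k - E - i\eps)}}{-i(\mu_k - E - i\eps)}\right]_{\lambda=0}^{\lambda=\infty}
= \frac{1}{\mu_k - E - i\eps},
\]
where the boundary term at $\lambda = \infty$ vanishes by the exponential decay $e^{-\lambda\eps}$. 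Summing back against the projections $P_k$ recovers the resolvent and proves the lemma.

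There is essentially no obstacle in this argument; the only point requiring attention is the verification that $\eps>0$ guarantees both convergence of the integral and vanishing of the boundary term, and that ensures the scalar computation is valid at each eigenvalue. No additional analytic machinery is needed because $A$ acts on a finite-dimensional space, so all manipulations (spectral decomposition, interchange of finite sum and absolutely convergent integral, functional calculus for the exponential and for the resolvent) are elementary.
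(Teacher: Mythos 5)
Your proof is correct and follows exactly the approach the paper indicates (the paper's proof is the single line ``The result follows from direct integration after diagonalization''). You simply spell out the spectral decomposition and the scalar integral $\int_0^\infty e^{-i\lambda(\mu_k - E - i\eps)}\,d\lambda$ that the paper leaves implicit.
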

\begin{proof}
The result follows from direct integration after diagonalization.
\end{proof}

\begin{lemma}\label{lemma:duhamel1}
Let $A$ and $B$ be two $N \times N$-matrices. For any $t \geq 0$, we have 
\[ e^{itA}-e^{itB}  = i \int_0^t e^{i(t-s)A}(A-B)e^{isB}\,ds. \]
\end{lemma}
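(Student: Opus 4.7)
The plan is to prove the identity by the standard device of interpolating between the two exponentials via a one-parameter family, then applying the fundamental theorem of calculus.

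First I would define the matrix-valued function $f:[0,t]\to \C^{N\times N}$ by
\[
f(s) := e^{i(t-s)A}\,e^{isB}.
\]
By direct evaluation, $f(0) = e^{itA}$ and $f(t) = e^{itB}$, so the left-hand side of the claimed identity is simply $f(0)-f(t) = -[f(t)-f(0)]$.

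Next I would compute $f'(s)$. Since $A$ commutes with $e^{i(t-s)A}$ and $B$ commutes with $e^{isB}$, the product rule gives
\[
f'(s) = -iA\,e^{i(t-s)A}\,e^{isB} + e^{i(t-s)A}\,iB\,e^{isB} = -i\,e^{i(t-s)A}(A-B)\,e^{isB}.
\]
Applying the fundamental theorem of calculus componentwise (the entries of $f$ are $C^1$ functions of $s$, being finite sums of products of entries of matrix exponentials),
\[
f(t) - f(0) = \int_0^t f'(s)\,ds = -i\int_0^t e^{i(t-s)A}(A-B)\,e^{isB}\,ds.
\]
Multiplying by $-1$ and substituting $f(0) - f(t) = e^{itA} - e^{itB}$ yields the desired formula.

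There is no real obstacle here; the only minor point worth noting is the commutativity used in differentiating $e^{i(t-s)A}$ and $e^{isB}$, which is why we get a clean factor $(A-B)$ sandwiched between the two exponentials rather than separate terms. Since everything is finite-dimensional, convergence and differentiation under the series defining the exponentials are automatic, so no further technical justification is needed.
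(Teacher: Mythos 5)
Your proof is correct and takes essentially the same route as the paper: both interpolate between $e^{itA}$ and $e^{itB}$ by a one-parameter matrix family and apply the fundamental theorem of calculus (the paper merely factors out $e^{itA}$ first and interpolates with $e^{-isA}e^{isB}$, which is the same function up to the constant left factor $e^{itA}$).
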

\begin{proof}
We write 
\[e^{itA}-e^{itB}=e^{itA}
\left(1-e^{-itA}e^{it B)}\right)
\]
and apply the fundamental theorem of calculus.
\end{proof}

\noindent
We say that a $N \times N$-matrix $A$ has positive imaginary part if $\Im A \geq 0$. 

\begin{lemma}\label{lemma:duhamel2}
Let $A$ and $B$ be two normal $N \times N$-matrices with positive imaginary parts.  Then for each $0 \leq s \leq 1$ and for any $t \geq 0$, we have
\[ \left\|e^{itA}-e^{itB}\right\| \leq 2^{1-s} |t|^s \|A-B\|^s.\]
\end{lemma}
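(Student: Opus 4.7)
The plan is a standard two-endpoint interpolation. I would establish two crude bounds on $\|e^{itA}-e^{itB}\|$ — a trivial uniform bound $\le 2$ and a linear-in-$t$ Duhamel bound $\le t\|A-B\|$ — and then combine them via the elementary inequality $x \le M^{1-s}C^{s}$ whenever $0 \le x \le M$ and $0 \le x \le C$, valid for $s \in [0,1]$ since $x = x^{1-s}x^{s}$.

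First I would check that $\|e^{itA}\| \le 1$ and $\|e^{itB}\| \le 1$ for all $t \ge 0$. Because $A$ is normal we can write $A = UDU^{*}$ with $D$ diagonal, and the hypothesis $\Im A \ge 0$ translates to $\Im d_j \ge 0$ for each diagonal entry, so $|e^{itd_j}| = e^{-t\,\Im d_j} \le 1$ for $t \ge 0$, hence $\|e^{itA}\| \le 1$; the same argument applies to $B$. From this the trivial bound $\|e^{itA}-e^{itB}\| \le 2$ is immediate.

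Next I would obtain the Lipschitz bound by applying Lemma \ref{lemma:duhamel1}, which gives
\[
 e^{itA}-e^{itB} = i\int_{0}^{t} e^{i(t-s)A}(A-B)e^{isB}\,ds.
\]
Taking norms inside the integral and using the contractivity established above yields
\[
 \|e^{itA}-e^{itB}\| \le \int_{0}^{t} \|e^{i(t-s)A}\|\,\|A-B\|\,\|e^{isB}\|\,ds \le t\,\|A-B\|.
\]

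Finally, setting $x := \|e^{itA}-e^{itB}\|$, $M := 2$, and $C := t\|A-B\|$, the interpolation inequality $x \le M^{1-s}C^{s}$ yields the desired
\[
 \|e^{itA}-e^{itB}\| \le 2^{1-s}\,t^{s}\,\|A-B\|^{s},
\]
which handles $t\ge 0$ as stated. There is no real obstacle; the only point that needs care is the contractivity of $e^{itA}$ and $e^{itB}$ for $t \ge 0$, which is where normality plus the positive-imaginary-part hypothesis is used.
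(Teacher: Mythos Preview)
Your proof is correct and follows essentially the same approach as the paper: the paper also establishes contractivity of $e^{itA}$ and $e^{itB}$ from normality and nonnegative imaginary part, uses Lemma~\ref{lemma:duhamel1} to get the linear bound $\|e^{itA}-e^{itB}\|\le |t|\|A-B\|$, notes the trivial bound $2$, and then interpolates via $x = x^{1-s}x^{s}$. The only cosmetic difference is that you spell out the diagonalization argument for contractivity explicitly.
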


\begin{proof}
Since  $A$ and $B$ are normal and their imaginary parts are positive, we have
$\| e^{i t A}\|, \|e^{it B} \| \leq 1$, for $t \geq 0$. 
Then, Lemma \ref{lemma:duhamel1}  results in the bound
\[ \left\| e^{itA}- e^{itB}\right\| \leq |t|\|A-B\|.  \]
Furthermore, by the triangle inequality, we have
\[ \left\| e^{itA}- e^{itB}\right\| \leq 2.\]
Combining these two estimates for any $0 \leq s \leq 1$, we get
\[\left\| e^{itA}- e^{itB}\right\|
=\left\| e^{itA}- e^{itB}\right\|^{1-s} \left\| e^{itA}- e^{itB}\right\|^s
\leq 2^{1-s} |t|^s\|A-B\|^s.\]
\end{proof}

\begin{remark}
If $A$ and $B$ generate contractive semigroups, so that $\| e^{itA}\|, \|e^{i t B} \|  \leq 1$, then Lemma \ref{lemma:duhamel2} holds. This version is used in \cite{dkm}.
\end{remark}

\subsection{The Schur complement formula}\label{subsec:schur1} 

Let $M : \C^N \rightarrow \C^N$ be a linear transformation and suppose $P$ is an orthogonal projection on $\C^N$ with $Q := 1 - P$.
Relative to $P$ and $Q$, we write $M$ as a square matrix in block form as
\[ M= \left(\begin{array}{cc}
    A & B \\
    C & D
    \end{array}\right) =  \left(\begin{array}{cc}
    PMP & PMQ \\
    QMP & QMQ
    \end{array}\right)   . \]

\begin{lemma}\label{lemma:schur1} 
Suppose $M$ is invertible, and $D = Q M Q$ is invertible on its range $Q \C^N$. We then have
\[ P M^{-1} P =\left(PMP - P M Q (Q M Q)^{-1} Q M P\right)^{-1} \]
\[=\left( A- B D^{-1} C\right)^{-1}.\]
\end{lemma}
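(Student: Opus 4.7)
The plan is to establish the formula by an explicit block LDU factorization of $M$. First I would write
\[
M = \begin{pmatrix} A & B \\ C & D \end{pmatrix}
= \begin{pmatrix} I & BD^{-1} \\ 0 & I \end{pmatrix}
  \begin{pmatrix} A - BD^{-1}C & 0 \\ 0 & D \end{pmatrix}
  \begin{pmatrix} I & 0 \\ D^{-1}C & I \end{pmatrix},
\]
which one verifies by multiplying out. This identity makes sense under the standing assumption that $D = QMQ$ is invertible on its range $Q\mathbb{C}^N$, so that $D^{-1}$ (extended by zero on $P\mathbb{C}^N$) is a well-defined operator.

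Next I would observe that the outer two factors are unipotent block-triangular, hence automatically invertible with explicit inverses obtained by negating the off-diagonal block. Since $M$ is assumed invertible and since $D$ and the two triangular factors are invertible, taking determinants (or equivalently a rank argument) shows that the middle block-diagonal factor must also be invertible, which forces the Schur complement $S := A - BD^{-1}C$ to be invertible on $P\mathbb{C}^N$. Inverting the product in reverse order yields
\[
M^{-1} = \begin{pmatrix} I & 0 \\ -D^{-1}C & I \end{pmatrix}
         \begin{pmatrix} S^{-1} & 0 \\ 0 & D^{-1} \end{pmatrix}
         \begin{pmatrix} I & -BD^{-1} \\ 0 & I \end{pmatrix}.
\]

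Finally I would extract the $(P,P)$ block by computing $PM^{-1}P$, which after multiplying the three factors reads off as the upper-left entry $S^{-1} = (A - BD^{-1}C)^{-1}$, giving both equalities in the statement. Rewriting $A = PMP$, $B = PMQ$, $C = QMP$, $D = QMQ$ produces the coordinate-free form displayed in the lemma.

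The only genuinely nontrivial point is the justification that $S$ is invertible given that $M$ and $D$ are; everything else is direct block algebra. I would handle this either by the determinant identity $\det M = \det D \cdot \det S$ (when $M$ acts on a finite-dimensional space, as here), or equivalently by noting that the factorization exhibits $M$ as a product in which all factors except the middle one are manifestly invertible, so invertibility of $M$ forces invertibility of the middle factor, and hence of $S$.
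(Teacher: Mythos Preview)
Your proof is correct and follows essentially the same approach as the paper: both use the block LDU factorization of $M$ with respect to the Schur complement, argue that invertibility of $M$ and of the outer triangular factors forces invertibility of the Schur complement $S = A - BD^{-1}C$, invert the factorization, and read off the $(P,P)$ block of $M^{-1}$.
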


\begin{proof}
We first note that under the hypotheses, $M$ admits the following factorization:
\[ M = \left( \begin{array}{cc}
                A & B \\
                C & D \end{array}\right)
        =\left(\begin{array}{cc}
                I    &  BD^{-1} \\
                0    &  I \end{array}\right)
        \left(\begin{array}{cc}
                A-BD^{-1}C     & 0  \\
                0    &  D \end{array}\right)
        \left(\begin{array}{cc}
                I    &  0 \\
                D^{-1}C    &  I \end{array}\right).
\]
The first and third factors are invertible since $D$ is assumed to be invertible. Since $M$ is assumed to be  invertible,  so is the second factor. We then have
\[ M^{-1} =
      \left(\begin{array}{cc}
                I    &  0 \\
                D^{-1}C    &  I \end{array}\right)^{-1}
        \left(\begin{array}{cc}
                A-BD^{-1}C     & 0  \\
                0    &  D \end{array}\right)^{-1}
        \left(\begin{array}{cc}
                I    &  BD^{-1} \\
                0    &  I \end{array}\right)^{-1}
\]
\[
=      \left(\begin{array}{cc}
                I    &  0 \\
                -D^{-1}C    &  I \end{array}\right)
        \left(\begin{array}{cc}
                \left(A-BD^{-1}C\right)^{-1}     & 0  \\
                0    &  D^{-1} \end{array}\right)^{-1}
        \left(\begin{array}{cc}
                I    &  -BD^{-1} \\
                0    &  I \end{array}\right)^{-1}
\]
\[
=\left(\begin{array}{cc}
    \left(A-BD^{-1}C\right)^{-1} & -\left(A-BD^{-1}C\right)^{-1}BD^{-1} \\
    -D^{-1}C\left(A-BD^{-1}C\right)^{-1} & D^{-1} C \left(A-BD^{-1}C\right)^{-1} B D^{-1} + D^{-1}
\end{array}\right).
\]
\end{proof}

The matrix $A - B D^{-1} C$ is called the\emph{ Schur complement} of $D$ in $M$ and is invertible if $M$ and $D$ are invertible.


\section{Appendix B: Localization for RBM at high energies}\label{sec:loc1}
\setcounter{equation}{0}

The localization bounds of Schenker, Theorem \ref{thm:fractMoment1}, hold on a finite, but arbitrary, real energy interval.
In order to extend these to all energies, we use the Aizenman-Molchanov method \cite{AM} to prove Theorem \ref{thm:locHE1}. This leads to the uniform exponential decay estimate of Theorem \ref{thm:locAllE1}.


\subsection{Proof of Theorem \ref{thm:locHE1}}

\noindent
1. We first note that we have the following {\it a priori} bound which is stated in part (i) of Proposition \ref{proposition:SpectralAve1}:
\[ 
\E \{ | \langle e_j, (H_L^N-z )^{-1} e_\ell \rangle |^s   \} \leq C_{\rho,s}
\]
for some constant uniform in $j, \ell$, the size $N$, and the energy $z$.
To derive the fractional moment bound, we begin with the identity
\[ \langle e_j, (H_L^N - z  ) (H_L^N - z )^{-1} e_\ell \rangle = \delta_{j \ell}. \]
Thus for $j \neq \ell$, we have
\[ \sum_{m\,:\, |j-m|\leq L} \left(v_{jm} - z \delta_{jm}\right)\left\langle
e_{m}, \left(H_L^N - z \right)^{-1} e_\ell \right\rangle = 0.
\]
where $v_{jm} = [H_L^N]_{jm}$.
Rearranging the sum and extracting the $jj^{\rm th}$-term gives
\[\left(v_{jj} - z \right)\left\langle
e_{j}, \left(H_L^N - z \right)^{-1} e_\ell \right\rangle
=
-\sum_{\substack{ m\,:\, |m - j|\leq L \\
m \neq j} }  v_{jm}\left\langle
e_{m}, \left(H_L^N - z \right)^{-1} e_\ell \right\rangle.
\]
By means of the identity  \eqref{eq:BasicIneq1}, the expectation of the $s^{\rm th}$-power of this equaltion results in the inequality
\begin{equation}\label{eq:AMbasic1}
    \E\left\{\left| v_{jj} - z \right|^s \left| \left\langle
e_{j}, \left(H_L^N - z \right)^{-1} e_\ell \right\rangle\right|^s\right\}
\leq 
\sum_{\substack{ m\,:\, |m-j|\leq L \\
m \neq j} }  \E\left\{ \left| v_{jm}\right|^s \left|\left\langle
e_{m}, \left(H_L^N - z \right)^{-1} e_\ell \right\rangle\right|^s\right\}.\end{equation}


\noindent
2. By the Schur complement formula, the left side of \eqref{eq:AMbasic1}  may be written in the form
\beq\label{eq:lowerDecoup2} 
 |v_{jj} - z|^s \left|\left\langle
e_j, \left(H_L^N - z \right)^{-1} e_\ell \right\rangle \right|^s =  |v_{jj} - z|^s  \left| \frac{ A}{B(v_{jj}-z) + C}\right|^s ,
\eeq   
where $A$, $B$,  and $C$ are independent of $v_{jj}$.  Thus Lemma \ref{lemma:lowerdecoup}, the lower decoupling estimate, provides a lower bound on the expectation of \eqref{eq:lowerDecoup2} of the form
\beq\label{eq:LowerDecoup1}
 C\left(|z|\right)^s \E \left\{\left| \left\langle
e_{j}, \left(H_L^N - z \right)^{-1} e_\ell \right\rangle\right|^s\right\},
\eeq
where $C\left(|z|\right)$ scales like $|z|$ for large $|z|$.
As for the right side of  \eqref{eq:AMbasic1},  we can also use the Schur complement formula with $P=P_{m}$ to write
\begin{equation} \left\langle
e_{m}, \left(H_L^N - z \right)^{-1} e_\ell \right\rangle = \frac{Av_{jm}+B}{C{v_{jm}}^2+Dv_{jm}+E}
\end{equation}
for $j\neq m$, where $A$, $B$, $C$, $D$, and $E$ are all independent of $v_{jm}$.  Thus we can use Lemma \ref{lemma:upperdecoup}, the upper decoupling bound,  on each term in the sum on the right side of \eqref{eq:AMbasic1} to obtain the bound
\begin{align}\label{eq:UpperDecoup1}
\E\left\{ \left| v_{jj'}\right|^s \left|\left\langle 
e_{m}, \left(H_L^N - z \right)^{-1} e_\ell \right\rangle\right|^s\right\}&=
\E\left\{ \left| v_{jm}\right|^s \left|\frac{Av_{jm}+B}{C{v_{jm}}^2+Dv_{jm}+E}\right|^s\right\}
\nonumber\\
&\leq C\E\left\{ \left|\frac{Av_{jm}+B}{C{v_{jm}}^2+Dv_{jm}+E}\right|^s \right\}\nonumber\\
&= C \E\left\{ \left|\left\langle 
e_{m}, \left(H_L^N - z \right)^{-1} e_\ell \right\rangle\right|^s\right\}.
\end{align}

\noindent
3. Rewriting \eqref{eq:AMbasic1} with using \eqref{eq:LowerDecoup1} and \eqref{eq:UpperDecoup1}, the new lower and upper bounds, gives 
\begin{equation}\label{eq:AMbasic2}
\E \left\{\left| \left\langle
e_{j}, \left(H_L^N - z \right)^{-1} e_\ell \right\rangle\right|^s\right\}
\leq \frac{C}{C(|z|)^s} 
\sum_{\substack{ m\,:\, |m-j|\leq L \\
m \neq j} } \E\left\{ \left|\left\langle 
e_{m}, \left(H_L^N - z \right)^{-1} e_\ell \right\rangle\right|^s\right\}.
\end{equation}
This estimate can now be iterated $|j-\ell|/L$ times until at least one of the indices in the sum overlaps with $\ell$. 
 At the end of the iteration process, we use the \textit{a priori} bound \eqref{eq:matrixEle1} 
to bound each expectation by an absolute constant.  Thus, we have 
\begin{equation}
\E \left\{\left| \left\langle
e_{j}, \left(H_L^N - z \right)^{-1} e_\ell \right\rangle\right|^s\right\}
\leq \left(\frac{2LC}{C(|z|)^s}\right)^\frac{|j - \ell|}{L}.\end{equation}
We now take $R$ such that $C(R)^s> 2L C$ and note that for each $z$ with $|z|>R$, we have exponential decay.
\hfill$\square$

\subsection{Decoupling lemmas}

To derive a fractional moment bound for random band matrices which have both diagonal and off-diagonal randomness, we need both upper and lower decoupling estimates.  
The following lemma is contained in \cite[Lemma 3.1]{AM}, see also \cite{graf} for a similar result.  

\begin{lemma}[Lower Decoupling]\label{lemma:lowerdecoup}
Let $\rho$ be the density of a Lipschitz continuous probability measure and $0<s<1$.  Then there is a $C>0$ such that 
\begin{equation} \int  \frac{|v-\eta|^s }{ |v-\beta|^s }\rho(v)\,dv \geq C_\rho(|\eta|)^s \int \frac{1}{|v-\beta|^s} \rho(v)\, dv.
\end{equation}
If $\int |v|^\gamma \rho(v)\, dv < \infty$ for some $\gamma>s$, then $C_\rho(R)^s$ is an increasing function of $R>0$ and 
\begin{equation}\lim_{R\to\infty} \frac{C_\rho(R)}{R} = 1.
\end{equation}
\end{lemma}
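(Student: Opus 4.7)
The plan is to recast the claimed inequality as a statement about averages of $|v-\eta|^s$ under an auxiliary probability measure, and then to extract the asymptotic from elementary triangle and concavity estimates combined with the moment hypothesis.

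First I set
\[I(\eta,\beta):=\int\frac{|v-\eta|^s}{|v-\beta|^s}\rho(v)\,dv,\qquad J(\beta):=\int\frac{\rho(v)}{|v-\beta|^s}\,dv,\]
and observe that $J(\beta)$ is finite (since $\rho\in L^\infty$ and $|v-\beta|^{-s}$ is locally integrable for $s<1$) and uniformly positive: restricting to a bulk interval $[-R_0,R_0]$ on which $\rho$ carries at least half of its mass gives the lower bound $J(\beta)\ge c(1+|\beta|)^{-s}$. The normalised measure $d\mu_\beta(v):=J(\beta)^{-1}\rho(v)|v-\beta|^{-s}\,dv$ is then a probability measure, and the target estimate reduces to $\int|v-\eta|^s\,d\mu_\beta\ge C_\rho(|\eta|)^s$. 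I define
\[C_\rho(R)^s:=\inf\bigl\{{\textstyle\int}|v-\eta|^s\,d\mu_\beta(v)\,:\,|\eta|\ge R,\ \beta\in\R\bigr\},\]
so that the inequality holds by construction and $R\mapsto C_\rho(R)^s$ is automatically non-decreasing. Positivity follows from continuity of $(\eta,\beta)\mapsto\int|v-\eta|^s d\mu_\beta$ on any bounded region, together with the weak convergence $\mu_\beta\rightharpoonup\rho\,dv$ as $|\beta|\to\infty$ (the weight $|v-\beta|^{-s}$ becomes asymptotically constant on each compact set).

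For the asymptotic $C_\rho(R)/R\to 1$, which is the main content, I combine the reverse triangle inequality $|v-\eta|\ge|\eta|-|v|$ on $\{|v|\le|\eta|\}$ with the subadditivity $(a+b)^s\le a^s+b^s$ valid for $s\in(0,1]$, which yields $(|\eta|-|v|)^s\ge|\eta|^s-|v|^s$. Integrating against $d\mu_\beta$ produces
\[\int|v-\eta|^s\,d\mu_\beta(v)\;\ge\;|\eta|^s\bigl(1-\mu_\beta\{|v|>|\eta|\}\bigr)\;-\;\int|v|^s\,d\mu_\beta(v).\]
Dividing by $|\eta|^s$ and letting $|\eta|\to\infty$, the claim reduces to showing that both $\mu_\beta\{|v|>|\eta|\}$ and $|\eta|^{-s}\int|v|^s\,d\mu_\beta$ vanish \emph{uniformly in} $\beta$. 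By Markov's and Jensen's inequalities, both consequences follow from the single uniform bound $\sup_{\beta\in\R}\int|v|^\gamma\,d\mu_\beta(v)<\infty$.

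The main obstacle is establishing this uniform moment bound on the family $\{\mu_\beta\}$. For $\beta$ in any bounded set, continuity handles the estimate, so the issue is large $|\beta|$. My plan is to split $\int|v|^\gamma\rho(v)|v-\beta|^{-s}\,dv$ into the bulk region $\{|v|\le|\beta|/2\}$, where $|v-\beta|^{-s}\le(|\beta|/2)^{-s}$ gives a contribution $\le C|\beta|^{-s}\int|v|^\gamma\rho(v)\,dv=C|\beta|^{-s}M_\gamma$, and the outer region $\{|v|>|\beta|/2\}$, where the $\gamma$-moment hypothesis combined with Lipschitz continuity of $\rho$ (which, together with $\rho\in L^1$, forces $\sup_{|v-\beta|\le 1}\rho(v)\to 0$ as $|\beta|\to\infty$) controls both the near-singular piece $\{|v-\beta|\le 1\}$ and the remainder $\{|v-\beta|>1\}$. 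Matching these bounds against the bulk lower bound $J(\beta)\asymp|\beta|^{-s}$ produces a uniform bound on $\int|v|^\gamma d\mu_\beta$. This careful bookkeeping between the bulk of $\rho$ and the singularity of $|v-\beta|^{-s}$ is the delicate step; once it is in place, $|\eta|^{-s}\int|v-\eta|^s d\mu_\beta\to 1$ uniformly in $\beta$, giving $C_\rho(R)/R\to 1$ and completing the proof.
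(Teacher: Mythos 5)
The paper does not actually prove this lemma --- it is stated with a citation to \cite[Lemma 3.1]{AM} --- so there is no ``paper's own proof'' to compare against. Evaluated on its own terms, your plan is structured well (normalising by $J(\beta)$, reducing the asymptotic to a moment bound on the family $\{\mu_\beta\}$), but the crucial step it defers to ``delicate bookkeeping'' does not go through.

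The claimed uniform bound $\sup_{\beta}\int|v|^{\gamma}\,d\mu_\beta(v)<\infty$ is false under the lemma's stated hypotheses. Take $\rho(v)=c(1+v^2)^{-1}$ (bounded, Lipschitz, all moments of order $\gamma<1$ finite) and any $s\in(\tfrac12,1)$, $\gamma\in(s,1)$. Then $J(\beta)\asymp|\beta|^{-s}$ as $|\beta|\to\infty$, while the contribution of the tail $\{|v|>|\beta|/2,\ |v-\beta|>1\}$ to $\int|v|^{\gamma}\rho(v)|v-\beta|^{-s}\,dv$ is of order $|\beta|^{\gamma-1}$; dividing by $J(\beta)$ gives $\int|v|^\gamma\,d\mu_\beta\gtrsim|\beta|^{s+\gamma-1}\to\infty$ since $s+\gamma>2s>1$. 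The same computation shows $\sup_\beta\int|v|^s\,d\mu_\beta=\infty$, so your displayed inequality
\[
\int|v-\eta|^s\,d\mu_\beta\ \ge\ |\eta|^s\bigl(1-\mu_\beta\{|v|>|\eta|\}\bigr)-\int|v|^s\,d\mu_\beta
\]
has an unbounded subtracted term and gives no information in this regime.

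Two further remarks. First, even a correct version of your argument should not subtract $\int|v|^s\,d\mu_\beta$: bounding $\int_{|v|\le\delta|\eta|}|v-\eta|^s\,d\mu_\beta\ge(1-\delta)^s|\eta|^s\,\mu_\beta\{|v|\le\delta|\eta|\}$ and letting $\delta\to0$ slowly reduces the asymptotic to \emph{tightness} of $\{\mu_\beta\}$ alone, which is a genuinely weaker requirement than your uniform $\gamma$-moment bound and sidesteps the Cauchy counterexample above. Second --- and this is the deeper problem --- tightness of $\{\mu_\beta\}$ also does not follow from just ``Lipschitz'' and $\gamma>s$: taking $\rho$ with triangular spikes of height and half-width $|v_n|^{-b}$ at $v_n=2^n$, $b\in\bigl(\tfrac{s}{2},\tfrac{s}{2-s}\bigr)$, gives a Lipschitz density with finite $\gamma$-moment for some $\gamma\in(s,2b)$ while $J(v_n)$ is dominated by the spike contribution $\asymp|v_n|^{-b(2-s)}\gg|v_n|^{-s}$, so $\mu_{v_n}$ carries order-one mass near $v=v_n$. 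In that case $\int|v-\eta|^s\,d\mu_\beta$ with $\eta=\beta=v_n$ equals $1/J(v_n)\asymp|v_n|^{b(2-s)}\ll|v_n|^s$, and the conclusion $C_\rho(R)/R\to1$ fails outright. The upshot is that the asymptotic cannot be derived from the hypotheses as literally written; the intended setting almost certainly requires a quantitative control of $\rho$ near its tails that rules out such spiky examples (the Gaussian case, which is what the paper actually uses in section~\ref{sec:loc1}, satisfies this comfortably, as do the densities of Assumption~1), and your proof should state and use such a condition explicitly rather than trying to extract it from $\gamma>s$ alone.

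Finally, a small point: you need not only $\liminf C_\rho(R)/R\ge1$ but also $\limsup\le1$; the latter is easy (take $\eta$ with $|\eta|=R$ and use $|v-\eta|^s\le|\eta|^s+|v|^s$) but should be said.
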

In \cite{AM}, the authors require less strict conditions on the probability density $\rho$.  They require $\rho$ to be locally uniformly $\tau$-H\"{o}lder continuous, and obtain constants that depend on $\tau$. If $\rho$ is Lipschitz, we can take $\tau=1$.

The general upper decoupling estimate appears in \cite[Theorem III.2]{AM}. We use the following version tailored to our application. 

\begin{lemma}[Upper Decoupling] \label{lemma:upperdecoup}
Let $\rho$ be the density of a Lipschitz continuous probability measure on $\R$ with  a finite $\gamma$-moment condition
$\int |u|^\gamma\, \rho(u)\, du < \infty$, for some $\gamma > 0$, and let $0<s<1$.  Then, for any polynomials 
$p$ of degree $n$ and $q$ of degree $k$ satisfying $s(n+k) \leq \gamma$ and $s k<1$, there exists a finite $C > 0$, depending on  $(\gamma,s,n,k)$ and $\rho$, so that
\begin{equation}\label{eq:UpperDecoup0}
\int_\R |u|^{\gamma-s(n+k)}
\widetilde{\rho}(u)\, du \leq C\int_\R
|u|^\gamma {\rho}(u)\,du  ,
\end{equation}
where $\widetilde{\rho}$ is the probability density
$$
\widetilde{\rho}(u) :=  \left({\int \frac{|p(v)|^s}{|q(v)|^s} \rho (v) ~dv} \right)^{-1}~ \frac{ |p(u)|^s}{|q(u)|^s }~\rho (u) .
$$
\end{lemma}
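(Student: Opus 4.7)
The plan is to multiply through by $Z := \int_\R |p(v)|^s|q(v)|^{-s}\rho(v)\,dv$, reducing~\eqref{eq:UpperDecoup0} to
\[ \int_\R |u|^{\gamma - s(n+k)}\,\frac{|p(u)|^s}{|q(u)|^s}\,\rho(u)\,du \;\leq\; C\,Z\int_\R|u|^\gamma \rho(u)\,du, \]
with $C$ depending only on $(\gamma,s,n,k)$ and $\rho$.  Since both sides are homogeneous of the same degree in the leading coefficients of $p$ and (separately) of $q$, I may assume without loss of generality that $p$ and $q$ are monic, with (possibly complex) roots $\{p_i\}_{i=1}^n$ and $\{q_j\}_{j=1}^k$, so $|p(u)|^s = \prod_i|u-p_i|^s$ and $|q(u)|^s = \prod_j|u-q_j|^s$.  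The task is then to produce matching polynomial factors in $|p_i|,|q_j|$ on the two sides that cancel each other exactly.

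For the numerator I would use the subadditivity $(a+b)^s \leq a^s+b^s$, valid for $0<s<1$, to expand
\[ |p(u)|^s \;\leq\; \prod_{i=1}^n(|u|^s+|p_i|^s) \;=\; \sum_{S\subseteq\{1,\ldots,n\}}|u|^{s|S|}\prod_{i\notin S}|p_i|^s. \]
Combined with the weight $|u|^{\gamma - s(n+k)}$ and using $s(n+k)\leq \gamma$, each resulting power $|u|^{\gamma - sk - s(n-|S|)}$ lies in $[0,\gamma]$ and so is dominated by $1+|u|^\gamma$.  The remaining factor $|q(u)|^{-s}$ is locally integrable against $\rho$: each of its singularities has order at most $s$, the condition $sk<1$ controls the worst case of coincident roots, and $\rho\in L^\infty$ together with the moment hypothesis $\langle u\rangle^{\gamma}\rho\in L^1$ make $\int(1+|u|^\gamma)|q(u)|^{-s}\rho(u)\,du$ finite.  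Each term in the expansion is then controlled by a constant times $\prod_{i\notin S}|p_i|^s$.

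To match these factors against a uniform lower bound on $Z$, I would use the reverse inequality $|p_i|^s\leq |v-p_i|^s+|v|^s$ to derive pointwise bounds of the form $\prod_i|v-p_i|^s \geq c\prod_i|p_i|^s$ on the set $\{|v|\leq \tfrac12\min_i|p_i|\}$, and then lower-bound $\int|q(v)|^{-s}\rho(v)\,dv$ by integration over a neighborhood of a single root $q_j$ using the explicit computation $\int_{|v-q_j|\leq\delta}|v-q_j|^{-s}\,dv = 2\delta^{1-s}/(1-s)$ (again requiring $sk<1$ after aggregation) together with a positive lower bound on $\rho$ over a fixed compact set.  The resulting estimate $Z \geq c_\rho \prod_i|p_i|^s\prod_j|q_j|^{-s}$ cancels the residual factors from the numerator expansion and yields the claimed uniform constant.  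The main obstacle is exactly this uniform-in-$(p,q)$ lower bound on $Z$: the scaling of $Z$ depends sensitively on where the roots of $p$ and $q$ sit relative to the essential support of $\rho$, and several regimes must be treated separately --- roots bounded, roots tending to infinity, roots colliding, and mixed cases --- to ensure that no polynomial factor in $|p_i|$ or $|q_j|$ is left uncompensated.  Both hypotheses $s(n+k)\leq\gamma$ and $sk<1$ are used crucially at this stage: the first to keep the numerator weights integrable against $\rho$, the second to control the local integrability of $|q|^{-s}$ even when roots coincide.
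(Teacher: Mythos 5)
The paper does not actually prove this lemma; it cites Aizenman--Molchanov, \cite[Theorem III.2]{AM}, and states ``a version tailored to our application'' with no proof. So there is no paper argument to compare against, and the proposal has to stand on its own. It does not: there is a genuine gap, which you partially acknowledge as the ``main obstacle'' but which is worse than you indicate. The claimed lower bound $Z \geq c_\rho \prod_i |p_i|^s \prod_j |q_j|^{-s}$ is simply false. Take $n=0$, $k=1$, $q(u) = u - \epsilon$ with $\epsilon \to 0$: then $\prod_j |q_j|^{-s} = \epsilon^{-s} \to \infty$, while $Z = \int |v-\epsilon|^{-s} \rho(v)\,dv$ stays bounded because the singularity is integrable ($s<1$, $\rho \in L^\infty$). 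Symmetrically, when $\min_i |p_i| \to 0$ the region $\{|v| \leq \tfrac{1}{2}\min_i|p_i|\}$ on which you assert the pointwise lower bound $\prod_i |v-p_i|^s \geq c\prod_i |p_i|^s$ shrinks to a point, and $\rho$ may carry no mass there, so the contribution to $Z$ from this region degenerates.

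There is also a structural mismatch that no refinement of the lower bound on $Z$ can repair. Your expansion $|p(u)|^s \leq \sum_{S} |u|^{s|S|} \prod_{i \notin S} |p_i|^s$ produces $2^n$ terms carrying different factors $\prod_{i\notin S}|p_i|^s$, including the term $S = \{1,\dots,n\}$ that carries no $|p_i|$ factor whatsoever. Controlling that term by a single bound of the form $Z \geq c\,\prod_i |p_i|^s \cdot (\cdots)$ would require $\int |u|^{\gamma - sk} |q(u)|^{-s}\rho(u)\,du \leq C \prod_i |p_i|^s \cdot (\cdots)$, whose right side vanishes as any $|p_i| \to 0$ while the left side does not depend on the $p_i$ at all. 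So even if a uniform lower bound on $Z$ in this product form were available, the term-by-term cancellation you describe would not go through. The comparison between the two integrals has to be carried out without ever fully decoupling the $|p|^s,\ |q|^{-s}$ factors from the weight $\rho$ (for instance, by comparing the integrands region by region on dyadic scales of $|u|$, or by directly controlling the tails of $\widetilde\rho$ relative to its bulk), and this is precisely the content of the Aizenman--Molchanov decoupling argument that the paper invokes. As written, the proposal has the right preprocessing (homogeneity reduction to monic $p,q$, subadditivity expansion, local integrability from $sk<1$) but the central comparison is unproved and the announced route to it does not work.
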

%
%
%

In the application of the upper decoupling estimate in \eqref{eq:UpperDecoup0}, we have  $n=1$ and $k=2$
and we want $\gamma - s(n+k) = s$ so that $\gamma = 4s$ and $s < \frac{1}{k} = \frac{1}{2}$. This requires that the probability density have a finite $\gamma$-moment for some $0< \gamma < 2$.  
 For Gaussian random variables, we may take $\gamma$ to be any positive real number.  Under these conditions, 
the upper decoupling bound \eqref{eq:UpperDecoup0} becomes:
\beq\label{eq:UpperDecoup2}
 \int_\R |u|^{s}
\frac{|p(u)|^s}{|q(u)|^s} 
\rho(u)\, du 
\leq C\int_\R \frac{|p(u)|^s}{|q(u)|^s} 
 \rho(u)\,du ,  
\eeq
which is required in order to obtain the upper bound in \eqref{eq:UpperDecoup1}. 

\end{appendices}



\begin{thebibliography} {[10]}
\frenchspacing \baselineskip=12 pt plus 1pt minus 1pt


\bibitem{AM} M.\ Aizenman, S.\ Molchanov: \emph{ Localization at large disorder and at extreme energies: an elementary derivation}, Commun.\  Math.\  Phys., \textbf{157}(2):245-–278, 1993. 

\bibitem{apsss} M.\  Aizenman, R.\  Peled, J.\ Schenker, M.\ Shamis, S.\ Sodin:  {\it Matrix regularizing effects of Gaussian perturbations},  Commun.\ Contemp.\ Math. \textbf{19} (2017), no.\ 3, 1750028; arXiv:1509.01799.
		
\bibitem{asfh} M.\ Aizeman, J.\ Schenker, R.\ Friedrichs, and D.\ Hundertmark: \emph{ Finite-volume fractional-moment criteria for Anderson localization}, Commun.\ Math.\  Phys.,\textbf{ 224}(1) (2001), 219–-253. 

\bibitem{bmp} L.\ V.\  Bogachev, S.\ A.\  Molchanov, L.\ A.\  Pastur: \emph{On the level density of random band matrices}, Mat.\ Zametki \textbf{50} (1991), no.\ 6, 31–42, 157; translation in Math.\ Notes \textbf{50} (1991), no.\ 5-6, 1232–1242 (1992).

\bibitem{bourgade} P.\ Bourgade:  {\it Random band matrices}, Proceedings of the International Congress of Mathematicians—Rio de Janeiro 2018. Vol. IV. Invited lectures, 2759–2784, World Sci. Publ., Hackensack, NJ, 2018; arXiv:1807.03031.
	
\bibitem{byy1} P.\  Bourgade, H.-T.\  Yau,  J.\  Yin:\emph{ Random band matrices in the delocalized phase, I: quantum unique ergodicity and universality} (2018), to appear in Comm.\ Pure Appl.\ Math., arXiv:1807.01559 

\bibitem{byyy1} P.\ Bourgade, F.\ Yang, H.-T.\ Yau,  J.\ Yin:  \emph{Random band matrices in the delocalized phase, II: generalized resolvent estimates.} J.\  Stat.\ Phys.\textbf{ 174 } (2019), no.\ 6, 1189-–1221. 

\bibitem{brodie1} Benjamin Brodie: \emph{ Eigenvalue Statistics and Localization for Random Band Matrices with Fixed Width and Wegner Orbital Model} (2020). Theses and Dissertations--Mathematics.\textbf{ 69}. 
$https://uknowledge.uky.edu/math_etds/69 $

\bibitem{cgk1} J.-M. Combes, F.\ Germinet, A.\ Klein: \emph{Generalized eigenvalue-counting estimates for the Anderson model}, J.\ Stat.\ Phys. {\bf 135} (2009), no.\ 2, 201-–216

\bibitem{dvj}  D.\  J.\  Daley, D.\ Vere--Jones:  \emph{An introduction to the theory of point processes. Probability and its applications},  Springer, New York, 2nd  edition, 2003. 

\bibitem{dl} M.\ Disertori, M.\ Lager: \emph{Density of states for random band matrices in two dimensions}, Ann.\ 
Henri Poincar\'e \textbf{18 }(2017), 2367--2413.

\bibitem{dps} M.\ Disertori, H.\ Pinson, T.\ Spencer: \emph{ Density of states for random band matrices}, Commun.\  Math.\ Phys. \textbf{ 232}(1), 83–-124 (2002). 

\bibitem{dkm} D.\ Dolai, M.\ Krishna, A.\ Mallick:  {\it Regularity of the density of states of Random Schr\"odinger Operators}, 
Commun.\  Math.\ Phys. \textbf{378} (2020), 299--328. 

\bibitem{graf} G.\ M.\ Graf:  \emph{Anderson localization and the space-time characteristic of continuum states}, Journal of Statistical Physics, \textbf{75}(1):337–-346, 1994.


\bibitem{mpk} S.\ A.\ Molchanov, L.\  A.\ Pastur,  A.\ M.\  Khorunzhi\u{i}: \emph{Distribution of the eigenvalues of random band matrices in the limit of their infinite order} (Russian) Teoret.\ Mat.\ Fiz. \textbf{90} (1992), no.\ 2, 163–178; translation in Theoret.\ and Math.\ Phys. \textbf{90} (1992), no.\ 2, 108–118.

\bibitem{minami} N.\ Minami:  {\it Local fluctuation of the spectrum of a multidimensional Anderson tight binding model.}, Commun.\ Math.\ Phys. {\bf 177} 709-725 (1996).

\bibitem{nakano1} F.\ Nakano, K.\ D.\ Trinh: \emph{Gaussian Beta Ensembles at High Temperature: Eigenvalue
Fluctuations and Bulk Statistics}, Journal of Statistical Physics, https://doi.org/10.1007/s10955-018-2131-9 (2018).

\bibitem{psss} R.\ Peled, J.\ Schenker, M.\ Shamis, S.\ Sodin: \emph{On the Wegner orbital model}, Int.\ Math.\ Res.\ Not. IMRN \textbf{2019}, no.\ 4, 1030–1058; arXiv:1608.02922.

\bibitem{schenker} J.\ Schenker: {\it Eigenvector Localization for Random Band Matrices with Power Law Band Width},  Comm.\ Math.\ Phys. {\bf 290} 1065-1097 (2009).

\bibitem{yy1} F.\  Yang, J.\  Yin: \emph{Random band matrices in the delocalized phase, III: averaging ﬂuctuations }(2018). arXiv:1807.02447
123

\end{thebibliography}
\end{document}